\documentclass[11pt]{article}
\usepackage[margin=1.1in]{geometry}
\usepackage{amsmath,amssymb,amsthm,bm}
\usepackage{mathtools}
\usepackage{graphicx}
\usepackage{microtype}
\usepackage[hidelinks]{hyperref}
\usepackage[round,authoryear]{natbib}
\usepackage{enumitem}
\usepackage{setspace}
\theoremstyle{plain}
\newtheorem{theorem}{Theorem}[section]
\newtheorem{proposition}[theorem]{Proposition}
\newtheorem{corollary}[theorem]{Corollary}

\theoremstyle{definition}
\newtheorem{definition}[theorem]{Definition}
\newtheorem{assumption}{Assumption}
\newtheorem{observation}[theorem]{Observation}
\theoremstyle{remark}

\makeatletter
\renewcommand\part{\@startsection{part}{0}{\z@}{0pt}{-1sp}{\normalfont\normalsize}*}
\renewcommand\thepart{\@Roman\c@part}
\let\@oldpart\part
\renewcommand\part[1]{\refstepcounter{part}\ignorespaces}
\makeatother

\newcommand{\Bten}{\mathbf{B}}
\newcommand{\Amat}{\mathbf{A}}
\newcommand{\Bp}{\mathbf{B}_p}
\newcommand{\Wmat}{\mathbf{W}}
\newcommand{\bvec}{\mathbf{b}}
\newcommand{\xvec}{\mathbf{x}}
\newcommand{\Mmat}{\mathbf{M}}
\newcommand{\Emat}{\mathbf{E}}
\newcommand{\Pmat}{\mathbf{P}}
\newcommand{\Cmat}{\mathbf{C}}
\newcommand{\hvec}{\mathbf{h}}
\newcommand{\cvec}{\mathbf{c}}
\newcommand{\Top}{\mathbf{T}}
\newcommand{\Kop}{\mathbf{K}}

\newcommand{\Lop}{\mathbf{L}}
\newcommand{\Dmat}{\mathbf{D}}
\newcommand{\Imat}{\mathbf{I}}
\newcommand{\one}{\mathbf{1}}
\newcommand{\bias}{\bm{\beta}}
\newcommand{\real}{\mathbb{R}}
\newcommand{\spr}{\rho}
\newcommand{\Kreiss}{\mathcal{K}}
\DeclareMathOperator{\tr}{tr}
\DeclareMathOperator{\diag}{diag}
\DeclareMathOperator{\vecop}{vec}

\allowdisplaybreaks

\title{\textbf{Epistemic Networks, Collective Misperception,\\ and the Manipulation of Social Knowledge}}
\author{Mihnea C. Moldoveanu \quad\ Joel A. C. Baum\\[3pt]
\normalsize Rotman School of Management, University of Toronto\\
\normalsize Toronto, Ontario, Canada}
\date{}

\begin{document}
\maketitle

\begin{abstract}
\noindent We investigate the structure of interactive belief: the epistemic state in which agents hold, revise, and act on their models of one another's minds. What a group believes is not what its members believe, nor the average of what they believe. It depends on what each member takes the others to believe, and on what each takes the others to believe about still others. Ordinary models of opinion and social influence do not represent this recursion. We posit that the proper unit of social-epistemic analysis is not the individual belief but the tensor of mutual attribution, the array that records, for each proposition, what every agent takes every agent to believe. We separate three layers of this object: what is privately held, what is publicly expressed, and what is perceived in others. Belief evolves by contraction of the tensor against a signed matrix of epistemic influence.

From this one construction we derive five results. First, collective misperception decomposes exactly into a component that observation dissolves and a component that observation cannot touch; pluralistic ignorance is the second component, and social conformity amplifies it without generating it. Second, the stable forms of collective epistemic life---consensus, polarization, entrenched misperception, and instability---are spectral regimes of a single operator. Third, higher-order belief reduces to walks in the epistemic network, under a stated assumption of cognitive consistency whose boundary we mark precisely and whose conclusions we prove invariant to it. Fourth, the manipulation of social knowledge is organized by the epistemic layer an attack targets; this yields a computable ranking of manipulative strategies and a sharp division between distortions that heal and distortions that sediment into architecture. Fifth, hierarchical epistemic structures are reactive: they amplify attacks that the spectral analysis certifies as safe. We develop each result alongside a worked, hand-computable example, and we interpret each in the vocabulary of social epistemology. The framework is offered not as the last word on interactive belief but as evidence that its structure is tractable.

\bigskip
\noindent\textbf{Keywords:} social epistemology; interactive belief; higher-order belief; common knowledge; pluralistic ignorance; opinion dynamics; epistemic networks; collective intentionality; misinformation; theory of mind
\end{abstract}

\part{Foundations}\label{part:foundations}

\section{Introduction: the epistemology of interactive belief}\label{sec:intro}

\subsection{What a group knows}

Consider a seminar room. Every student has failed to understand the argument just presented, and every student, glancing at the composed faces nearby, concludes that they alone are lost. No one asks a question. Each reads the silence as evidence that everyone else has understood, and the class proceeds on a misunderstanding that not one of its members privately holds. Nothing false has been asserted. No one has lied. Each inference from the available evidence was locally reasonable. And yet what ``the class'' believes about its own understanding is not merely mistaken; it is inverted relative to the private beliefs of its members. This is pluralistic ignorance \citep{prentice1993,bicchieri2005}, and it is only the most familiar member of a family of phenomena in which the epistemic state of a group comes apart from any aggregate of the epistemic states of its members.

The example resists the two standard reductions, and this is why it is instructive. One reduction says that what the class believes is the sum, the majority, or a weighted average of what its members believe; collective belief is then an aggregation problem, and the tools are those of judgment aggregation and social choice \citep{list2011}. But the belief the class acts on, by proceeding, is held by no one, and it is not the average of anything the members privately think. It is a product of what each takes the others to think. The other reduction says that the group is a believer in its own right, a plural subject with intentional states not reducible to those of its members \citep{gilbert1989,tuomela2013,bratman2014}. That view is right that group belief is not mere summation, but it does not say \emph{how} the group-level state is constituted from the epistemic relations among members. The seminar's false consensus is not a brute emergent fact. It has a mechanism, and the mechanism runs through each student's model of the others.

The object we need therefore sits between the individual and the aggregate. It is the structure of \emph{mutual attribution}: what each member believes, what each expresses, and---decisively---what each takes each other member to believe. We will argue that collective epistemic states are configurations of this structure, and that their characteristic pathologies are configurations the structure makes stable. To study collective belief is to study the dynamics of interactive belief.

\subsection{Interactive belief and its recursion}

By \emph{interactive belief} we mean belief whose content includes the beliefs of others: $i$'s belief about $j$'s belief, $i$'s belief about $j$'s belief about $k$'s belief, and so on up the hierarchy. The centrality of this recursion to social life is not a new observation. It is the engine of \citet{lewis1969}'s analysis of convention, in which a regularity is a convention only if it is common knowledge that it is followed; of \citet{schelling1960}'s focal points, which coordinate expectations about expectations; of \citet{aumann1976}'s formalization of common knowledge and the agreement theorem that flows from it; and of the entire apparatus of epistemic game theory \citep{harsanyi1967,mertens1985,brandenburger2014}, in which a player's rational action depends on a hierarchy of beliefs about others' beliefs that in principle ascends without end.

Philosophically, the recursion raises a foundational question that this paper must confront directly: what is the \emph{unit} of interactive epistemic analysis, and how deep does the hierarchy go? The interactive-epistemology tradition answers, in effect, ``all the way'': a type, in the sense of \citet{harsanyi1967}, encodes an infinite hierarchy of beliefs over beliefs, and the universal type space of \citet{mertens1985} is the completed object of that ascent. This is the correct answer for the foundations of rational choice under incomplete information, but it purchases generality at the price of tractability, and---more importantly for our purposes---it does not correspond to the finite, structured, and boundedly recursive way in which situated social agents actually represent one another. Real agents do not carry infinite belief hierarchies; they carry finite models, updated from observation, that let them get on with coordinating, trusting, deferring, and dissenting. The epistemic-network or ``epinet'' program \citep{mb2011,mb2014} was built precisely to study these finite structures of mutual attribution as the empirical unit of social-epistemic analysis. The present paper gives that program its dynamical and adversarial completion.

Our strategy is to represent the interactive-belief state not by an infinite hierarchy but by a finite array from which higher-order attributions are \emph{generated by composition}, under an explicitly stated assumption of cognitive consistency (Section~\ref{sec:higher}). This is a substantive modeling commitment, and we will be careful to mark exactly what it captures---the finite arithmetic of who-takes-whom-to-believe-what---and what it sets aside---the full modal structure of interactive knowledge, including the distinction between mutual and common knowledge on which some of the deepest coordination results turn. Within its stated scope, the payoff is that the entire recursion becomes linear algebra, and the machinery of spectra, invariant subspaces, and resolvents can be brought to bear on questions that have previously been treated only informally or only in the infinitary setting.

\subsection{Three literatures, one structure}

Three largely separate literatures have independently reached the same conclusion: collective outcomes are driven less by what people believe than by what they believe others believe. Their convergence is part of what motivates a unified treatment.

The first is the economics of speculative bubbles. In the modern theory, an asset is held not because the holder values it at the price but because she believes others believe still others will buy it later; the bubble is sustained by a tower of beliefs about beliefs, and can persist even when it is common knowledge that the asset is overpriced \citep{simsek2021,awaya2022,liao2022}. What circulates through such a market is not a distribution of private valuations but a structure of higher-order attributions. The greater fool is a second-order epistemic object.

The second is the physics and network science of opinion dynamics---the study of how consensus, polarization, and echo chambers emerge from local rules of social influence \citep{degroot1974,friedkin1990,hegselmann2002,acemoglu2011,baumann2020,wang2020,leonard2021}. This literature has produced a rich dynamical vocabulary, but it almost universally assigns each agent a single opinion and folds the perception of others' opinions into the update rule. The represented state is first-order; the second-order structure---what agents take others to think, as distinct from what others think---is not carried as a variable that can evolve, lag, or be manipulated on its own.

The third is the social psychology and economics of misperception: pluralistic ignorance, preference falsification, and the systematic gap between private attitudes and their public expression \citep{prentice1993,kuran1995,bicchieri2005,bursztyn2022,sparkman2022}. Here the divergence between what is believed and what is believed to be believed is the central object, documented at scale---a majority misreading the majority---but the models tend to be issue-specific, static, or binary, and they are not connected to the higher-order recursion that the bubble literature shows to be decisive, nor cast as tractable dynamical systems.

Each literature holds one part of the structure and leaves another dark. Markets supply the higher-order recursion but no dynamics of misperception. Opinion dynamics supplies the dynamics but no separation of private from perceived belief. The misperception literature supplies that separation but neither the recursion nor a tractable operator theory. We design the framework of this paper to hold all three at once, in one object, with one analytical engine. We turn to constructing it.

\subsection{Plan of the paper}

Part~\ref{part:foundations} builds the framework: the attribution tensor and the three-layer ontology of belief (Section~\ref{sec:tensor}), the contraction dynamics and the exact decomposition of collective misperception (Section~\ref{sec:contraction}), and the treatment of higher-order belief as composition, with its philosophical boundary made explicit (Section~\ref{sec:higher}). Part~\ref{part:regimes} develops the four stable forms of collective epistemic life as spectral regimes of one operator, each with a fully worked example, culminating in the closure theorem that is the formal heart of the paper and the ``amplifies-but-cannot-generate'' law for pluralistic ignorance (Sections~\ref{sec:consensus}--\ref{sec:synthesis}). Part~\ref{part:coupling} activates the proposition dimension of the tensor, giving cross-proposition inference and its destabilizing power. Part~\ref{part:adversarial} turns the theory to the manipulation of social knowledge, developing the layer-targeted taxonomy of epistemic attack, the cost-of-persuasion crossover, the sedimentation of distortion into architecture, and the perversity of misdirected transparency. Part~\ref{part:physics} establishes the transient fragility of hierarchical epistemic structures. Part~\ref{part:synthesis} draws the philosophical implications together and marks the limits of the linear theory. Full proofs are given in the main text throughout; only the four classical regime results, which we restate rather than originate, are proved in an appendix.

\section{The attribution tensor and the three layers of belief}\label{sec:tensor}

\subsection{Why a tensor}

The claim that the unit of social-epistemic analysis is the structure of mutual attribution can be made precise. Fix a finite population of agents $A=\{1,\dots,N\}$ and a finite set of propositions $Q=\{1,\dots,K\}$ at issue among them. A first-order description assigns to each agent and proposition a degree of belief: an $N\times K$ array. This is the object of opinion dynamics, and it cannot express the interactive content we have argued is essential, because it has no place to record what agent $i$ takes agent $j$ to believe. To record that, we need a third index. The minimal object that simultaneously specifies \emph{who is doing the attributing}, \emph{who is being modeled}, and \emph{which proposition is at issue} is a third-order tensor:
\begin{equation}
\Bten \in \real^{N\times N\times K}, \qquad B_{ijp} = \text{agent } i\text{'s estimate of agent } j\text{'s degree of belief in proposition } p.
\end{equation}
The three indices are not interchangeable. The first is the \emph{observer} or attributing agent; the second is the \emph{target} being modeled; the third is the proposition. That the first two indices range over the same set $A$ is what makes the object genuinely reflexive: agents model agents, including themselves, and the diagonal in the first two indices will turn out to carry the first-order beliefs as a special case.

It is worth pausing on the philosophical force of this representational choice. In treating $B_{ijp}$ as the primitive, we are committing to the view that the fundamental social-epistemic fact is not ``$j$ believes $p$'' but ``$i$ takes $j$ to believe $p$''---an attribution, indexed by an attributor. The first-order fact ``$j$ believes $p$'' is recovered as the self-attribution $B_{jjp}$, the case in which the attributor and target coincide. This is not a denial that there are first-order beliefs; it is a claim about their place in the social-epistemic order. What moves through a group, what agents act on, what can be manipulated, and what can come apart from reality, is the attribution, not the belief simpliciter. The tensor makes the attribution the citizen of first class and the belief a derived, diagonal special case---an inversion of the usual priority that we take to be the correct one for social epistemology.

Fixing a proposition $p$ collapses the tensor to a single \emph{slice}, an $N\times N$ matrix
\begin{equation}
\Bp = [B_{ijp}] \in \real^{N\times N},
\end{equation}
which records, for that one proposition, the entire structure of who takes whom to believe it, and to what degree. Almost all of the dynamics act one slice at a time, so for most of the paper we fix a proposition and suppress the index $p$, restoring it only in Part~\ref{part:coupling}, where inference \emph{across} propositions is exactly what makes the third mode of the tensor come alive. Until then, the working object is a single square matrix $\Bten = [B_{ij}]$ of mutual attributions on a fixed proposition, and the reader may keep the seminar room in mind: $B_{ij}$ is how confident student $i$ takes student $j$ to be that the argument was understood.

A methodological clarification is in order here, since the tensor framing could otherwise promise more than the mathematics delivers. The third-order tensor is the right object for the \emph{ontological} claim of this paper---that the unit of social-epistemic analysis carries three indices, observer, target, and proposition---and it is genuine tensorial bookkeeping. But we do not want to overstate its \emph{mathematical} role. For the whole of Parts~\ref{part:regimes}, and until inferential coupling is switched on in Part~\ref{part:coupling}, the proposition mode is inert: the theory is proposition-separable, the tensor is a stack of independent $N\times N$ slices, and every result is a theorem of linear-operator theory on a single slice. The genuinely multilinear content is confined to Part~\ref{part:coupling}, and even there it reduces, on vectorization, to a Kronecker product $\Lop\otimes\Wmat$---that is, back to matrix analysis, via the product-spectrum law. We therefore make no claim to novel multilinear-algebraic machinery; the contribution is the identification of the three-index attribution tensor as the correct state object for interactive belief, together with the operator theory that governs its evolution. Where the mode-3 structure does real work (Part~\ref{part:coupling}), we exploit it explicitly; elsewhere, ``tensor'' should be read as the natural indexing of a family of attribution matrices, not as a promise of higher multilinearity. A fuller exploitation of the tensor structure---for instance, low-rank (Tucker or canonical-polyadic) factorizations of $\Bten$ as a model of \emph{bounded} theory of mind, with the rank as a cognitive-capacity parameter---is a natural direction the present paper leaves open, and we return to it in the concluding discussion.

\subsection{Three layers: private, expressed, perceived}

The tensor's diagonal and off-diagonal entries play distinct epistemic roles, and drawing the distinction carefully yields a three-layer ontology of belief that is the conceptual core of the framework.

\begin{definition}[Private belief]\label{def:private}
The \emph{private belief} of agent $i$ on the fixed proposition is the diagonal entry $b_i := B_{ii}$, agent $i$'s own degree of belief---what $i$ actually holds.
\end{definition}

The diagonal of the slice is thus the vector $\bvec=(b_1,\dots,b_N)$ of private beliefs, the object a first-order model would track in isolation. The off-diagonal entries $B_{ij}$ with $i\neq j$ are $i$'s models of the others: the theory-of-mind content, the representations of other minds that situate each agent in a perceived social-epistemic environment.

But private belief and its attribution by others do not exhaust the layers we need. Between what an agent believes and what others perceive of it stands what the agent \emph{expresses}---and expression need not match conviction. This is the phenomenon of preference falsification \citep{kuran1995}, of the public/private divergence that pluralistic ignorance requires, and it demands its own state variable.

\begin{definition}[Expressed belief and the concealment gap]\label{def:expressed}
The \emph{expressed belief} $x_i$ of agent $i$ is what $i$ publicly professes on the proposition. The \emph{concealment gap} is the difference
\begin{equation}
c_i := x_i - b_i
\end{equation}
between what $i$ expresses and what $i$ privately holds. A population in which $c_i\neq 0$ for some $i$ is one in which expression has come apart from conviction.
\end{definition}

We thus have three distinct layers, and it is essential to the entire theory that they are kept apart:
\begin{enumerate}[label=(\roman*),leftmargin=2.2em,itemsep=1pt]
\item \emph{Private belief} $b_i$: what $i$ holds. The diagonal of the tensor.
\item \emph{Expressed belief} $x_i$: what $i$ professes. A separate state, related to $b_i$ by the concealment gap $c_i$.
\item \emph{Perceived belief} $B_{ij}$: what $i$ takes $j$ to hold. The off-diagonal tensor entries.
\end{enumerate}
The reduction that collapses these three into one---identifying belief, expression, and perception---is exactly the move that makes collective misperception invisible, because it assumes away the gaps ($x_i - b_i$ and $B_{ij}-b_j$) in which misperception lives. Keeping them distinct is what allows the theory to represent, and then to explain, the phenomena with which we began.

\subsection{Misperception as a first-class object}

With the layers separated, the various ways in which social belief can be in error become distinct, nameable quantities rather than undifferentiated noise. The perceptual error---the gap between what $i$ takes $j$ to believe and what $j$ actually believes---is:
\begin{definition}[Epistemic error]\label{def:error}
The \emph{epistemic error} is $E_{ij}:=B_{ij}-b_j$, the discrepancy between $i$'s model of $j$ and $j$'s private belief. The error is zero exactly when $i$ perceives $j$ veridically.
\end{definition}

This decomposition of error will prove central, because the two ways it can be nonzero correspond to two entirely different epistemic pathologies with different dynamics and different remedies. The error $E_{ij}=B_{ij}-b_j$ can be rewritten, by inserting $j$'s expression, as
\begin{equation}
E_{ij} = \underbrace{(B_{ij}-x_j)}_{\text{perception lag}} + \underbrace{(x_j - b_j)}_{\text{$j$'s concealment}} = (B_{ij}-x_j) + c_j.
\label{eq:error-split}
\end{equation}
The first term is $i$'s failure to perceive $j$'s expression accurately---a lag or noise in observation, the kind of error that more or better observation corrects. The second is $j$'s own concealment---the gap between what $j$ says and what $j$ holds, which no amount of accurate observation of $j$'s expression can penetrate, because the expression itself is what conceals. That these two sources of collective misperception are formally separable, and behave completely differently under the dynamics, is one of the framework's first substantive yields; we prove it precisely in the next section. For now the point is conceptual: once belief, expression, and perception are distinguished, misperception ceases to be a single undifferentiated failure and becomes a structured object with parts that can be told apart and treated separately.

\subsection{A note on degrees of belief}

We take the entries $B_{ijp}$ to be real numbers, representing degrees of belief on a common scale (which may be centered so that zero denotes a neutral or reference credence and the sign encodes the direction of opinion). This is the standard idealization of the opinion-dynamics literature, and it should be understood as such: a tractable proxy for a graded doxastic attitude, not a claim that credences are literally real-valued or that interpersonal comparison is unproblematic. Nothing in the qualitative theory---the decomposition of error, the classification of regimes, the manipulation results---depends on the cardinal structure in an essential way; what matters is that beliefs, expressions, and perceptions can each be higher or lower, can differ from one another, and can move. Where the cardinal scale does matter, as in the unbounded amplification results of Sections~\ref{sec:synthesis} and~\ref{sec:persuader}, we flag explicitly that the linear idealization is doing work and that a bounded-belief treatment would replace divergence with saturation. With this understood, we proceed to the dynamics.

\section{The dynamics of interactive belief}\label{sec:contraction}

\subsection{Epistemic influence and the belief update}

Agents revise their beliefs in light of what they take others to believe. To model this we need a structure of \emph{epistemic influence}: for each ordered pair $(i,j)$, a weight $W_{ij}$ recording how much agent $i$'s revision responds to its model of agent $j$. We collect these into the \emph{agreement matrix} $\Wmat\in\real^{N\times N}$. A positive weight $W_{ij}>0$ means $i$ moves toward the position it attributes to $j$---deference, trust, homophilous influence. A negative weight $W_{ij}<0$ means $i$ moves \emph{away} from that position---antagonism, distrust, the ``I believe the opposite of whatever they think'' relation that, as we will see, is the seed of polarization. That epistemic influence can be signed is essential and is one of the ways this framework departs from the stochastic-matrix orthodoxy of classical opinion dynamics.

The belief update is then the natural one: each agent sets its next private belief to the influence-weighted aggregate of the beliefs it attributes to others,
\begin{equation}
b_i^{t+1} = \sum_{j=1}^{N} W_{ij}\, B_{ij}^{t}.
\label{eq:update}
\end{equation}
Read carefully, \eqref{eq:update} already embodies the paper's central commitment. The agent updates on $B_{ij}^t$---its \emph{model} of $j$---not on $b_j^t$, $j$'s actual belief. What propagates through the network is attribution, not conviction. In the special case where all perception is veridical ($B_{ij}=b_j$ for all $i,j$), the update collapses to the classical $b_i^{t+1}=\sum_j W_{ij}b_j^t$ of \citet{degroot1974}, and the entire apparatus of first-order opinion dynamics is recovered. The framework thus contains the classical theory as its veridical-perception special case, and everything distinctive about it lives in the departure from that case.

To keep beliefs bounded we impose an absolute row-sum condition on influence:
\begin{equation}
\sum_{j=1}^{N} \lvert W_{ij}\rvert \le 1 \qquad \text{for all } i,
\label{eq:rowsum}
\end{equation}
which guarantees $\lVert\Wmat\rVert_\infty\le 1$ and hence $\spr(\Wmat)\le 1$, where $\spr$ denotes spectral radius. We note in passing that the weaker signed condition $\sum_j W_{ij}\le 1$, which suffices in the nonnegative case, does \emph{not} bound the operator norm once weights may be negative, since $\lVert\Wmat\rVert_\infty=\max_i\sum_j\lvert W_{ij}\rvert$ involves absolute values; the correct hypothesis is~\eqref{eq:rowsum}, and it contains the signed structure used for polarization as a special case.

\subsection{The truth--misperception decomposition}

The first theorem of the framework makes explicit what \eqref{eq:update} is really doing. It separates the update into a part that transmits true belief through the influence network and a part driven entirely by misperception.

\begin{theorem}[Truth--misperception decomposition]\label{thm:decomp}
The belief update \eqref{eq:update} is the affine dynamical system
\begin{equation}
\bvec^{t+1} = \Wmat\,\bvec^{t} + \hvec^{t}, \qquad \hvec^{t} = (\Wmat\odot\Emat^{t})\,\one,
\label{eq:affine}
\end{equation}
where $\odot$ is the Hadamard (entrywise) product, $\one$ is the all-ones vector, and $\Emat^t=[E_{ij}^t]$ is the epistemic-error matrix of Definition~\ref{def:error}. The first term $\Wmat\bvec^t$ transmits agents' true private beliefs through the influence network; the second term $\hvec^t$ is a forcing generated entirely by misperception, and it vanishes identically when perception is veridical.
\end{theorem}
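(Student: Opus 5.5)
The plan is a direct algebraic verification, entry by entry, using only Definitions~\ref{def:private} and~\ref{def:error} and the update rule~\eqref{eq:update}. Definition~\ref{def:error} gives $E^t_{ij} = B^t_{ij} - b^t_j$. I will rearrange it as $B^t_{ij} = b^t_j + E^t_{ij}$, valid for every ordered pair $(i,j)$, including the diagonal. I will substitute this into~\eqref{eq:update} to obtain
\begin{equation*}
b_i^{t+1} = \sum_{j=1}^N W_{ij}\bigl(b_j^t + E_{ij}^t\bigr) = \sum_{j=1}^N W_{ij} b_j^t + \sum_{j=1}^N W_{ij}E_{ij}^t .
\end{equation*}

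Next I will identify each sum with the matrix expression in~\eqref{eq:affine}. The first sum is by definition the $i$-th entry of $\Wmat\bvec^t$. For the second, note that $(\Wmat\odot\Emat^t)_{ij} = W_{ij}E^t_{ij}$. Multiplying a matrix by $\one$ returns its row sums, so $\bigl((\Wmat\odot\Emat^t)\one\bigr)_i = \sum_j W_{ij}E^t_{ij} = h_i^t$. Stacking over $i$ gives $\bvec^{t+1} = \Wmat\bvec^t + \hvec^t$.

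The interpretive clauses then follow quickly. The first term depends only on the private beliefs $\bvec^t$ and on $\Wmat$. The second depends on the beliefs only through $\Emat^t$. If perception is veridical, meaning $B_{ij}=b_j$ for all $i,j$, then $\Emat^t = 0$ and hence $\hvec^t = 0$. The system then reduces to DeGroot's update, consistent with the remark after~\eqref{eq:update}.

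The one point I expect to need care is the diagonal. Since $b_i := B_{ii}$ (Definition~\ref{def:private}), we have $E_{ii}^t = 0$ automatically, so the self-term $W_{ii}B_{ii}$ is absorbed into $\Wmat\bvec^t$ and contributes nothing to $\hvec^t$. This keeps the decomposition exact rather than double-counting self-weight. I will also remark that "affine" is meant relative to $\bvec^t$ with $\hvec^t$ treated as an exogenous forcing term. Its own evolution, which requires a dynamics for the off-diagonal attributions, is not needed for this statement.
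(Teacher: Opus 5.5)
Your proposal is correct and matches the paper's proof: substitute $B_{ij}^t=b_j^t+E_{ij}^t$ into the update, split the sum into $\Wmat\bvec^t$ and the row sums of $\Wmat\odot\Emat^t$, and note that veridical perception gives $\Emat^t=\mathbf 0$. Your added remark that $E_{ii}^t=0$ by Definition~\ref{def:private} is correct but not needed, since the identity holds entrywise for every $(i,j)$ whatever the diagonal values are.
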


\begin{proof}
Substitute the definition $B_{ij}^{t}=b_j^{t}+E_{ij}^{t}$ of the epistemic error into the update~\eqref{eq:update}:
\[
b_i^{t+1} = \sum_{j} W_{ij}\bigl(b_j^{t}+E_{ij}^{t}\bigr) = \sum_{j} W_{ij}b_j^{t} + \sum_{j} W_{ij}E_{ij}^{t}.
\]
The first sum is the $i$-th component of the matrix--vector product $\Wmat\bvec^{t}$. The second is $\sum_j W_{ij}E_{ij}^t$, which is precisely the $i$-th component of $(\Wmat\odot\Emat^{t})\one$, since the Hadamard product $\Wmat\odot\Emat^t$ has $(i,j)$ entry $W_{ij}E_{ij}^t$ and right-multiplication by $\one$ sums each row. If perception is veridical then $E_{ij}^t=0$ for all $i,j$, so $\Emat^t=\mathbf 0$ and hence $\hvec^t=\mathbf 0$.
\end{proof}

The decomposition is worth dwelling on. It says that a population of interactive believers evolves \emph{as if} it were a classical DeGroot population ($\Wmat\bvec^t$) perturbed by an external forcing ($\hvec^t$)---but the forcing is not external at all. It is generated endogenously, at every instant, by the population's own collective misperception, through the Hadamard contraction of the influence matrix against the error matrix. Collective misbelief is thus not noise added to an otherwise-clean process; it is a structured driving term with an exact expression, $(\Wmat\odot\Emat^t)\one$, whose anatomy we can now dissect. The whole of the manipulation theory in Part~\ref{part:adversarial} will consist in asking who can write into this forcing term, through which layer, and at what cost.

\subsection{Closing the system: expression and perception}

Theorem~\ref{thm:decomp} evolves only the private beliefs $\bvec$, yet its forcing term depends on the full error matrix, hence on the off-diagonal perceptions $B_{ij}$ and, through them, on expressions $x_j$. The system is therefore not yet closed: we have a law of motion for private belief but not for expression or perception. Closing it requires saying how agents decide what to express, and how they update what they perceive. These two maps are where the social psychology enters, and they are what make misperception \emph{generated} rather than assumed.

\paragraph{The expression map.} What an agent professes is a compromise between conviction and the perceived climate of opinion. We model expressed belief as a convex combination of the agent's own updated private belief and the average position it perceives around it, plus a possible fixed bias:
\begin{equation}
x_i^{t+1} = (1-\alpha_i)\,b_i^{t+1} + \alpha_i\,\widehat{c}_i^{\,t} + \beta_i, \qquad
\widehat{c}_i^{\,t} = \frac{\sum_{j\neq i}\lvert W_{ij}\rvert\,B_{ij}^{t}}{\sum_{j\neq i}\lvert W_{ij}\rvert}.
\label{eq:expression}
\end{equation}
Here $\alpha_i\in[0,1]$ is agent $i$'s \emph{conformity pressure}: the weight it places on saying what it perceives others to be saying, as opposed to what it privately believes. The quantity $\widehat c_i^{\,t}$ is the \emph{perceived climate of opinion}---the influence-weighted average of the expressions $i$ attributes to others. And $\beta_i$ is a fixed \emph{expressive bias}, capturing any systematic pull on expression that is independent of both conviction and climate: social desirability, an incentive to profess a particular position, the cost of voicing one view rather than another. When $\alpha_i=0$ and $\beta_i=0$ the agent is sincere, expressing exactly its private belief; as $\alpha_i\to 1$ the agent professes only the perceived consensus, whatever it privately holds; and $\beta_i$ tilts the whole thing toward a favored position. The three terms correspond to the three classical determinants of public speech: what one thinks, what one thinks is expected, and what one is rewarded for saying.

\paragraph{The perception map.} Agents come to know others' expressions by observing them over time, imperfectly. We model perception as exponential smoothing: each agent adjusts its model of another toward that other's currently expressed belief, at a rate that captures the responsiveness of observation:
\begin{equation}
B_{ij}^{t+1} = B_{ij}^{t} + \gamma_{ij}\bigl(x_j^{t} - B_{ij}^{t}\bigr), \quad i\neq j, \qquad B_{ii}^{t+1}=b_i^{t+1}.
\label{eq:perception}
\end{equation}
The \emph{perception rate} $\gamma_{ij}\in(0,1]$ governs how quickly $i$'s model of $j$ tracks $j$'s expression: $\gamma_{ij}=1$ is instantaneous, perfect observation of expression; small $\gamma_{ij}$ is slow, laggy perception. The diagonal is pinned to private belief, $B_{ii}=b_i$, by Definition~\ref{def:private}: an agent's model of itself is simply its own belief. Crucially, perception tracks \emph{expression} $x_j$, not private belief $b_j$---agents observe what others say, not what they think---and this is exactly why the concealment component of misperception, the $c_j$ in~\eqref{eq:error-split}, cannot be corrected by better perception: no matter how accurately $i$ tracks $j$'s expression, if $j$ conceals, $i$'s model of $j$'s expression converges to $j$'s expression, which already differs from $j$'s belief.

With~\eqref{eq:update},~\eqref{eq:expression}, and~\eqref{eq:perception} in hand, the triple $(\bvec^t,\xvec^t,\Bten^t)$ evolves as a closed, autonomous dynamical system. The misperception forcing $\hvec^t$ is no longer an input but an output: it is produced, at every step, by the interaction of laggy perception and strategic expression. We can now say exactly what it is made of.

\subsection{The two channels of collective misperception}\label{sec:channels}

Substituting the error decomposition~\eqref{eq:error-split} into the forcing term of Theorem~\ref{thm:decomp} splits the driving term of collective misbelief into two channels with, as promised, entirely different characters.

\begin{corollary}[Channel decomposition of the forcing]\label{cor:channels}
The misperception forcing factors additively as
\begin{equation}
\hvec^{t} = \hvec_{\mathrm{perc}}^{t} + \hvec_{\mathrm{con}}^{t}, \qquad
\hvec_{\mathrm{perc}}^{t} = (\Wmat\odot\Pmat^{t})\one, \quad
\hvec_{\mathrm{con}}^{t} = (\Wmat\odot\Cmat^{t})\one,
\label{eq:channel-split}
\end{equation}
where $\Pmat^{t}=[B_{ij}^{t}-x_j^{t}]$ is the \emph{perception-lag matrix} (how far each agent's model of another trails that other's current expression) and $\Cmat^{t}=[c_j^{t}]$ is the \emph{concealment matrix} (how far each agent's expression trails its own belief), both with zero diagonal.
\end{corollary}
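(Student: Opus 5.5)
The plan is to substitute the entrywise error split~\eqref{eq:error-split} into the forcing term of Theorem~\ref{thm:decomp}, and then use the bilinearity of the Hadamard product together with the linearity of right-multiplication by $\one$. Concretely, for every off-diagonal pair $i\neq j$ at time $t$ I will write
\[
E_{ij}^t = (B_{ij}^t - x_j^t) + (x_j^t - b_j^t) = P_{ij}^t + C_{ij}^t .
\]
This is just the insertion of $x_j^t$ already carried out in~\eqref{eq:error-split}, read at each time step. Because the identity holds entrywise off the diagonal, it gives the matrix identity $\Emat^t = \Pmat^t + \Cmat^t$, provided the diagonals also agree.

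The only step needing care is the diagonal, and I expect it to be the main obstacle. The matrices $\Pmat^t$ and $\Cmat^t$ are declared to have zero diagonal. Read literally, however, the formulas would give $B_{ii}^t - x_i^t = b_i^t - x_i^t = -c_i^t$ and $c_i^t$, which cancel in the sum but are not individually zero. So I will first show that $E_{ii}^t = 0$ for all $t$. This follows from the perception map~\eqref{eq:perception}, which pins $B_{ii}^t = b_i^t$, combined with Definition~\ref{def:error}, which gives $E_{ii}^t = B_{ii}^t - b_i^t = 0$. Setting both diagonals to zero by convention is therefore consistent: the diagonal entries of $\Pmat^t + \Cmat^t$ are $0 = E_{ii}^t$. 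A further way to see this is that whatever the diagonal terms are, they sum to $E_{ii}^t = 0$, so assigning them zero changes neither side of the identity.

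With $\Emat^t = \Pmat^t + \Cmat^t$ established entrywise, distributivity of the Hadamard product gives $\Wmat\odot\Emat^t = \Wmat\odot\Pmat^t + \Wmat\odot\Cmat^t$. Right-multiplying by $\one$ and applying~\eqref{eq:affine} then yields $\hvec^t = \hvec^t_{\mathrm{perc}} + \hvec^t_{\mathrm{con}}$. I will close with a remark that each row sum is $\sum_{j\neq i} W_{ij}(B_{ij}^t - x_j^t)$ and $\sum_{j\neq i} W_{ij} c_j^t$ respectively. This makes visible that the concealment channel depends only on the $c_j$, and not on perception.
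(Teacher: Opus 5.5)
Your proposal is correct and is essentially the paper's argument: substitute the split~\eqref{eq:error-split} to obtain $\Emat^t=\Pmat^t+\Cmat^t$, then use linearity of $(\Wmat\odot\,\cdot\,)\one$ in its second argument. Your explicit check that $E_{ii}^t=B_{ii}^t-b_i^t=0$ (by the diagonal pinning, which at $t=0$ is Definition~\ref{def:private} itself) makes precise what the paper compresses into the phrase ``as matrices with zero diagonal.''
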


\begin{proof}
The Hadamard contraction $(\Wmat\odot\,\cdot\,)\one$ is linear in its second argument. By~\eqref{eq:error-split}, off the diagonal $E_{ij}^t=(B_{ij}^t-x_j^t)+c_j^t$, i.e.\ $\Emat^t=\Pmat^t+\Cmat^t$ as matrices with zero diagonal. Linearity gives $(\Wmat\odot(\Pmat^t+\Cmat^t))\one=(\Wmat\odot\Pmat^t)\one+(\Wmat\odot\Cmat^t)\one$.
\end{proof}

The two channels are the two ways a population can be collectively wrong about itself, and their difference is the difference between an error that self-corrects and an error that does not. The \emph{perception-lag channel} $\hvec_{\mathrm{perc}}$ is driven by the gap between what agents perceive and what is currently expressed; it is the error of imperfect observation. As we will prove, under any stationary pattern of expression this channel decays to zero: perception catches up to expression, and the lag closes. It is a transient. The \emph{concealment channel} $\hvec_{\mathrm{con}}$ is driven by the gap between expression and belief---by the fact that agents are not saying what they think. This channel does not close through observation, because the thing observed (expression) is itself the locus of the distortion. It persists as long as agents conceal.

We can now state, still informally, the mechanism of pluralistic ignorance that the framework delivers, and which the remainder of Part~\ref{part:regimes} makes rigorous: \emph{collective self-deception is what remains in the forcing term after the perception lag has closed}. It is the residue of the concealment channel. A group settles into pluralistic ignorance not because its members cannot see one another---they can, and they do---but because what they see, expression, has been detached from what is held, belief, by the pressure to conform and the bias in what it is safe or rewarding to say. This is why, as we will show, more observation does not cure pluralistic ignorance, and can deepen it: it closes the one channel that was already going to close, while leaving untouched, or even feeding, the channel that actually sustains the distortion. But to establish these claims we must show that the closed system converges at all, and characterize what it converges to. That is the work of Part~\ref{part:regimes}.

\section{Higher-order belief as composition, and its philosophical boundary}\label{sec:higher}

\subsection{The recursion made finite}

We have insisted that interactive belief is recursive: $i$'s belief about $j$'s belief about $k$'s belief, and onward. The attribution tensor as introduced carries only the first-order attributions $B_{ij}$---what $i$ takes $j$ to believe. Where, in this apparatus, is the higher-order content? The answer is the technical and philosophical fulcrum of the framework, and it must be handled with care, because it is here that a tractable social epistemology parts company with the infinitary hierarchies of interactive epistemology.

A naive representation of higher-order belief would add indices without end: a fourth-order tensor for ``$i$'s model of $j$'s model of $k$,'' a fifth for the next level, and so on, reproducing inside the model the very infinite ascent that makes the universal type space infinite-dimensional. This is faithful but useless; it does not correspond to how finite agents represent one another, and it forecloses analysis. The alternative we adopt is to \emph{generate} higher-order attributions by composing the first-order slice with itself. Consider the square of the attribution matrix on a fixed proposition:
\begin{equation}
(\Bten^{2})_{ij} = \sum_{m=1}^{N} B_{im} B_{mj}.
\label{eq:square}
\end{equation}
The summand $B_{im}B_{mj}$ is the product of ``$i$'s model of $m$'' and ``$m$'s belief as $i$ would carry it''---read as ``$i$'s model of $m$'s model of $j$,'' weighted and summed over the intermediary $m$. The matrix square thus \emph{computes} second-order attribution from first-order data, and the general pattern is a theorem about walks.

\begin{theorem}[Path aggregation]\label{thm:path}
For any integer $L\ge 1$, the $L$-th matrix power of the attribution slice resolves into a sum over directed walks of length $L$ in the epistemic network:
\begin{equation}
(\Bten^{L})_{ij} = \sum_{i_1,\dots,i_{L-1}} B_{i\,i_1}\, B_{i_1 i_2}\cdots B_{i_{L-1}\,j},
\label{eq:paths}
\end{equation}
each term being the product of attribution weights along the walk $i\to i_1\to\cdots\to i_{L-1}\to j$, interpreted as the nested attribution ``$i$'s model of $i_1$'s model of $\cdots$ of $j$'s belief.''
\end{theorem}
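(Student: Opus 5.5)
The plan is induction on $L$, using only the definition of matrix multiplication on the single slice $\Bten=[B_{ij}]$. The interpretive clause about nested attribution is not a separate mathematical claim. It is a reading of each summand, and it relies on the cognitive-consistency assumption this section goes on to state. So the formal content to prove is the identity~\eqref{eq:paths}, together with the observation that its index tuples $(i_1,\dots,i_{L-1})\in A^{L-1}$ correspond bijectively to directed walks $i\to i_1\to\cdots\to i_{L-1}\to j$ of length $L$ in the complete directed graph on $A$, self-loops included.

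\emph{Base case.} For $L=1$ the sum over intermediate indices is empty. The right-hand side is then the single term $B_{ij}$, which is $(\Bten^1)_{ij}$, and the only walk of length one from $i$ to $j$ is the edge $i\to j$. For $L=2$ the identity is exactly~\eqref{eq:square}, which provides a sanity check.

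\emph{Inductive step.} Assume~\eqref{eq:paths} holds for some $L\ge1$. Write $\Bten^{L+1}=\Bten^{L}\Bten$, so that $(\Bten^{L+1})_{ij}=\sum_{m}(\Bten^{L})_{im}B_{mj}$. Substituting the inductive hypothesis for $(\Bten^{L})_{im}$ gives a finite double sum. Distributivity then produces $\sum_{i_1,\dots,i_{L-1},m}B_{ii_1}\cdots B_{i_{L-1}m}B_{mj}$, and renaming $m=:i_L$ gives the formula for $L+1$. In walk language, every walk of length $L+1$ from $i$ to $j$ splits uniquely into a walk of length $L$ from $i$ to some $m$ followed by the edge $m\to j$. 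This confirms that each walk is counted exactly once and that its weight is the product of edge weights along it.

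There is no real obstacle. The only care needed is bookkeeping. First, the intermediate indices range over all of $A$, including repeats and the endpoints, which is why the statement says \emph{walks} and not paths. Zero entries of $\Bten$ then simply kill the terms for nonexistent edges, so the sum can equally be read as running over walks in the support graph of $\Bten$. Second, the entries may be signed, so nothing uses positivity; the argument is purely algebraic over $\real$. I would close by noting that the interpretation of each summand as ``$i$'s model of $i_1$'s model of $\cdots$ of $j$'s belief'' is an identification imposed by the modeling assumption, not something derived in the proof.
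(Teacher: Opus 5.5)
Your proof is correct and matches the paper's own argument: induction on $L$, expanding $(\Bten^{L+1})_{ij}=\sum_m(\Bten^{L})_{im}B_{mj}$, substituting the inductive hypothesis, and relabeling the final intermediary as $i_L$. Your side remarks are also accurate and consistent with the paper: that the sum runs over walks rather than paths, that signed entries pose no problem, and that the nested-attribution reading is an interpretation tied to Assumption~\ref{ass:composition} rather than something the proof derives.
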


\begin{proof}
By induction on $L$. For $L=1$ the claim is the definition of the slice, $(\Bten^1)_{ij}=B_{ij}$. Assume~\eqref{eq:paths} holds for $L$. Then, by the definition of matrix multiplication,
\[
(\Bten^{L+1})_{ij} = \sum_{k} (\Bten^{L})_{ik}\, B_{kj} = \sum_k \Bigl(\sum_{i_1,\dots,i_{L-1}} B_{i\,i_1}\cdots B_{i_{L-1}\,k}\Bigr) B_{kj}.
\]
Relabeling the final intermediary $k=i_L$ and absorbing it into the sum yields a sum over all walks $i\to i_1\to\cdots\to i_L\to j$ of length $L+1$, which is~\eqref{eq:paths} at $L+1$.
\end{proof}

The theorem says that the higher-order recursion, rather than requiring higher-order tensors, is \emph{already present} in the powers of the first-order slice, exactly as walks in a weighted directed graph are present in the powers of its adjacency matrix. To hold a belief about a belief about a belief is, in this framework, to traverse a length-three walk in the epistemic network, accumulating attribution weight multiplicatively along the way. Two specializations are worth naming. Applying the power to the private-belief vector, $(\Bten^L\bvec)_i$, terminates a nested attribution in an actual belief---``$i$ thinks $i_1$ thinks $\cdots$ thinks $i_L$ \emph{believes} $p$''---anchoring the chain in a ground-truth doxastic fact. And setting $j=i$ isolates \emph{closed} walks, the diagonal entries $(\Bten^L)_{ii}$, which represent reflexive higher-order belief: ``$i$'s view of how others view $i$,'' the recursive self-reference that, as we will see in Part~\ref{part:adversarial}, is the structural signature of the echo chamber---understood, per the invariance analysis of Section~\ref{sec:invariance}, as a walk-weight phenomenon (the compounding of reflexive attribution weight) rather than a claim about the modal status of what the population commonly believes.

\begin{theorem}[Closed epistemic walks and bounded recursion]\label{thm:closed}
The diagonal of the $L$-th power collects closed epistemic walks of length $L$ based at each agent, and its trace sums all such loops in the population:
\begin{equation}
(\Bten^{L})_{ii} = \sum_{i_1,\dots,i_{L-1}} B_{i\,i_1}\cdots B_{i_{L-1}\,i}, \qquad \tr(\Bten^{L}) = \sum_{i}(\Bten^{L})_{ii}.
\end{equation}
When $\spr(\Bten)<1$, these contributions decay geometrically in the depth $L$; when $\spr(\Bten)\ge 1$, deep loops can dominate.
\end{theorem}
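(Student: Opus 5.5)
The diagonal formula follows directly from Theorem~\ref{thm:path} with $j=i$. That specialization gives
\[
(\Bten^{L})_{ii}=\sum_{i_1,\dots,i_{L-1}} B_{i\,i_1}\cdots B_{i_{L-1}\,i},
\]
which is exactly a sum over closed walks of length $L$ based at $i$. The trace identity is then just the definition $\tr(\Bten^L)=\sum_i(\Bten^L)_{ii}$. So the combinatorial half of the statement needs no new argument.

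For the quantitative half I would make ``decay geometrically'' precise through the spectral-radius formula. Gelfand's formula gives $\spr(\Bten)=\lim_{L\to\infty}\lVert\Bten^L\rVert^{1/L}$ in any matrix norm. So if $\spr(\Bten)<1$, I fix $r$ with $\spr(\Bten)<r<1$ and obtain a constant $C$ with $\lVert\Bten^L\rVert\le C r^L$ for all $L$. Every entry, in particular each diagonal entry, is bounded by $\lVert\Bten^L\rVert$ (up to a norm-equivalence constant), so $\lvert(\Bten^L)_{ii}\rvert\le C' r^L$. Summing over $i$ gives $\lvert\tr(\Bten^L)\rvert\le N C' r^L$. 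Equivalently, one can use $\tr(\Bten^L)=\sum_k\lambda_k^L$ over the eigenvalues of $\Bten$, which gives $\lvert\tr(\Bten^L)\rvert\le N\spr(\Bten)^L$ directly, with rate exactly $\spr(\Bten)$. A consequence worth recording is that the total reflexive weight over all depths, $\sum_L\tr(\Bten^L)$, converges, via the Neumann series $(\Imat-\Bten)^{-1}$.

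For the case $\spr(\Bten)\ge1$ the claim is only the weak ``can dominate,'' so an existence statement suffices. The eigenvalue route is the natural tool: if $\lambda$ is an eigenvalue with $\lvert\lambda\rvert=\spr(\Bten)\ge1$, then $\lambda^L$ does not decay. I would then show that $\tr(\Bten^L)=\sum_k\lambda_k^L$ fails to tend to zero along a subsequence. The argument is that a nonzero power sum of finitely many unimodular-or-larger terms cannot vanish asymptotically; this follows from Newton's identities, or more simply by taking a Cesàro average of $\lvert\tr\rvert^2$ after normalizing by $\spr(\Bten)^L$. In the nonnegative case, Perron--Frobenius gives a positive eigenvalue equal to $\spr(\Bten)$, so the trace is at least of order $\spr(\Bten)^L$ and depth-$L$ loops weigh no less than shallow ones. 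A one-line example suffices to illustrate this, for instance $\Bten=\Imat$, where every depth contributes $N$.

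The main obstacle is that ``deep loops can dominate'' is informal, and with signed entries the eigenvalue contributions can cancel. For that reason I would state the $\spr(\Bten)\ge1$ conclusion as a limsup claim, $\limsup_L\lvert\tr(\Bten^L)\rvert/\spr(\Bten)^L>0$, rather than assert monotone growth. I would also prove it with the averaging argument rather than rely on sign structure.
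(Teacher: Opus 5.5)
Your argument is correct, and its core is the paper's own: the diagonal and trace formulas are Theorem~\ref{thm:path} at $j=i$, and decay for $\spr(\Bten)<1$ comes from bounding entries by $\lVert\Bten^L\rVert$ and invoking Gelfand's formula. You go beyond the paper in two places.

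First, the identity $\tr(\Bten^L)=\sum_k\lambda_k^L$ gives $\lvert\tr(\Bten^L)\rvert\le N\spr(\Bten)^L$ at exactly the spectral rate. The Gelfand route, which the paper applies to individual entries, only gives a rate approaching $\spr(\Bten)$. Entrywise that is unavoidable, since nontrivial Jordan structure can put polynomial factors such as $L\lambda^{L-1}$ into single diagonal entries while cancelling in the trace.

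Second, for $\spr(\Bten)\ge1$ the paper only observes that Gelfand's formula does not force decay, an existential ``need not decay.'' Your Ces\`aro argument proves a universal statement. Write the eigenvalues on the spectral circle as $\spr(\Bten)z_j$, with distinct unimodular $z_j$ and multiplicities $m_j$. The cross terms $(z_j\bar z_l)^L$ average to zero, so the mean of $\lvert\tr(\Bten^L)\rvert^2/\spr(\Bten)^{2L}$ tends to $\sum_j m_j^2\ge1$. Hence $\limsup_L\lvert\tr(\Bten^L)\rvert/\spr(\Bten)^L\ge1$ for every such $\Bten$, signed or not.

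One sentence should be corrected. In the nonnegative case the trace is of order $\spr(\Bten)^L$ only along multiples of the period, not for all $L$. The swap matrix $\Wmat$ of Regime~IV has $\tr(\Wmat^L)=0$ for every odd $L$. So the limsup formulation you settle on is necessary, not merely a precaution.
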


\begin{proof}
The diagonal formula is~\eqref{eq:paths} with $j=i$, and the trace is the sum of diagonal entries by definition. For the magnitude, $\lvert(\Bten^L)_{ii}\rvert\le\lVert\Bten^L\rVert$, and for any submultiplicative norm $\lVert\Bten^L\rVert^{1/L}\to\spr(\Bten)$ by Gelfand's formula, so the entries decay geometrically at rate approaching $\spr(\Bten)$ when $\spr(\Bten)<1$ and need not decay when $\spr(\Bten)\ge 1$.
\end{proof}

\subsection{The cognitive-consistency assumption and what it sets aside}

The move from~\eqref{eq:square} onward rests on a substantive assumption that we now state explicitly, because it is the boundary between this theory and the full interactive-epistemology hierarchy, and honesty about that boundary is part of the philosophical contribution.

\begin{assumption}[Compositional attribution]\label{ass:composition}
An agent's higher-order attributions factor through its first-order models: agent $i$'s estimate of $j$'s estimate of $k$'s belief is given by the composition of $i$'s first-order attributions, i.e.\ by the matrix products of Theorem~\ref{thm:path}, rather than being carried as independent, freely specifiable higher-order data.
\end{assumption}

Assumption~\ref{ass:composition} is a strong hypothesis of cognitive consistency. It says that an agent does not separately store ``what I think $j$ thinks $k$ believes'' as a free parameter, but computes it from ``what I think $j$ thinks'' and ``what I think $k$ believes'' by composition---that the agent's representation of others is, in this specific sense, coherent across levels. Real agents surely violate this: one can believe that Anwar thinks Beatriz is confident while separately, and inconsistently, believing that Anwar thinks Beatriz thinks the matter is hopeless. The assumption idealizes such inconsistencies away, and in doing so it fixes exactly what the framework can and cannot represent.

What it captures is the finite arithmetic of nested attribution---the who-takes-whom-to-believe-what, propagated by composition, which is enough to model gossip, deference chains, reputational cascades, and the recursive structure of speculative belief. What it sets aside is the full modal content of interactive \emph{knowledge}. In the tradition of \citet{lewis1969}, \citet{aumann1976}, and the epistemic-game-theory program \citep{harsanyi1967,mertens1985,brandenburger2014}, higher-order attitudes are not scalar point-estimates composed by multiplication. They are probability measures over measures, or knowledge operators satisfying introspection axioms. And the distinctions that matter most in that tradition---above all, between $p$ being mutually believed to depth $n$ and $p$ being \emph{common} knowledge, the limit that coordinates conventions and underwrites the agreement theorem---live in exactly the structure Assumption~\ref{ass:composition} projects away. Our composition propagates the \emph{degree} of nested attribution. It does not track whether a proposition has become common knowledge, and it cannot represent the discontinuity between ``believed to arbitrary finite depth'' and ``common,'' where some of the deepest results in interactive epistemology are located. The bubble literature is instructive here. The sharpest greater-fool results turn on precisely the mutual-versus-common-knowledge distinction \citep{awaya2022}. Our framework can represent the \emph{tower} of higher-order attribution that sustains a bubble, as decaying or compounding walks (Theorem~\ref{thm:closed}), but it represents that tower quantitatively, as accumulating attribution weight, not modally, as a knowledge operator.

We regard this boundary not as a defect to be apologized for but as a deliberate and clarifying trade. The infinitary hierarchy is the right tool for the foundations of rational choice under incomplete information; it is the wrong tool for the dynamics of finite populations of situated believers who observe, express, conform, and update in real time. For the latter, Assumption~\ref{ass:composition} buys the entire machinery of linear algebra---spectra, invariant subspaces, resolvents, pseudospectra---at the cost of the modal fine-structure, and the results of the following parts are what that machinery yields. When we return, in Part~\ref{part:adversarial}, to the adversarial exploitation of higher-order belief, we will find that Assumption~\ref{ass:composition} even earns its keep defensively: an agent whose higher-order attributions are composed rather than free-floating has a \emph{smaller attack surface}, because inconsistent injected attributions can be detected by checking them against the agent's own compositions (Proposition~\ref{prop:composition}).

\subsection{Invariance of the results to the modal collapse}\label{sec:invariance}

The trade just described would be a genuine liability if the substantive conclusions of the paper \emph{depended} on Assumption~\ref{ass:composition}---if, that is, the framework's results were artifacts of collapsing the modal hierarchy of interactive knowledge onto composed scalar attributions. It is therefore essential to establish that they are not. We prove that the paper's conclusions are invariant to the modal collapse in the following precise sense: the dynamical results do not invoke composition at all, and the results that do invoke it depend only on its \emph{quantitative} (walk-weight) content and never on the modal distinctions that composition fails to represent. The one thing the collapse genuinely discards---common belief as a modal limit---is something the framework nowhere asserts.

We first isolate the only place higher-order composition could enter the dynamics, and show it does not enter at all.

\begin{proposition}[Dynamical results are first-order]\label{prop:firstorder}
The stacked operator $\Top$ of~\eqref{eq:stacked}, its spectrum, and its fixed point depend on the perceived-belief tensor only through the first-order slice entries $B_{ij}$, and not through any composed power $\Bten^{L}$ with $L\ge 2$. Consequently Theorem~\ref{thm:decomp} (truth--misperception decomposition), Corollary~\ref{cor:channels} (channel split), Theorem~\ref{thm:closure} (closure, channel separation, and the amplifies-but-cannot-generate law), Propositions~\ref{prop:consensus}--\ref{prop:oscillation} (the four regimes), and Proposition~\ref{prop:product} (product-spectrum coupling) hold verbatim whether or not Assumption~\ref{ass:composition} is imposed. Each of these results is therefore \emph{logically independent} of the compositional representation of higher-order belief.
\end{proposition}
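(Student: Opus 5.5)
The argument is a dependency audit. The dynamics are fully specified by the three update rules \eqref{eq:update}, \eqref{eq:expression} and \eqref{eq:perception}. Each of them refers only to the first-order quantities $b_i$, $x_i$ and $B_{ij}$, never to a matrix product of the slice with itself. So I would first observe that the state vector of the closed system is $(\bvec^t,\xvec^t,\{B_{ij}^t\}_{i\neq j})$. I would then verify, rule by rule, that the next state is an affine function of the current state whose coefficients involve only $\Wmat$, $\alpha_i$, $\beta_i$ and $\gamma_{ij}$.

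Concretely:
\begin{enumerate}[label=(\roman*),leftmargin=2.2em,itemsep=1pt]
\item \eqref{eq:update} is linear in the entries $B_{ij}^t$, with coefficients $W_{ij}$.
\item \eqref{eq:expression} is affine in $b_i^{t+1}$ (hence, after substitution, in the $B_{ij}^t$) and in the climate $\widehat c_i^{\,t}$. The climate is a fixed convex combination of the $B_{ij}^t$, with weights $\lvert W_{ij}\rvert/\sum_{j\neq i}\lvert W_{ij}\rvert$.
\item \eqref{eq:perception} is affine in $B_{ij}^t$ and $x_j^t$.
\end{enumerate}
Stacking these gives the operator $\Top$ of \eqref{eq:stacked} together with a constant vector built from the $\beta_i$. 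Every entry of $\Top$ is a polynomial in the model parameters alone, and $\Top$ acts on a vector whose coordinates are first-order entries. No coordinate is a composed quantity $(\Bten^L)_{ij}$ with $L\ge 2$. The spectrum of $\Top$ depends only on its entries, so it is composition-free. The fixed point, when $1\notin\operatorname{spec}(\Top)$, is $(\Imat-\Top)^{-1}$ applied to the bias vector, so it is composition-free as well.

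Next I would handle the listed results one at a time, checking that each proof invokes only these objects.
\begin{enumerate}[label=(\roman*),leftmargin=2.2em,itemsep=1pt]
\item Theorem~\ref{thm:decomp} and Corollary~\ref{cor:channels} are proved above by substituting $B_{ij}=b_j+E_{ij}$ and using linearity of the Hadamard contraction. Neither step mentions $\Bten^L$.
\item Theorem~\ref{thm:closure} and Propositions~\ref{prop:consensus}--\ref{prop:oscillation} are statements about $\operatorname{spec}(\Top)$, iterates $\Top^t$ and the fixed point. Powers of $\Top$ are powers of the \emph{dynamics operator} on the first-order state, not powers of the attribution slice. I would make this distinction explicit, since it is the only place a reader might conflate the two.
\item Proposition~\ref{prop:product} reduces on vectorization to the spectrum of $\Lop\otimes\Wmat$. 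That is again a statement about an operator acting on the stacked first-order slices.
\end{enumerate}

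The final step is the logical point. Assumption~\ref{ass:composition} constrains only how higher-order attributions are \emph{defined*}. It adds no equation to the dynamics and restricts no parameter appearing in $\Top$. Imposing or dropping it therefore leaves the hypotheses and conclusions of each listed result unchanged, which gives the claimed logical independence.

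The main obstacle is step (ii): Theorem~\ref{thm:closure} and the regime propositions are stated later in the paper. The audit has to confirm that none of their proofs silently uses path aggregation, for instance by reading the echo-chamber or closed-walk interpretation as part of the proof. I would settle this by showing that every object those proofs manipulate lies in the span of $\Top^t$, its resolvent and the bias vector, and that Theorem~\ref{thm:path} and Theorem~\ref{thm:closed} enter only as interpretive gloss, not as premises.
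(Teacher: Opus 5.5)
Your proposal is correct and is essentially the paper's own argument: an audit of \eqref{eq:update}, \eqref{eq:expression} and \eqref{eq:perception} showing that every block of $\Top$ is assembled from $\Dmat_W$, $\bar{\mathbf A}$, $\mathbf G$, $\mathbf S$, $\Dmat_\alpha$, $\Dmat_\gamma$, which are functions of $\Wmat$ and the parameters alone, so that Assumption~\ref{ass:composition}, which only governs how $(\Bten^L)_{ij}$ with $L\ge 2$ are interpreted, cannot affect any listed result. Two harmless imprecisions: the entries of $\mathbf G$ are ratios $\lvert W_{ij}\rvert/\sum_{k\neq i}\lvert W_{ik}\rvert$ rather than polynomials in the parameters, and Propositions~\ref{prop:consensus}--\ref{prop:oscillation} are statements about powers of $\Wmat$ (with $\hvec^t\equiv\mathbf 0$) rather than of $\Top$, which yields the same conclusion since $\Wmat$ is a parameter, not the attribution slice.
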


\begin{proof}
Inspect the three update maps. The belief update~\eqref{eq:update} reads $b_i^{t+1}=\sum_j W_{ij}B_{ij}^{t}$, a contraction of the \emph{first-order} entries $B_{ij}$ against $\Wmat$; its Jacobian with respect to the perception tensor is exactly $\Wmat$ (entrywise), a first-order object, and no entry of $\Bten^{L}$ for $L\ge2$ appears. The expression update~\eqref{eq:expression} reads the perceived climate $\widehat c_i^{\,t}$, an influence-weighted average of the first-order entries $B_{ij}^{t}$, and the private belief $b_i^{t+1}$; again no composed power appears. The perception update~\eqref{eq:perception} tracks the first-order expressions $x_j^{t}$. Hence every block of the stacked operator $\Top$ in~\eqref{eq:stacked} is assembled from $\Dmat_W$, the first-order contraction $\bar{\mathbf A}$, the climate operator $\mathbf G$, the broadcast $\mathbf S$, and the diagonal parameter matrices $\Dmat_\alpha,\Dmat_\gamma$---each a function of $\Wmat$ and the parameters alone, none a function of $\Bten^{L}$, $L\ge2$. Since the cited results are all statements about $\Top$, its spectrum, or the fixed point $z^{*}=(\Imat-\Top)^{-1}\tilde\bias$, and $\Top$ does not depend on the composed higher-order attributions, the results are unchanged when Assumption~\ref{ass:composition} is dropped and the off-diagonal entries $B_{ij}$ are regarded as free first-order attributions evolving by~\eqref{eq:perception}. (Assumption~\ref{ass:composition} governs how the \emph{higher-order} attributions $(\Bten^{L})_{ij}$, $L\ge2$, are to be \emph{interpreted}; it places no constraint on the first-order entries that the dynamics actually use.)
\end{proof}

Proposition~\ref{prop:firstorder} disposes of the objection for the whole of Parts~\ref{part:regimes} and~\ref{part:coupling}: consensus, polarization, entrenched distortion, oscillation, the closure theorem, and the coupling law are theorems about the first-order dynamics, and the modal status of the higher-order tower---whether it is a hierarchy of measures, a system of knowledge operators, or composed point-attributions---is simply irrelevant to them. Whatever one's preferred foundational account of interactive knowledge, these results stand.

It remains to treat the results that \emph{do} use composition---the walk calculus (Theorems~\ref{thm:path} and~\ref{thm:closed}) and its adversarial descendants (the echo multiplier, Theorem~\ref{thm:echo}; the audit rule; and the compression Proposition~\ref{prop:composition}). Here Assumption~\ref{ass:composition} is genuinely load-bearing, and the reviewer's concern must be met directly rather than sidestepped. The meeting turns on distinguishing two kinds of content a higher-order claim can carry.

\begin{definition}[Walk-weight functional]\label{def:walkfunctional}
A functional $f(\Bten)$ of the attribution slice is a \emph{walk-weight functional} if it depends on the higher-order tower only through the walk-weight sums $\sum_{L}(\Bten^{L})_{ij}$ contracted against first-order data---equivalently, through entries of the resolvent $(\Imat-\Bten)^{-1}$ and its powers. A functional is \emph{modal} if it depends on a knowledge-theoretic predicate that the walk weights do not determine: whether a proposition is \emph{commonly} believed as opposed to mutually believed to some finite depth, whether an event is self-evident, or whether the iterated-belief sequence stabilizes in the sense required for common knowledge.
\end{definition}

The distinction is exactly the one the interactive-epistemology tradition insists upon, and which composition cannot honor: two configurations of the modal hierarchy that induce identical walk weights may differ on whether common belief obtains, because common belief is a property of the \emph{limit} of the iterated-belief sequence and not of any finite truncation. Composition collapses precisely this difference. The question is whether any \emph{conclusion} of the paper is a modal functional. It is not.

\begin{proposition}[Application results are walk-weight functionals]\label{prop:walkonly}
The echo multiplier $\mathcal{E}_{XY}=[(\Imat-\Bten)^{-1}]_{YX}$ of Theorem~\ref{thm:echo}, the closed-walk mass $\tr(\Imat-\Bten)^{-1}$ and its sensitivities, the audit rule of Section~\ref{sec:gossip}, and the attack-surface compression of Proposition~\ref{prop:composition} are all walk-weight functionals in the sense of Definition~\ref{def:walkfunctional}. None is a modal functional. Consequently each asserts a claim about the \emph{magnitude of accumulated attribution weight} that is correct under composition and is unaffected by whether the corresponding modal predicate (common belief) obtains.
\end{proposition}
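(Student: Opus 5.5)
The plan is to verify Definition~\ref{def:walkfunctional} directly for each of the four objects, and then to derive the "no modal content" conclusion from one structural observation. Every quantity in question is a function only of the entries of the first-order slice $\Bten$, and of first-order data such as $\bvec$ or a perturbation pattern. Any modal predicate, by contrast, can vary while $\Bten$ stays fixed. The backbone is the Neumann series. Whenever $\spr(\Bten)<1$, which is the regime in which the echo multiplier and the closed-walk mass are defined, we have
\[
(\Imat-\Bten)^{-1}=\sum_{L\ge 0}\Bten^{L}.
\]
By Theorem~\ref{thm:path}, each entry $[(\Imat-\Bten)^{-1}]_{YX}$ is therefore the total weight of all walks from $Y$ to $X$ of every length. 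This is literally a walk-weight sum in the sense of the definition.

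The first step handles the echo multiplier. Since $\mathcal{E}_{XY}=[(\Imat-\Bten)^{-1}]_{YX}=\sum_L(\Bten^L)_{YX}$, it is a resolvent entry and needs no further argument.

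The second step handles the closed-walk mass. Here
\[
\tr(\Imat-\Bten)^{-1}=\sum_L\tr(\Bten^L)=\sum_{L}\sum_i(\Bten^L)_{ii},
\]
which is the sum of closed walks from Theorem~\ref{thm:closed}. For its sensitivities, differentiate: $\partial_{B_{kl}}\tr(\Imat-\Bten)^{-1}=[(\Imat-\Bten)^{-2}]_{lk}$. This is an entry of a power of the resolvent, which the definition explicitly admits. Equivalently, it is the walk-weight sum $\sum_L (L+1)(\Bten^L)_{lk}$.

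The third step handles the audit rule and the compression result. The audit rule compares an injected higher-order attribution with the agent's own composed value $(\Bten^L)_{ij}$, or with the corresponding resolvent entry. The attack-surface compression of Proposition~\ref{prop:composition} counts the free parameters that remain once higher-order entries are forced to equal these compositions. Both are functions of finitely many walk sums contracted against first-order data, so I will state this as a short lemma: a functional built from entries of the powers $\Bten^L$, or of the resolvent and its powers, together with first-order quantities, is a walk-weight functional.

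The final step, and the main obstacle, is the claim "none is modal." The definition does not settle this by itself, because a functional could in principle be both walk-determined and correlated with common belief. The argument I would make is this. Each of the four objects is a function $F(\Bten)$ of the first-order slice alone. A modal predicate, such as common belief in $p$, is not a function of $\Bten$: by the remark after Definition~\ref{def:walkfunctional}, two hierarchies inducing identical walk weights can differ on whether common belief holds. So for any fixed $\Bten$ there exist modal configurations $M_1,M_2$ that are both compatible with it and that disagree on the predicate, while $F$ takes the same value at both. Hence $F$ does not depend on the predicate, which means it is not modal in the sense of the definition.

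Making this step rigorous requires exhibiting, or at least citing, such a pair of configurations. I would use the standard construction: a type structure in which $p$ is mutually believed to every finite depth but not commonly believed, set beside one in which $p$ is common belief, with the point-estimate attributions matched. The claim that each quantity is "correct under composition" then follows because, under Assumption~\ref{ass:composition}, $(\Bten^L)_{ij}$ \emph{is} the agent's depth-$L$ attribution, so the walk sums measure exactly the accumulated attribution magnitude.
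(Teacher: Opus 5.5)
Your first two steps match the paper's argument: the Neumann series plus Theorem~\ref{thm:path} for the echo multiplier, the trace for the closed-walk mass, and the squared resolvent for its sensitivity. Your expansion $[(\Imat-\Bten)^{-2}]_{lk}=\sum_{L\ge0}(L+1)(\Bten^{L})_{lk}$ spells out what the paper leaves implicit.

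The gap is in your third step, where you misidentify the audit rule. Comparing an injected attribution against the agent's own composition, $\tilde B\overset{?}{=}(\Bten^{2})_{iY}$, is the detection test of Proposition~\ref{prop:composition}. So you have checked the compression result twice and the audit rule not at all. The audit rule of Section~\ref{sec:gossip} instead classifies attribution chains, and closed chains are discounted:
\begin{itemize}
\item a chain is open if its walk terminates in a private belief, carrying weight into $(\Bten^{L}\bvec)_i$;
\item a chain is closed if its walk returns to its origin, sitting on a diagonal entry $(\Bten^{L})_{ii}$.
\end{itemize}
The paper disposes of this object by noting that the open/closed discrimination is a combinatorial property of the walk in the attribution graph, hence walk-determined. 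That is the verification you need to supply, for the rule as actually stated.

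Your last step is more careful than the paper's proof. The paper settles non-modality by observing that none of the objects references a common-knowledge operator, a self-evidence predicate, or the stabilization of the iterated-belief sequence. Your fiber argument makes this rigorous: each object factors through $\Bten$, so it is constant on every set of configurations sharing $\Bten$, so it cannot respond to a predicate that varies within such a set. Note that correlation across configurations is not the dependence Definition~\ref{def:walkfunctional} is about, so your initial worry dissolves.

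However, this step is not an obstacle and needs no witness. That modal predicates vary within $\Bten$-fibers is part of what Definition~\ref{def:walkfunctional} means by modal, and the remark after it asserts this.

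The witness you propose is also unavailable in standard settings. With partitional knowledge or countably additive type spaces, mutual belief at every finite depth is equivalent to common belief, because the iterative and fixed-point definitions coincide. The discontinuity the paper alludes to is between arbitrarily large finite depth and common belief, not between all finite depths and common belief.

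If you want an explicit pair anyway, take two hierarchies for $i$'s belief about $j$'s credence. In one it is a point mass at $0.9$; in the other it is an even mixture of $0.8$ and $1.0$. Both give $B_{ij}=0.9$, yet they disagree on whether $i$ is certain that $j$'s credence exceeds $0.85$.
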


\begin{proof}
Each object is defined as a sum of walk weights or a resolvent entry: $\mathcal{E}_{XY}=\sum_{L\ge0}(\Bten^{L})_{YX}$ by Theorem~\ref{thm:echo}; the closed-walk mass is $\tr\sum_{L\ge0}\Bten^{L}$; the audit rule classifies a chain by whether the walk terminates in a first-order belief (an open walk ending in a diagonal entry $b$) or closes on itself (a diagonal entry of $\Bten^{L}$), both walk-weight properties; and the compression bound counts parameters and applies the composition consistency test $\tilde B\overset{?}{=}(\Bten^{2})_{iY}$, again a walk-weight comparison. Each is manifestly a function of $\{(\Bten^{L})_{ij}\}$ contracted against first-order data, hence a walk-weight functional; none references a common-knowledge operator, a self-evidence predicate, or the stabilization of the iterated-belief sequence, hence none is modal.
\end{proof}

The force of Proposition~\ref{prop:walkonly} is that the results which use composition use only what composition computes \emph{correctly}. The echo multiplier is a claim about how much apparent corroboration a planted attribution generates by recirculating through the network; that quantity is the walk-weight sum, and it is exactly what the resolvent delivers, whether or not the recirculating claim ever attains the modal status of common belief. Indeed the walk-weight content is dominated by shallow walks: for the worked instance of Section~\ref{sec:gossip} more than ninety-nine percent of the echo mass lies in walks of depth at most five, so the result does not even rest on the infinite tail where the modal distinction would live, let alone on its resolution. The audit rule likewise discriminates open from closed walks---a purely combinatorial property of the attribution graph---and never adjudicates whether closure has produced common belief. And the compression result is a statement about parameter counts and finite consistency checks. In each case the modal collapse discards content the conclusion does not use.

What, then, is genuinely lost, and does the paper anywhere rely on it? The one thing composition cannot represent is common belief as a modal limit---the discontinuity between ``believed to every finite depth'' and ``common,'' on which the agreement theorem \citep{aumann1976} and the sharpest coordination and greater-fool results \citep{lewis1969,awaya2022} turn. We emphasize, in answer to the natural worry, that \emph{the framework nowhere asserts common belief}. Its consensus results (Propositions~\ref{prop:consensus},~\ref{prop:polarization}) are convergence statements about the first-order belief vector $\bvec^{t}$, not claims that the limit proposition becomes common knowledge; its bubble and echo-chamber discussions describe the \emph{compounding of attribution weight}, a walk-weight phenomenon, and are advanced as quantitative dynamical analogues of the corresponding modal phenomena, not as models of the modal fact itself. The distinction between mutual and common belief is not collapsed-and-relied-upon in this paper; it is out of scope and unused. A population in our framework can carry unboundedly compounding higher-order attribution weight (as $\spr(\Bten)\to1$) without our ever claiming it has reached common belief, and this is the correct and intended reading: the theory is a dynamics of the \emph{degrees} of interactive attribution, and its conclusions are theorems about those degrees.

We therefore state the overall position exactly. Assumption~\ref{ass:composition} is inessential to every dynamical result (Proposition~\ref{prop:firstorder}) and, where it is used, contributes only walk-weight content that it computes correctly (Proposition~\ref{prop:walkonly}); the modal structure it cannot represent is nowhere invoked as a premise or asserted as a conclusion. The bubble and echo-chamber results are, accordingly, to be read as claims about the accumulation of higher-order attribution weight---quantitative analogues of their modal namesakes---and we adopt that more careful language for them in Part~\ref{part:adversarial}. This is the sense in which the paper's conclusions are invariant to the modal collapse: they either do not depend on the higher-order representation at all, or depend on it only through the one kind of content that survives the collapse intact.

\part{The Forms of Collective Epistemic Life}\label{part:regimes}

The framework of Part~\ref{part:foundations} yields a closed dynamical system on the three-layer state $(\bvec,\xvec,\Bten)$. We now ask what such a system does over time, and what stable configurations of belief it settles into. There are exactly four qualitative regimes, and they answer to the four recognizable conditions of collective epistemic life. In \emph{consensus} the group converges on a shared belief. In \emph{polarization} it splits into opposed camps. In \emph{entrenched distortion} it settles into stable collective misperception---pluralistic ignorance made permanent. In \emph{perpetual instability} it never settles at all. Each regime is a spectral condition on the operator that governs the dynamics. And---this is the unifying result of Part~\ref{part:regimes}---all four are limits of one generative process, reached by turning the parameters of conformity, perception, and the sign structure of influence. What look like four theories of collective belief are four regions of one.

We treat the four regimes in turn, developing each with a fully worked, hand-computable example whose every number the reader can verify, and drawing out in each case the social-epistemic interpretation. We then prove the closure theorem that binds them together and delivers the ``amplifies-but-cannot-generate'' law for pluralistic ignorance. The first three regimes, in their pure form, restate classical results---DeGroot consensus, Altafini polarization, the Friedkin--Johnsen fixed point---which we present as attributed propositions (proofs in Appendix~\ref{app:proofs}) not as new contributions but to exhibit them within the interactive-belief formalism and to fix the objects on which the genuinely new closure theorem operates.

\section{Regime I: Consensus}\label{sec:consensus}

The simplest regime arises when influence is purely cooperative and perception is veridical. If all weights are nonnegative and every agent perceives others correctly, the misperception forcing vanishes ($\hvec^t\equiv\mathbf 0$ by Theorem~\ref{thm:decomp}), and the belief dynamics reduce to the pure influence recursion $\bvec^{t+1}=\Wmat\bvec^t$. With $\Wmat$ row-stochastic (nonnegative rows summing to one) and primitive (some power is strictly positive, so influence eventually connects everyone to everyone), the classical result applies.

\begin{proposition}[Consensus; \citealt{degroot1974}]\label{prop:consensus}
If $\Wmat$ is row-stochastic and primitive and $\hvec^{t}\equiv\mathbf 0$, then private beliefs converge to a common value,
\begin{equation}
\lim_{t\to\infty}\bvec^{t} = (\pi^{\top}\bvec^{0})\,\one,
\end{equation}
where $\pi$ is the unique left Perron eigenvector of $\Wmat$ normalized so that $\pi^{\top}\one=1$. The consensus value is the $\pi$-weighted average of the initial beliefs.
\end{proposition}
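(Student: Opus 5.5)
The plan is to reduce everything to the Perron--Frobenius theorem applied to $\Wmat$. Since $\hvec^{t}\equiv\mathbf 0$, Theorem~\ref{thm:decomp} collapses the dynamics to the linear recursion $\bvec^{t+1}=\Wmat\bvec^{t}$, so $\bvec^{t}=\Wmat^{t}\bvec^{0}$. The claim is then equivalent to the matrix limit
\begin{equation*}
\lim_{t\to\infty}\Wmat^{t} = \one\,\pi^{\top},
\end{equation*}
because applying this rank-one projector to $\bvec^{0}$ gives $(\pi^{\top}\bvec^{0})\one$.

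The steps, in order, are as follows. First, row-stochasticity gives $\Wmat\one=\one$, so $1$ is an eigenvalue with right eigenvector $\one$. The bound $\spr(\Wmat)\le\lVert\Wmat\rVert_\infty=1$ (already noted after~\eqref{eq:rowsum}) shows that $1$ is the spectral radius. Second, because $\Wmat$ is nonnegative and primitive, the Perron--Frobenius theorem for primitive matrices makes $1$ a simple eigenvalue with a strictly positive left eigenvector. We normalize it to $\pi$ with $\pi^{\top}\one=1$, and every other eigenvalue $\lambda$ satisfies $\lvert\lambda\rvert<1$. Third, we split $\real^{N}$ into $\mathrm{span}(\one)\oplus\ker(\pi^{\top})$. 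Both pieces are $\Wmat$-invariant: $\pi^{\top}\Wmat v=\pi^{\top}v=0$ whenever $\pi^{\top}v=0$. The restriction of $\Wmat$ to $\ker(\pi^{\top})$ carries exactly the non-Perron eigenvalues, so its spectral radius $r$ is strictly less than $1$.

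To conclude, we write $\bvec^{0}=(\pi^{\top}\bvec^{0})\one+v$ with $v\in\ker(\pi^{\top})$. This gives $\bvec^{t}=(\pi^{\top}\bvec^{0})\one+\Wmat^{t}v$. By Gelfand's formula applied to the restricted operator (the same device used in Theorem~\ref{thm:closed}), $\lVert\Wmat^{t}v\rVert\le C(r+\epsilon)^{t}\lVert v\rVert\to 0$. As a byproduct we get the geometric rate, governed by the second-largest eigenvalue modulus. Uniqueness of $\pi$ also follows from simplicity of the Perron root.

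The main obstacle is the second step: the strict spectral gap $\lvert\lambda\rvert<1$ for $\lambda\neq1$. This is where primitivity, as opposed to mere irreducibility, is essential; without it, periodic matrices have other unimodular eigenvalues and beliefs oscillate. I would cite the Perron--Frobenius theorem for primitive matrices. If a self-contained argument is wanted, I would take $k$ with $\Wmat^{k}>0$ entrywise and note that $\Wmat^{k}$ is row-stochastic and strictly positive. Then for any eigenvector $u$ with $\Wmat u=\lambda u$ and $\lvert\lambda\rvert=1$, the equality case of the triangle inequality in $\lvert(\Wmat^{k}u)_i\rvert\le\max_j\lvert u_j\rvert$ forces $u$ to be a multiple of $\one$, hence $\lambda=1$. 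Ruling out a nontrivial Jordan block at $1$ follows from the boundedness $\lVert\Wmat^{t}\rVert_\infty\le1$.
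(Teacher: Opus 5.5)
Your proposal is correct and takes essentially the paper's route. The paper invokes Perron--Frobenius and writes $\Wmat^{t}=\one\pi^{\top}+\mathbf R^{t}$ with $\mathbf R=\Wmat-\one\pi^{\top}$ and $\spr(\mathbf R)<1$; this is the matrix form of your invariant splitting $\real^{N}=\mathrm{span}(\one)\oplus\ker(\pi^{\top})$, since $\mathbf R$ annihilates $\one$ and acts as $\Wmat$ on $\ker(\pi^{\top})$. Your optional self-contained spectral-gap argument is also sound, though the paper does not include it: positivity of $\Wmat^{k}$ plus the equality case of the triangle inequality, with boundedness of $\lVert\Wmat^{t}\rVert_\infty$ excluding a Jordan block at $1$.
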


The philosophical content of the DeGroot limit is that rational-seeming local updating---each agent moving toward a weighted average of the views it encounters---produces global agreement, but on a value that is \emph{not} the simple average of initial opinions. It is the average weighted by $\pi$, the Perron eigenvector, which measures each agent's eigenvalue centrality in the influence network. Agents who are attended to by agents who are themselves attended to get more weight in the final consensus. Consensus is thus not a neutral pooling of information but a structured aggregation in which network position is epistemic power. This is already a substantive social-epistemic observation: even in the most benign regime, where everyone is sincere and perceptive and cooperative, what the group comes to believe is disproportionately what its central members believed to begin with.

\paragraph{Worked example.} Let us make this concrete and hand-checkable. Consider five agents with the row-stochastic, primitive influence matrix
\begin{equation}
\Wmat = \begin{pmatrix}
0.40 & 0.30 & 0.10 & 0.10 & 0.10\\
0.20 & 0.40 & 0.20 & 0.10 & 0.10\\
0.10 & 0.20 & 0.40 & 0.20 & 0.10\\
0.10 & 0.10 & 0.20 & 0.40 & 0.20\\
0.10 & 0.10 & 0.10 & 0.30 & 0.40
\end{pmatrix},
\label{eq:Wconsensus}
\end{equation}
in which each agent places weight $0.40$ on its own current belief and distributes the remaining $0.60$ across the others, with a mild bias toward neighbors. One verifies immediately that each row sums to one, so $\Wmat$ is row-stochastic; and since every entry is positive, $\Wmat$ is primitive (indeed $\Wmat$ itself is strictly positive). Suppose the five agents begin with the private beliefs
\begin{equation}
\bvec^{0} = (0.8,\ -0.6,\ 0.4,\ -0.2,\ 0.0),
\end{equation}
a spread of opinions from strong agreement ($+0.8$) to moderate disagreement ($-0.6$) with the proposition. Where does the group end up?

To find the consensus value we compute the left Perron eigenvector $\pi$, the stationary distribution of the row-stochastic matrix, solving $\pi^\top\Wmat=\pi^\top$ with $\pi^\top\one=1$. For the matrix~\eqref{eq:Wconsensus} this is
\begin{equation}
\pi \approx (0.1746,\ 0.2222,\ 0.2063,\ 0.2222,\ 0.1746),
\end{equation}
which one can confirm by checking that $\pi^\top\Wmat=\pi^\top$ to the stated precision. The weights are revealing: agents 2 and 4 carry the most influence ($0.2222$ each), agent 3 slightly less, and the two ``end'' agents 1 and 5 the least ($0.1746$)---network centrality, not initial conviction, sets epistemic weight. The consensus value is then
\begin{equation}
\pi^{\top}\bvec^{0} \approx (0.1746)(0.8) + (0.2222)(-0.6) + (0.2063)(0.4) + (0.2222)(-0.2) + (0.1746)(0.0) \approx 0.0444.
\end{equation}
All five beliefs converge to approximately $0.0444$---a mild net agreement---within about twenty rounds of updating (Figure~\ref{fig:consensus}). Notice what has happened interpretively. The initial beliefs averaged, unweighted, to $(0.8-0.6+0.4-0.2+0.0)/5 = 0.08$; the \emph{consensus} landed lower, at $0.044$, because the strongly-disagreeing agent 2 and the mildly-disagreeing agent 4 occupy central, high-$\pi$ positions and pull the outcome toward their views, while the strongly-agreeing agent 1 sits peripherally and is partly discounted. The group's settled belief is a real aggregation of its members' views, but a \emph{structured} one, and the structure is the influence network.

\begin{figure}[t]
\centering
\includegraphics[width=0.62\linewidth]{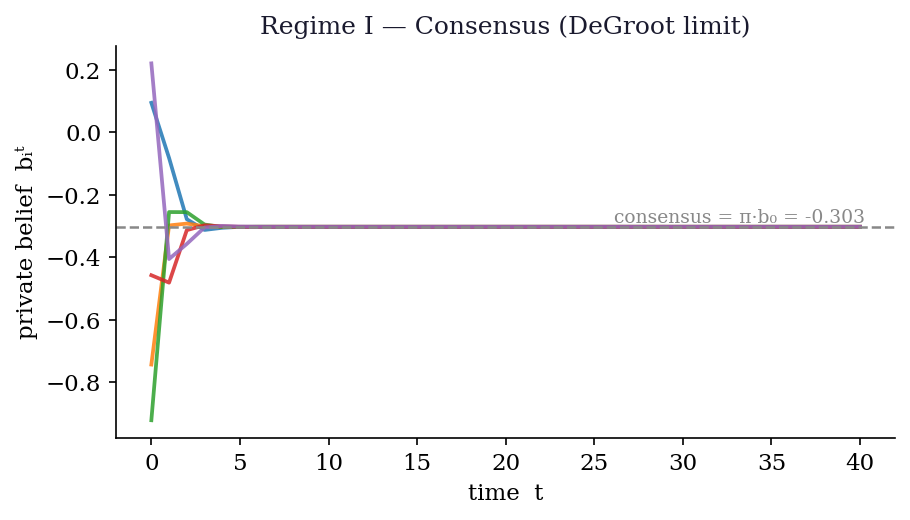}
\caption{The consensus regime, for the five-agent example of~\eqref{eq:Wconsensus}. All private-belief trajectories converge to the Perron-weighted average $\pi^\top\bvec^0\approx 0.0444$ (dashed line) within roughly twenty rounds. The limit is pulled below the naive mean by the central placement of the disagreeing agents.}
\label{fig:consensus}
\end{figure}

\section{Regime II: Polarization}\label{sec:polarization}

The consensus regime assumed all influence cooperative. Admit antagonism---negative weights, the ``I move away from what they think'' relation---and a second regime becomes possible: the population splits into two camps, each internally cohering and mutually repelling, converging not to agreement but to equal-and-opposite extremes. The condition under which this happens cleanly is \emph{structural balance}, a notion from the theory of signed networks \citep{cartwright1956}: the population divides into two groups such that influence within each group is positive and influence between them is negative. Algebraically, there is a diagonal sign matrix $\Dmat=\diag(s_1,\dots,s_N)$ with $s_i\in\{+1,-1\}$ assigning each agent to a camp, such that $\Wmat=\Dmat\Wmat_+\Dmat$ where $\Wmat_+$ is row-stochastic and primitive. The sign matrix $\Dmat$ is a ``gauge'': conjugating the cooperative dynamics $\Wmat_+$ by it flips the signs of exactly the cross-camp influences \citep{altafini2013}.

\begin{proposition}[Bipartite polarization; cf.\ \citealt{cartwright1956,altafini2013}]\label{prop:polarization}
If $\Wmat=\Dmat\Wmat_+\Dmat$ is structurally balanced as above and $\hvec^t\equiv\mathbf 0$, then
\begin{equation}
\lim_{t\to\infty}\bvec^{t} = \Dmat\one\,(\pi^\top\Dmat\bvec^0),
\end{equation}
so beliefs converge to two blocs of equal magnitude and opposite sign, with the camp of agent $i$ fixed by $s_i$.
\end{proposition}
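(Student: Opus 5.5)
The plan is to reduce the signed dynamics to the cooperative case of Proposition~\ref{prop:consensus} by a change of variables along the gauge $\Dmat$. With $\hvec^t\equiv\mathbf 0$, Theorem~\ref{thm:decomp} gives the linear recursion $\bvec^{t+1}=\Wmat\bvec^t$, and therefore $\bvec^t=\Wmat^t\bvec^0$. Since $\Dmat$ is diagonal with entries $\pm1$, we have $\Dmat^2=\Imat$ and $\Dmat^{-1}=\Dmat$. The conjugation $\Wmat=\Dmat\Wmat_+\Dmat$ therefore telescopes in powers:
\[
\Wmat^t=\Dmat\Wmat_+\Dmat\,\Dmat\Wmat_+\Dmat\cdots=\Dmat\Wmat_+^t\Dmat .
\]
I will check this identity by a one-line induction on $t$.

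Next I introduce the gauge-transformed beliefs $\mathbf y^t:=\Dmat\bvec^t$, which amounts to flipping the sign of every agent in the $s_i=-1$ camp. Multiplying the recursion by $\Dmat$ gives $\mathbf y^{t+1}=\Dmat\Wmat\bvec^t=\Wmat_+\Dmat\bvec^t=\Wmat_+\mathbf y^t$. So $\mathbf y^t$ evolves under a row-stochastic, primitive matrix with no forcing, and Proposition~\ref{prop:consensus} applies directly. It yields $\mathbf y^t\to(\pi^\top\mathbf y^0)\one=(\pi^\top\Dmat\bvec^0)\one$, where $\pi$ is the normalized left Perron vector of $\Wmat_+$. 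I will state explicitly that $\pi$ in the proposition refers to $\Wmat_+$ and not to $\Wmat$; the statement leaves this implicit. Equivalently, $\Wmat_+^t\to\one\pi^\top$.

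Finally I undo the gauge. Multiplication by the fixed matrix $\Dmat$ is continuous, so
\[
\bvec^t=\Dmat\mathbf y^t\to\Dmat\one\,(\pi^\top\Dmat\bvec^0).
\]
Componentwise, this says $b_i^t\to s_i\,\kappa$ with the scalar $\kappa=\pi^\top\Dmat\bvec^0$. Every agent therefore ends at magnitude $|\kappa|$, with sign $s_i$ (times $\operatorname{sign}\kappa$), which gives two blocs of equal magnitude and opposite sign with camps fixed by $\Dmat$. In the degenerate case $\kappa=0$ both blocs collapse to zero, and I will note it as such.

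There is no deep obstacle here. The one step needing care is confirming that the hypotheses of Proposition~\ref{prop:consensus} transfer, namely that the transformed system is exactly an unforced DeGroot system under $\Wmat_+$. This rests on $\hvec^t\equiv\mathbf 0$ being assumed for the original variables, together with the identity $\Dmat\Wmat=\Wmat_+\Dmat$. I will also remark that $\Wmat$ satisfies the absolute row-sum condition~\eqref{eq:rowsum} with equality, since $|W_{ij}|=(W_+)_{ij}$, so the model stays in its admissible class.
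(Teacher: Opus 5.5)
Your proof is correct and is essentially the paper's argument: the paper also uses $\Dmat^2=\Imat$ to get $\Wmat^t=\Dmat\Wmat_+^t\Dmat$, invokes Proposition~\ref{prop:consensus} for $\Wmat_+^t\to\one\pi^\top$, and conjugates back. Your gauge variable $\mathbf y^t=\Dmat\bvec^t$ is a restatement of that step, and your reading of $\pi$ as the Perron vector of $\Wmat_+$ matches the paper's usage.
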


The interpretation is that antagonistic influence does not merely prevent consensus; it actively \emph{organizes} disagreement into a symmetric bipolar structure. The two camps do not drift to arbitrary opposed positions but to exact mirror images, $\pm(\pi^\top\Dmat\bvec^0)$, because the same cooperative dynamics $\Wmat_+$ that would have produced consensus is running underneath, merely viewed through the sign gauge $\Dmat$. Polarization, on this account, is consensus in a distorting mirror: the identical mechanism of convergence, applied to a population whose influence graph is signed, yields symmetric division rather than agreement. This has a moral for the study of political polarization---that it need not indicate any breakdown of the ordinary machinery of social influence, but can be that very machinery operating on antagonistically structured trust.

It is important to be exact about the scope of this result, because a substantial literature models polarization very differently, and the difference is instructive. In the activity-driven and reinforcement models of \citet{baumann2020} and related work, and in the bounded-confidence tradition of \citet{hegselmann2002}, polarization is an \emph{emergent, nonlinear} phenomenon: it arises through a bifurcation as a coupling or radicalization parameter crosses a threshold, it does not require any pre-existing antagonistic structure, and it typically produces asymmetric or multi-modal opinion distributions rather than two symmetric camps. The present result is of a different and narrower kind. It is \emph{linear and structural}: the bipolar outcome is not emergent but is already encoded in the sign structure of $\Wmat$ through the balance condition $\Wmat=\Dmat\Wmat_+\Dmat$, and the theorem shows how that antagonistic \emph{trust structure} channels into symmetric division, not how polarization arises in a population that began without it. Our framework therefore does not compete with the nonlinear-emergence models on their own question---how does an unpolarized population polarize?---to which the honest answer within a purely linear theory is that it does not, absent a change in the sign structure or a coupling-induced instability of the kind treated in Part~\ref{part:coupling}. What the linear result contributes is complementary. Given antagonistically structured influence, it identifies the exact bipolar attractor and exhibits it as the sign-gauge image of the consensus dynamics. And---this is the payoff exploited in Part~\ref{part:adversarial}---because $\Wmat$ weights each agent's \emph{model} of others, the antagonistic structure can be installed by second-order means: by manipulating what each camp takes the other to believe, without any first-order disagreement about the issue itself. The emergence of the sign structure is outside the linear theory. Its consequences, once present, are exactly characterized.

\paragraph{Worked example.} Take six agents in two camps of three, with camp assignment $\Dmat=\diag(+1,+1,+1,-1,-1,-1)$: agents 1--3 in the ``$+$'' camp, agents 4--6 in the ``$-$'' camp. Let the underlying cooperative core be the doubly-stochastic matrix
\begin{equation}
\Wmat_+ = 0.4\,\Imat + 0.12\,(\mathbf J - \Imat),
\end{equation}
where $\mathbf J$ is the all-ones matrix, so each agent weights itself $0.4$ and every other agent $0.12$ (and indeed $0.4 + 5\times 0.12 = 1.0$, confirming stochasticity). Because $\Wmat_+$ is doubly stochastic, its Perron vector is uniform: $\pi=\tfrac16\one$. The signed influence matrix is $\Wmat=\Dmat\Wmat_+\Dmat$, which leaves the within-camp weights at $+0.12$ and flips the cross-camp weights to $-0.12$. Suppose initial beliefs
\begin{equation}
\bvec^0 = (0.9,\ -0.3,\ 0.6,\ 0.3,\ -0.6,\ 0.9).
\end{equation}
The limiting magnitude is computable by hand from the proposition: $\pi^\top\Dmat\bvec^0 = \tfrac16\sum_i s_i b_i^0 = \tfrac16\bigl[(0.9)+(-0.3)+(0.6)-(0.3)-(-0.6)-(0.9)\bigr] = \tfrac16(0.9-0.3+0.6-0.3+0.6-0.9) = \tfrac16(0.6) = 0.1$. So the ``$+$'' camp (agents 1--3, with $s_i=+1$) converges to exactly $+0.1$ and the ``$-$'' camp (agents 4--6, with $s_i=-1$) to exactly $-0.1$, as Figure~\ref{fig:polarization} confirms. Observe that no agent ends near zero: the antagonistic cross-camp links do not merely fail to pull the camps together, they \emph{actively repel} the two blocs to opposite poles. And observe that the initial within-camp disagreements---agent 2 began at $-0.3$ while its camp-mates 1 and 3 began at $+0.9$ and $+0.6$---are washed out: the cooperative within-camp influence brings each camp to internal agreement, while the antagonistic between-camp influence sets the camps against each other. The final state is maximally polarized and perfectly symmetric.

\begin{figure}[t]
\centering
\includegraphics[width=0.62\linewidth]{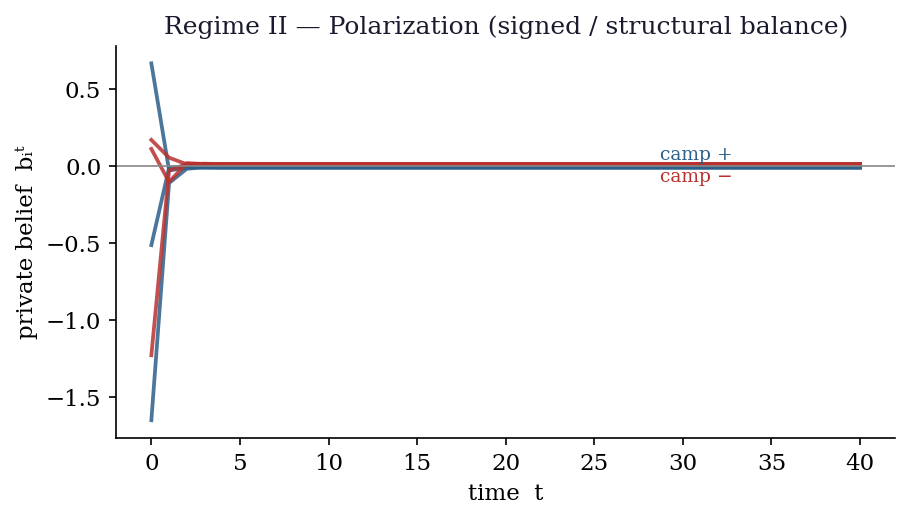}
\caption{The polarization regime, for the six-agent balanced example. Three ``$+$''-camp trajectories (blue) converge to $+0.1$ and three ``$-$''-camp trajectories (red) to $-0.1$: equal magnitude, opposite sign. Within-camp initial disagreements are erased by cooperative influence; the camps are driven apart by antagonistic influence.}
\label{fig:polarization}
\end{figure}

\section{Regime III: Entrenched distortion and the mechanism of pluralistic ignorance}\label{sec:distortion}

The first two regimes assumed veridical perception, so that the misperception forcing vanished and beliefs evolved by pure influence. The third regime is the one for which the whole apparatus was built: it is what happens when perception is imperfect and expression is not sincere, so that the forcing term $\hvec^t$ does not vanish, and the population settles into a fixed point \emph{displaced} from where sincere, veridical dynamics would have taken it. This displaced fixed point is entrenched collective distortion, and its most important special case is pluralistic ignorance: a stable state in which the group's expressed and perceived beliefs sustain a norm that few or none of its members privately hold.

We first record the fixed-point structure in the case of a convergent forcing, which is again a classical form.

\begin{proposition}[Stationary distortion; Friedkin--Johnsen form]\label{prop:distortion}
If $\spr(\Wmat)<1$ and the forcing converges, $\hvec^t\to\hvec^*$, then private beliefs converge to the unique fixed point
\begin{equation}
\bvec^* = (\Imat-\Wmat)^{-1}\hvec^* = \sum_{m=0}^\infty \Wmat^m \hvec^*,
\end{equation}
independent of initial condition. The displacement $\bvec^*$ from the origin is the cumulative distortion the network impresses on belief through misperception.
\end{proposition}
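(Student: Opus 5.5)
The plan is to treat the belief recursion of Theorem~\ref{thm:decomp}, $\bvec^{t+1}=\Wmat\bvec^t+\hvec^t$, as a linear system with a convergent input, and show that the solution tracks the fixed point of the limiting system. First, since $\spr(\Wmat)<1$, the matrix $\Imat-\Wmat$ is invertible: $1$ is not an eigenvalue. The Neumann series $\sum_{m\ge0}\Wmat^m$ converges to $(\Imat-\Wmat)^{-1}$, because Gelfand's formula (as in the proof of Theorem~\ref{thm:closed}) gives $\lVert\Wmat^m\rVert\le C r^m$ for any $r$ with $\spr(\Wmat)<r<1$ and some constant $C$. This already establishes the second equality in the statement. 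It also shows that $\bvec^*:=(\Imat-\Wmat)^{-1}\hvec^*$ is the unique solution of $\bvec=\Wmat\bvec+\hvec^*$.

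Next, define the deviation $\mathbf e^t:=\bvec^t-\bvec^*$. Subtracting the fixed-point identity from the recursion gives
\[
\mathbf e^{t+1}=\Wmat\mathbf e^t+\bm\delta^t,\qquad \bm\delta^t:=\hvec^t-\hvec^*\to\mathbf 0 .
\]
Unrolling this yields
\[
\mathbf e^{t}=\Wmat^{t}\mathbf e^{0}+\sum_{s=0}^{t-1}\Wmat^{t-1-s}\bm\delta^{s}.
\]
The first term is bounded by $C r^t\lVert\mathbf e^0\rVert$, so it vanishes for every initial condition. This is the source of the independence from $\bvec^0$.

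The main obstacle is the convolution term, which is a discrete Toeplitz/Cesàro-type argument. Fix $\varepsilon>0$ and choose $T$ with $\lVert\bm\delta^s\rVert<\varepsilon$ for $s\ge T$. Split the sum at $s=T$. The tail is bounded by $C\varepsilon\sum_{k\ge0}r^k=C\varepsilon/(1-r)$. The head is bounded by $C r^{t-T}\sum_{s<T}\lVert\bm\delta^s\rVert$, a fixed finite quantity multiplied by a vanishing geometric factor. Taking $\limsup_t$ and then $\varepsilon\to0$ gives $\mathbf e^t\to\mathbf 0$.

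Two points need care in the write-up. One must work with the bound $\lVert\Wmat^m\rVert\le Cr^m$ rather than assuming $\lVert\Wmat\rVert<1$, since the row-sum condition~\eqref{eq:rowsum} only gives $\lVert\Wmat\rVert_\infty\le1$. One should also note that the hypothesis $\hvec^t\to\hvec^*$ is taken as given here, since in the closed system the forcing is endogenous, and its convergence is what the later closure theorem supplies. The interpretive sentence about cumulative distortion then follows from reading the series term by term: $\Wmat^m\hvec^*$ is the forcing propagated through $m$ steps of influence.
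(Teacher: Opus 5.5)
Your proof is correct and follows the same route as the paper's: invert $\Imat-\Wmat$ via the Neumann series, identify $\bvec^*$ as the unique fixed point for the limiting forcing, and show that the deviation $\bvec^t-\bvec^*$ dies out under powers of $\Wmat$. The only difference is that you actually carry out the splitting argument for the convolution term $\sum_s \Wmat^{t-1-s}(\hvec^s-\hvec^*)$ using $\lVert\Wmat^m\rVert\le Cr^m$, whereas the paper handles the non-constant forcing in a single clause (``a forcing converging to $\hvec^*$ adds a vanishing perturbation that does not affect the limit''); your version supplies that missing detail rather than departing from the paper's argument.
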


This is the interactive-belief realization of the \citet{friedkin1990} model of persistent disagreement, but with a crucial difference: there the forcing (the ``anchor'' to initial opinion) is exogenously posited, whereas here it is \emph{generated} by the perception and expression maps, and by Corollary~\ref{cor:channels} it carries the two-channel structure. This is what lets us say something the Friedkin--Johnsen model cannot: exactly which part of the distortion is transient and which is permanent, and hence exactly what pluralistic ignorance is made of. That determination is the content of the closure theorem (Section~\ref{sec:synthesis}); here we preview its consequence and make it concrete. At the fixed point of the closed system, the perception-lag channel vanishes---perception catches up to expression---while the concealment channel persists, and the steady-state concealment is a nonsingular linear image of the expressive bias,
\begin{equation}
\cvec^* = \Lop\,\bias,
\label{eq:cstar-preview}
\end{equation}
for an invertible matrix $\Lop$ built from the influence and conformity parameters. The philosophical payoff of~\eqref{eq:cstar-preview}, which we will state as a theorem and prove in full, is a sharp and somewhat surprising claim: \emph{persistent pluralistic ignorance requires a bias in expression}, $\bias\neq\mathbf 0$; conformity pressure $\alpha$ \emph{amplifies} the distortion without bound as $\alpha\to 1$, but it cannot \emph{generate} distortion from sincere expression. A population all of whose members are sincere ($\bias=\mathbf 0$), however conformist, converges to no collective self-deception at all. Conformity is the amplifier of pluralistic ignorance, not its source; the source is whatever makes expression systematically depart from belief.

\paragraph{Worked example.} Consider five agents with the sub-stochastic, symmetric influence matrix
\begin{equation}
\Wmat = 0.24\,\Imat + 0.09\,(\mathbf J-\Imat),
\end{equation}
so each agent weights itself $0.24$ and each of the four others $0.09$, giving absolute row sums $0.24+4(0.09)=0.60=\bar w<1$; the network is cohesive but ``leaky'' (beliefs are not conserved, reflecting an outside anchor). Set uniform conformity $\alpha=0.5$---agents split their expression evenly between conviction and perceived climate---and perception rate $\gamma=0.4$. Crucially, introduce an expressive bias
\begin{equation}
\bias = (0.5,\ -0.2,\ 0.3,\ 0.1,\ -0.4),
\end{equation}
representing, say, differing social-desirability pressures: agent 1 is pulled to overstate the proposition ($+0.5$), agent 5 to understate it ($-0.4$), and so on. What collective state results?

Solving the closed system to its fixed point (using the closed forms of Theorem~\ref{thm:closure}, which the reader can reconstruct) yields private beliefs, expressed beliefs, and the concealment gap
\begin{align}
\bvec^* &\approx (0.064,\ 0.134,\ 0.084,\ 0.104,\ 0.154),\\
\xvec^* &\approx (0.600,\ 0.008,\ 0.431,\ 0.262,\ -0.160),\\
\cvec^* = \xvec^*-\bvec^* &\approx (0.536,\ -0.126,\ 0.347,\ 0.158,\ -0.314),
\end{align}
with concealment norm $\lVert\cvec^*\rVert\approx 0.739$. Read these three lines carefully, because they \emph{are} pluralistic ignorance, displayed. The private beliefs $\bvec^*$ are all mild and clustered near $+0.1$: privately, the five agents hold quite similar, moderate views. But the expressed beliefs $\xvec^*$ are wildly dispersed---agent 1 loudly professes $+0.60$ while agent 5 professes $-0.16$---and every agent's expression differs substantially from its belief, by the amounts recorded in $\cvec^*$. Each agent, perceiving this dispersed and biased climate of \emph{expression} and conforming to it, professes something far from its own moderate conviction. The population has settled into a stable state in which what is said bears little relation to what is thought, and---because perception tracks expression---what each agent takes the others to believe is the distorted expressed climate, not the moderate private reality. No one is lying in any active sense; each is simply conforming, at rate $\alpha=0.5$, to a perceived climate that is itself the aggregate of everyone's biased conformity. The distortion is a fixed point, self-sustaining, and (as we will show) impervious to better perception.

Figure~\ref{fig:distortion} displays the two layers and the two forcing channels over time. The left panel shows private beliefs (blue) and expressed beliefs (red dashed) converging to visibly different levels---the concealment gap made visual. The right panel shows the decisive dynamical fact: the perception-lag channel (its norm) decays to zero, while the concealment channel persists at a constant positive level. The pluralistic-ignorance gap is exactly the residue that remains after perception has done all the correcting it can.

\begin{figure}[t]
\centering
\includegraphics[width=0.9\linewidth]{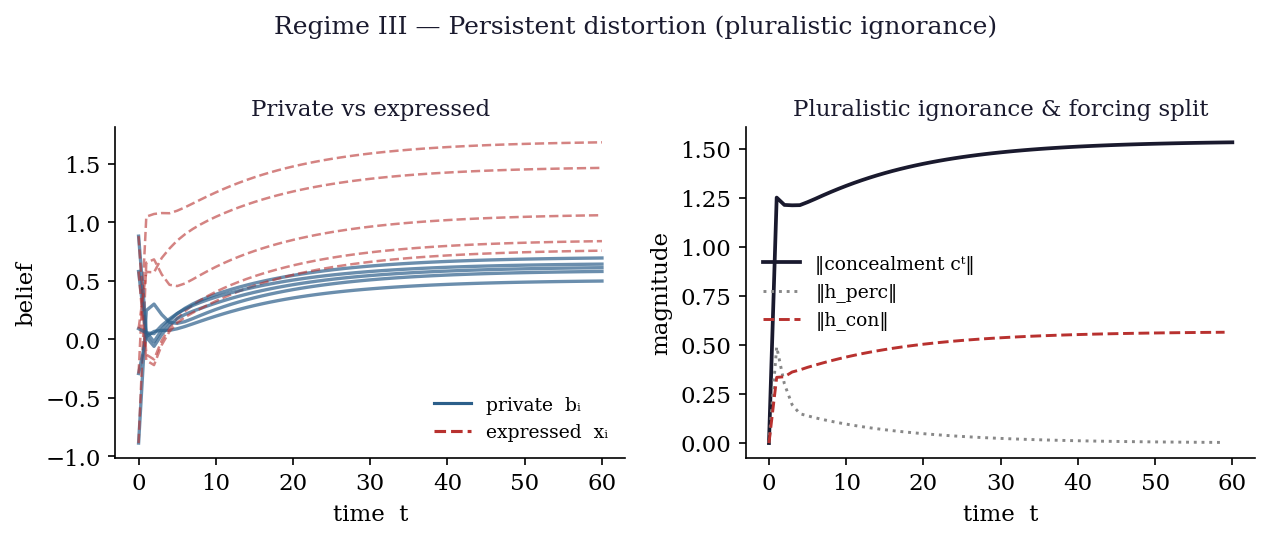}
\caption{The distortion regime and the anatomy of pluralistic ignorance. Left: private beliefs (blue) and expressed beliefs (red dashed) converge to different levels; the persistent gap between them is the concealment. Right: the perception-lag channel decays to zero (perception catches up to expression) while the concealment channel persists---the collective distortion is the residue of concealment, not of faulty perception.}
\label{fig:distortion}
\end{figure}

The dependence of this distortion on the two parameters $\alpha$ (conformity) and $\gamma$ (perception rate) is itself illuminating, and we display it as a phase diagram. Figure~\ref{fig:regimemap} plots the steady-state concealment norm $\lVert\cvec^*\rVert$ over the $(\alpha,\gamma)$ plane, holding the bias $\bias$ fixed. Read through the closure theorem, the figure is an \emph{amplification law}, not a phase transition: at $\alpha=0$ the concealment equals the bias exactly ($\cvec^*=\bias$, no amplification), and as $\alpha$ increases the distortion grows, diverging as $\alpha\to 1$, where agents abandon private conviction entirely and the population loses all anchoring to what its members actually believe. The perception rate $\gamma$, strikingly, does \emph{not} affect the fixed point at all---the closure theorem's characterization of $\cvec^*$ is $\gamma$-free---though it governs how fast the distortion is reached. This is the formal signature of the claim that pluralistic ignorance is a disease of the expression environment ($\alpha$ and $\bias$), not of the perception environment ($\gamma$): you cannot fix it by helping people perceive one another better, because the fixed point does not depend on how well they perceive.

\begin{figure}[t]
\centering
\includegraphics[width=0.62\linewidth]{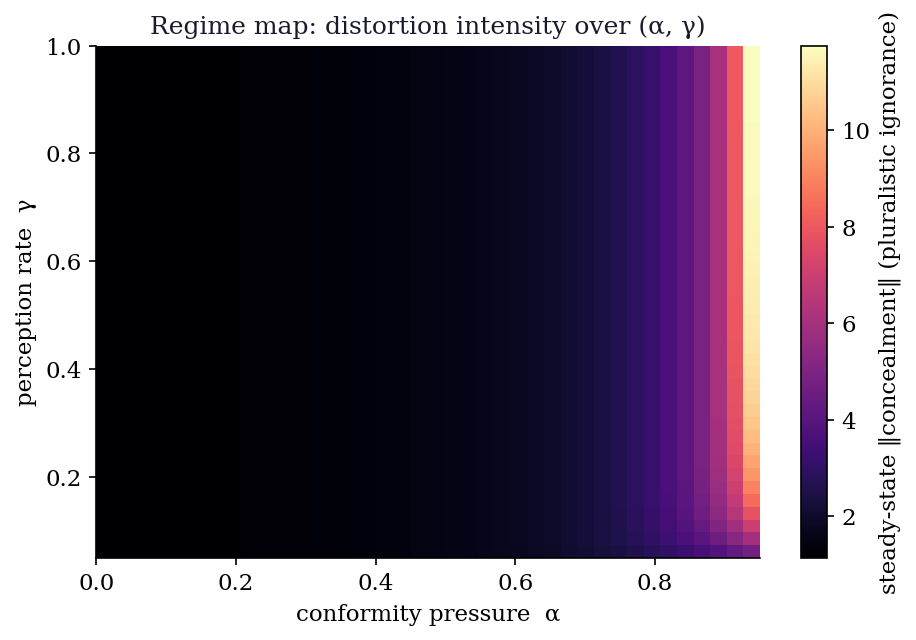}
\caption{Amplification of expressive bias into collective distortion. The steady-state concealment norm $\lVert\cvec^*\rVert$ over the conformity ($\alpha$) and perception-rate ($\gamma$) plane; brighter is more distortion. Distortion grows with conformity, diverging as $\alpha\to 1$, and is independent of the perception rate $\gamma$: pluralistic ignorance is a pathology of the expression environment, not of perception.}
\label{fig:regimemap}
\end{figure}

\section{Regime IV: Perpetual instability}\label{sec:oscillation}

The three regimes so far all settle: to agreement, to symmetric division, or to entrenched distortion. The fourth regime is the failure to settle at all. It arises when the influence operator has an eigenvalue on the unit circle other than $1$---a mode that neither decays nor grows but rotates---so that beliefs cycle indefinitely without converging.

\begin{proposition}[Oscillation]\label{prop:oscillation}
Suppose $\hvec^t\equiv\mathbf 0$ and $\Wmat$ has an eigenvalue $\lambda$ with $\lvert\lambda\rvert=1$ and $\lambda\neq 1$. Then for generic initial conditions $\bvec^t$ does not converge.
\end{proposition}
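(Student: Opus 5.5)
With $\hvec^t\equiv\mathbf 0$, Theorem~\ref{thm:decomp} reduces the dynamics to the linear recursion $\bvec^{t}=\Wmat^{t}\bvec^{0}$. The plan is to isolate the component of $\bvec^{0}$ along the unimodular eigenvalue $\lambda$ and show that this component rotates forever without decaying. That rotation is incompatible with convergence.

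\emph{Step 1 (reduction to a single mode).} Let $\mathbf u^\top$ be a left eigenvector of $\Wmat$ for $\lambda$, possibly complex. Set $y^t:=\mathbf u^\top\bvec^{t}$. Then
\[
y^{t+1}=\mathbf u^\top\Wmat\bvec^t=\lambda y^t, \qquad\text{so}\qquad y^t=\lambda^t y^0 .
\]
The scalar $y^t$ is a fixed linear functional of $\bvec^t$, hence a continuous one. So if $\bvec^t$ converged, $y^t$ would converge too. Using a left eigenvector avoids any need for Jordan structure or diagonalizability.

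\emph{Step 2 (the mode does not converge).} Suppose $y^0\neq0$ and $y^t\to y^\infty$. Then $y^{t+1}-y^t=(\lambda-1)\lambda^t y^0$ must tend to $0$. But its modulus is the constant $\lvert\lambda-1\rvert\,\lvert y^0\rvert$, which is positive because $\lvert\lambda\rvert=1$ and $\lambda\neq1$. This is a contradiction, so $y^t$, and hence $\bvec^t$, fails to converge whenever $\mathbf u^\top\bvec^0\neq0$.

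\emph{Step 3 (genericity).} The exceptional initial conditions are those with $\mathbf u^\top\bvec^0=0$. This set is the complex hyperplane $\ker\mathbf u^\top$ intersected with $\real^N$. Since $\mathbf u\neq\mathbf 0$, at least one of $\operatorname{Re}\mathbf u$ and $\operatorname{Im}\mathbf u$ is nonzero, so the intersection lies in a proper real subspace of $\real^N$. It therefore has Lebesgue measure zero and empty interior, which is the sense of ``generic'' I would adopt.

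The main obstacle is the real-versus-complex bookkeeping when $\lambda$ is nonreal, as in the case $\lambda=-1$ or a rotation pair. Working with the complex functional $\mathbf u^\top$ on real vectors, as above, handles it cleanly. I would also remark that the row-sum condition~\eqref{eq:rowsum} forces $\spr(\Wmat)\le1$, so $\lambda$ is on the spectral boundary. For generic data the iterates then stay bounded and cycle rather than diverge, matching the ``perpetual instability'' reading, though boundedness is not needed for the non-convergence claim.
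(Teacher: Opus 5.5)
Your proof is correct and is essentially the paper's modal argument: the component of $\bvec^0$ along the unimodular eigenvalue $\lambda$ evolves as $\lambda^t$, which cannot converge, so $\bvec^t$ fails to converge whenever that component is nonzero. Using a left eigenvector as the coordinate functional makes this rigorous more cleanly than the paper's basis decomposition. It evolves exactly as $\lambda^t$ whatever the Jordan structure, so you need no analogue of the paper's parenthetical Jordan-block remark, and your Step~3 makes precise the real-data genericity that the paper leaves as ``nonzero projection onto $\mathbf w$''. One small slip: $\lambda=-1$ is real, so the complex bookkeeping is needed only for a genuinely nonreal pair.
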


The interpretation is that certain structures of epistemic influence are incompatible with \emph{any} stable collective belief, correct or distorted, agreed or divided. The population is condemned to churn---to cycle perpetually through configurations of belief, never resting. The simplest instance is pure mutual antagonism between two agents, and it is worth seeing because it is so stark.

\paragraph{Worked example.} Two agents, each of whom moves to the exact opposite of what it takes the other to believe:
\begin{equation}
\Wmat = \begin{pmatrix} 0 & 1 \\ 1 & 0 \end{pmatrix},
\end{equation}
with initial beliefs $\bvec^0=(1,\,-0.3)$. This matrix has eigenvalues $+1$ and $-1$; the eigenvalue $-1$, on the unit circle and distinct from $1$, is the obstruction of Proposition~\ref{prop:oscillation}. The dynamics are transparent: each round, agent 1 adopts agent 2's previous belief and vice versa, so the two beliefs simply swap every step. Starting from $(1,-0.3)$, the sequence is $(1,-0.3)\to(-0.3,1)\to(1,-0.3)\to\cdots$, a period-two cycle that never settles (Figure~\ref{fig:oscillation}). No consensus forms, no stable polarization, no fixed distortion---just endless exchange. While this two-agent example is deliberately minimal, the phenomenon is general: any influence structure with a unit-modulus eigenvalue off the real axis produces persistent quasi-periodic churn, and such structures are exactly those that sit at the boundary of the stability condition $\bar w<1$ we will require for the closure theorem.

\begin{figure}[t]
\centering
\includegraphics[width=0.55\linewidth]{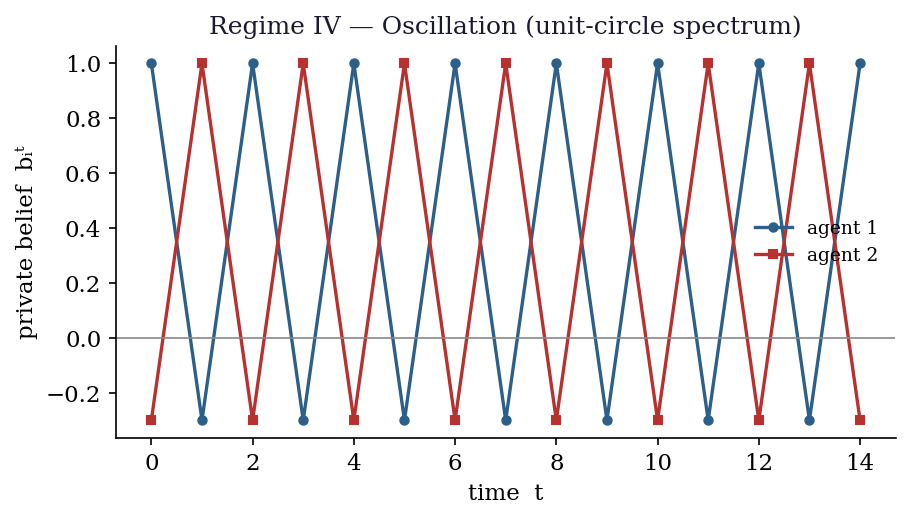}
\caption{The oscillation regime. Two purely antagonistic agents exchange positions every round in a persistent period-two cycle: no consensus, no stable disagreement, no rest. The eigenvalue $\lambda=-1$ on the unit circle is the obstruction to convergence.}
\label{fig:oscillation}
\end{figure}

\section{The four regimes as limits of one process}\label{sec:synthesis}

We have presented four regimes of collective epistemic life. We now prove that they are not four models but four limits of one generative process: the closed system of belief update~\eqref{eq:update}, expression map~\eqref{eq:expression}, and perception map~\eqref{eq:perception}. The proof requires the closure theorem that underwrites the claims we have made along the way. That theorem establishes three things: the coupled system converges, the perception channel vanishes at its fixed point, and steady-state pluralistic ignorance is a nonsingular linear image of expressive bias, amplified but not generated by conformity. It is the formal heart of Part~\ref{part:regimes}.

\subsection{The stacked operator}

To analyze the coupled dynamics of all three layers at once, we stack them into a single state vector. Let $\Mmat^t\in\real^{N(N-1)}$ collect the off-diagonal perceptions $B_{ij}^t$ ($i\neq j$), and write the full state as $z^t=(\bvec^t,\Mmat^t,\xvec^t)$, of dimension $n=N+N(N-1)+N$. We introduce the operators that assemble the three maps. Let $\Dmat_W=\diag(W_{ii})$ be the self-weight matrix; let $\mathbf A$ be the operator contracting the off-diagonal perception block into the belief update, $(\mathbf A\Mmat)_i=\sum_{j\neq i}W_{ij}M_{ij}$; let $\mathbf G$ be the normalized climate operator producing the perceived climate $\widehat c_i$ from the perceptions; and let $\mathbf S$ be the broadcast operator that supplies each perceiver with the target's expression, $(\mathbf S\xvec)_{ij}=x_j$. With $\Dmat_\alpha=\diag(\alpha_i)$ the conformity matrix and $\Dmat_\gamma$ the diagonal matrix of perception rates $\gamma_{ij}$, the three maps assemble into a single affine recursion,
\begin{equation}
z^{t+1} = \Top\, z^t + \tilde\bias, \qquad
\Top = \begin{pmatrix}
\Dmat_W & \mathbf A & \mathbf 0\\[2pt]
\mathbf 0 & \Imat-\Dmat_\gamma & \Dmat_\gamma\mathbf S\\[2pt]
(\Imat-\Dmat_\alpha)\Dmat_W & (\Imat-\Dmat_\alpha)\mathbf A+\Dmat_\alpha\mathbf G & \mathbf 0
\end{pmatrix},
\qquad \tilde\bias = \begin{pmatrix}\mathbf 0\\ \mathbf 0\\ \bias\end{pmatrix},
\label{eq:stacked}
\end{equation}
so that every question about the joint evolution of private, perceived, and expressed belief reduces to a question about the single stacked operator $\Top$. The block structure reads off the three maps directly: the top row is the belief update (self-weight on belief plus contraction of perceptions), the middle row is the perception update (retention $\Imat-\Dmat_\gamma$ plus tracking $\Dmat_\gamma\mathbf S$ of expression), and the bottom row is the expression map (conformity-weighted blend of the updated belief and the climate). The whole of the interactive-belief dynamics is in this one matrix, and its spectral properties decide everything.

\subsection{The closure theorem}

\begin{theorem}[Convergence and channel separation of the closed system]\label{thm:closure}
Suppose the influence weights satisfy $\bar w=\max_i\sum_j\lvert W_{ij}\rvert<1$, the conformity pressures satisfy $\bar\alpha=\max_i\alpha_i<1$, and the perception rates satisfy $\gamma_{ij}\in[\gamma_{\min},1]$ with $\gamma_{\min}>0$. Then:
\begin{enumerate}[label=\textup{(\alph*)},leftmargin=2.2em,itemsep=2pt]
\item \textup{(Convergence.)} $\spr(\Top)<1$, and $z^t$ converges geometrically, from every initial condition, to the unique fixed point $z^*=(\Imat-\Top)^{-1}\tilde\bias$.
\item \textup{(Perception channel vanishes.)} At $z^*$ the perception layer is exact: $M_{ij}^*=x_j^*$ for all $i\neq j$, so the perception-lag matrix vanishes identically, $\Pmat^*=\mathbf 0$, and the asymptotic forcing is pure concealment.
\item \textup{(Concealment is a linear image of bias.)} The steady-state concealment is
\begin{equation}
\cvec^* = \Lop\,\bias, \qquad
\Lop = \bigl[\Imat-(\Imat-\Dmat_W)^{-1}\bar{\mathbf A}\bigr](\Imat-\Kop)^{-1}, \qquad
\Kop = (\Imat-\Dmat_\alpha)(\Imat-\Dmat_W)^{-1}\bar{\mathbf A} + \Dmat_\alpha\widehat{\mathbf G},
\label{eq:cstar}
\end{equation}
where $\bar{\mathbf A}$ is the off-diagonal part of $\Wmat$ and $\widehat{\mathbf G}$ the climate operator restricted to row-constant perception. The matrix $\Lop$ is nonsingular, so $\cvec^*\neq\mathbf 0$ if and only if $\bias\neq\mathbf 0$.
\item \textup{(Conformity amplifies but does not generate.)} For uniform conformity $\alpha_i=\alpha$, $\spr(\Kop)\to\spr(\widehat{\mathbf G})=1$ as $\alpha\to 1$, so $\sup_{\lVert\bias\rVert=1}\lVert\cvec^*\rVert$ diverges: conformity amplifies expressive bias without bound, but with $\bias=\mathbf 0$ the fixed point is $z^*=\mathbf 0$ and no distortion arises however large $\alpha$.
\end{enumerate}
\end{theorem}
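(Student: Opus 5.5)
The plan is to obtain (a) by a multi-step sup-norm contraction and then get (b)--(d) by solving the fixed-point equations block by block.

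\emph{Part (a).} $\Top$ itself is not a contraction in the sup norm, because the conformity term $\Dmat_\alpha\mathbf G$ and the perception retention $\Imat-\Dmat_\gamma$ each have row mass that can reach one. I would therefore track block-wise bounds over several steps. Let $m=\lVert z^t\rVert_\infty$.
\begin{itemize}[leftmargin=2.2em,itemsep=1pt]
\item \emph{One step.} The belief row gives $\lvert b^{t+1}_i\rvert\le \sum_j\lvert W_{ij}\rvert\,m\le\bar w m$. The expression row is a $(1-\alpha_i)$-weighted copy of the belief row plus an $\alpha_i$-weighted average. Since $\sum_{j\neq i}\lvert W_{ij}\rvert B_{ij}/\sum_{j\neq i}\lvert W_{ij}\rvert$ has absolute weights summing to one, $\lvert x^{t+1}_i\rvert\le\theta m$ with $\theta=(1-\bar\alpha)\bar w+\bar\alpha<1$. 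The perception row is a convex combination, so $\lvert M^{t+1}_{ij}\rvert\le m$.
\item \emph{Second step.} Perceptions now track expressions that are already bounded by $\theta m$, so $\lvert M^{t+2}_{ij}\rvert\le(1-\gamma_{\min}(1-\theta))m$.
\item \emph{Third step.} The climate is read from these shrunken perceptions, so $\lvert x^{t+3}\rvert$ drops strictly below $m$ as well.
\end{itemize}
Hence $\lVert\Top^3\rVert_\infty\le q<1$ for an explicit $q$ built from $\bar w,\bar\alpha,\gamma_{\min}$. Then $\spr(\Top)\le q^{1/3}<1$, the Neumann series converges, and $z^t\to z^*=(\Imat-\Top)^{-1}\tilde\bias$ geometrically from any initial condition. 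I expect this step to be the main obstacle. In particular, the three-step bookkeeping must correctly account for the expression row using the \emph{updated} belief; this is why the third row of $\Top$ contains $\Dmat_W$ and $\mathbf A$ acting on the old state.

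\emph{Part (b).} At $z^*$ the perception row reads $M^*=(\Imat-\Dmat_\gamma)M^*+\Dmat_\gamma\mathbf S\xvec^*$. Since every $\gamma_{ij}>0$, $\Dmat_\gamma$ is invertible, so $M^*=\mathbf S\xvec^*$, i.e.\ $M^*_{ij}=x^*_j$, and $\Pmat^*=\mathbf 0$. Combined with Corollary~\ref{cor:channels}, the asymptotic forcing is pure concealment.

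\emph{Part (c).} With perceptions row-constant, $\mathbf A M^*=\bar{\mathbf A}\xvec^*$ and $\mathbf G M^*=\widehat{\mathbf G}\xvec^*$.
\begin{itemize}[leftmargin=2.2em,itemsep=1pt]
\item The belief row gives $(\Imat-\Dmat_W)\bvec^*=\bar{\mathbf A}\xvec^*$. Here $\Imat-\Dmat_W$ is invertible because $\lvert W_{ii}\rvert\le\bar w<1$.
\item The expression row gives $(\Imat-\Kop)\xvec^*=\bias$.
\item $\Imat-\Kop$ is invertible because $\lVert\Kop\rVert_\infty<1$. The row mass of $(\Imat-\Dmat_W)^{-1}\bar{\mathbf A}$ is at most $(\bar w-\lvert W_{ii}\rvert)/(1-\lvert W_{ii}\rvert)<1$, using $\lvert 1-W_{ii}\rvert\ge 1-\lvert W_{ii}\rvert$, and $\widehat{\mathbf G}$ has absolute row mass at most one.
\end{itemize}
Subtracting gives $\cvec^*=\xvec^*-\bvec^*=\Lop\bias$ with $\Lop$ as in \eqref{eq:cstar}. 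For nonsingularity, I would write the first factor as $\Imat-(\Imat-\Dmat_W)^{-1}\bar{\mathbf A}=(\Imat-\Dmat_W)^{-1}(\Imat-\Wmat)$, which is invertible since $\spr(\Wmat)\le\bar w<1$. So $\Lop$ is a product of invertible matrices, and $\cvec^*\neq\mathbf 0$ if and only if $\bias\neq\mathbf 0$.

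\emph{Part (d).} Under uniform $\alpha$, $\Kop$ depends affinely on $\alpha$, while the first factor of $\Lop$ is $\alpha$-free and invertible. At $\alpha=1$ we have $\Kop=\widehat{\mathbf G}$, and $\widehat{\mathbf G}\one=\one$ because the climate is a normalized weighted average. (I would state this, and hence $\spr(\widehat{\mathbf G})=1$, for nonnegative weights; for signed weights I would argue through $\lvert W_{ij}\rvert$ as in \eqref{eq:expression}.) Therefore $\Imat-\Kop\to\Imat-\widehat{\mathbf G}$, which is singular. By continuity of the spectrum, $\spr(\Kop)\to1$, and $\lVert(\Imat-\Kop)^{-1}\rVert\to\infty$. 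Since the fixed factor is bounded below, $\sup_{\lVert\bias\rVert=1}\lVert\Lop\bias\rVert\to\infty$. Finally, $\bias=\mathbf 0$ gives $\tilde\bias=\mathbf 0$, so $z^*=\mathbf 0$ for every $\alpha<1$.
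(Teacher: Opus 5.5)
Your proposal is correct and takes essentially the same route as the paper: a multi-step block sup-norm contraction for (a), then block-by-block solution of the fixed-point equations for (b)--(d), with only cosmetic differences such as your factorization $\Imat-(\Imat-\Dmat_W)^{-1}\bar{\mathbf A}=(\Imat-\Dmat_W)^{-1}(\Imat-\Wmat)$ for the nonsingularity of the first factor of $\Lop$. Your third step is redundant: after two steps every block is already bounded by $\max\{\theta,\,1-\gamma_{\min}(1-\theta)\}\,m$, which is exactly the paper's two-step constant $\hat\rho$ (your $\theta$ is its $\kappa$), and this gives $\spr(\Top)\le\hat\rho^{1/2}$.
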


\begin{proof}
\emph{(a)} We show the homogeneous part contracts. Track the triple of sup-norms $(\lVert\bvec\rVert_\infty,\lVert\Mmat\rVert_\infty,\lVert\xvec\rVert_\infty)$ under the homogeneous map $z\mapsto\Top z$. From the belief row, using $\lvert W_{ii}\rvert+\sum_{j\neq i}\lvert W_{ij}\rvert\le\bar w$,
\[
\lvert b_i^+\rvert \le \lvert W_{ii}\rvert\,\lVert\bvec\rVert_\infty + \sum_{j\neq i}\lvert W_{ij}\rvert\,\lVert\Mmat\rVert_\infty \le \bar w\, m, \qquad m:=\max(\lVert\bvec\rVert_\infty,\lVert\Mmat\rVert_\infty).
\]
From the expression row, since the climate operator is an average and hence $\lVert\mathbf G\rVert_\infty=1$,
\[
\lvert x_i^+\rvert \le (1-\alpha_i)\,\bar w\, m + \alpha_i\,\lVert\Mmat\rVert_\infty \le \kappa\, m, \qquad \kappa := \bar w + \bar\alpha(1-\bar w) < 1,
\]
the strict inequality because $\bar w<1$ and $\bar\alpha<1$. From the perception row, each entry is a convex combination $M_{ij}^+=(1-\gamma_{ij})M_{ij}+\gamma_{ij}x_j$, so $\lvert M_{ij}^+\rvert\le(1-\gamma_{ij})\lVert\Mmat\rVert_\infty+\gamma_{ij}\lVert\xvec\rVert_\infty$. Let $V^t=\max(\lVert\bvec^t\rVert_\infty,\lVert\Mmat^t\rVert_\infty,\lVert\xvec^t\rVert_\infty)$. The belief and expression bounds give $\lVert\bvec^{t+1}\rVert_\infty,\lVert\xvec^{t+1}\rVert_\infty\le\kappa V^t$; feeding these into the perception bound one step later gives, for every $i\neq j$,
\[
\lvert M_{ij}^{t+2}\rvert \le (1-\gamma_{ij})V^{t+1} + \gamma_{ij}\,\kappa V^t \le \bigl(1-\gamma_{\min}(1-\kappa)\bigr)V^t,
\]
using $V^{t+1}\le V^t$ (which holds since each row bound is a (sub)convex combination). Hence $V^{t+2}\le\hat\rho\,V^t$ with $\hat\rho=\max\{\kappa,\,1-\gamma_{\min}(1-\kappa)\}<1$. Thus every trajectory of the homogeneous linear system decays geometrically, which for a linear map forces $\spr(\Top)\le\hat\rho^{1/2}<1$. Since $\spr(\Top)<1$, $\Imat-\Top$ is invertible and the affine system has the unique globally attracting fixed point $z^*=(\Imat-\Top)^{-1}\tilde\bias$.

\emph{(b)} At the fixed point the perception row reads $M_{ij}^*=(1-\gamma_{ij})M_{ij}^*+\gamma_{ij}x_j^*$, i.e.\ $\gamma_{ij}(M_{ij}^*-x_j^*)=0$; since $\gamma_{ij}\ge\gamma_{\min}>0$, this forces $M_{ij}^*=x_j^*$. Hence the perception-lag matrix $\Pmat^*=[M_{ij}^*-x_j^*]=\mathbf 0$, and by Corollary~\ref{cor:channels} the asymptotic forcing is the concealment channel alone.

\emph{(c)} Using $\Mmat^*=\mathbf S\xvec^*$ from part (b), the belief and expression fixed-point equations become $\bvec^*=(\Imat-\Dmat_W)^{-1}\bar{\mathbf A}\,\xvec^*$ and $\xvec^*=\Kop\xvec^*+\bias$, with $\Kop$ as stated. Row-wise, $\lVert\Kop\rVert_\infty\le(1-\alpha_i)\bar w+\alpha_i\le\kappa<1$ (using $\sum_{j\neq i}\lvert W_{ij}\rvert/(1-\lvert W_{ii}\rvert)\le\bar w$), so $\Imat-\Kop$ is invertible and $\xvec^*=(\Imat-\Kop)^{-1}\bias$. Then
\[
\cvec^*=\xvec^*-\bvec^*=\bigl[\Imat-(\Imat-\Dmat_W)^{-1}\bar{\mathbf A}\bigr]\xvec^*=\Lop\,\bias.
\]
The bracketed factor is invertible because $\spr((\Imat-\Dmat_W)^{-1}\bar{\mathbf A})\le\bar w<1$, and $(\Imat-\Kop)^{-1}$ is invertible by construction, so $\Lop$ is a product of invertibles, hence nonsingular. Therefore $\cvec^*=\mathbf 0\iff\bias=\mathbf 0$.

\emph{(d)} With uniform $\alpha$, $\Kop=(1-\alpha)(\Imat-\Dmat_W)^{-1}\bar{\mathbf A}+\alpha\widehat{\mathbf G}\to\widehat{\mathbf G}$ as $\alpha\to1$. The operator $\widehat{\mathbf G}$ is row-stochastic (a weighted average) with Perron eigenvalue $1$, so $(\Imat-\Kop)^{-1}$ has norm diverging as $\alpha\to1$, whence $\sup_{\lVert\bias\rVert=1}\lVert\cvec^*\rVert=\lVert\Lop\rVert\to\infty$. That $\bias=\mathbf 0$ yields $\cvec^*=\mathbf 0$ regardless of $\alpha$ is part (c).
\end{proof}

Numerically, for the worked instance of Section~\ref{sec:distortion} the contraction bound gives $\spr(\Top)\le 0.959$ against a directly computed $\spr(\Top)=0.922$; the fixed point matches long-run simulation to machine precision; and setting $\bias=\mathbf 0$ with $\alpha=0.5$ drives the concealment gap to exactly zero, confirming part (c) in its negative direction. The theorem thus delivers, rigorously, the three claims we have been drawing on. The coupled system always settles (a). When it settles, perception has fully caught up, and only concealment remains in the forcing (b). And the settled distortion is a faithful, invertible linear image of the expressive bias (c), amplified without bound by conformity but never conjured from sincerity (d).

\subsection{The regimes as parameter limits}

We can now exhibit the four regimes as limits of this one process.

\begin{proposition}[Regime limits]\label{prop:limits}
Consider the closed system~\eqref{eq:stacked}. \emph{(i) Consensus:} if $\alpha=0$, $\bias=\mathbf 0$, $\gamma_{ij}=1$, and $\Wmat$ is row-stochastic and primitive, then $\cvec^t\equiv\mathbf 0$ and $\Pmat^t\equiv\mathbf 0$, so $\hvec^t\equiv\mathbf 0$ and Proposition~\ref{prop:consensus} applies. \emph{(ii) Polarization:} under the same expression and perception limits but with $\Wmat=\Dmat\Wmat_+\Dmat$ balanced, $\hvec^t\equiv\mathbf 0$ and Proposition~\ref{prop:polarization} applies. \emph{(iii) Distortion:} if $\bias\neq\mathbf 0$, $\bar\alpha<1$, $\gamma_{\min}>0$, and $\bar w<1$, then by Theorem~\ref{thm:closure} the system converges with $\hvec_{\mathrm{perc}}^t\to\mathbf 0$ and persistent concealment $\cvec^*=\Lop\bias\neq\mathbf 0$; if instead $\bias=\mathbf 0$, the fixed point is $z^*=\mathbf 0$ and pluralistic ignorance vanishes asymptotically even with $\alpha>0$. \emph{(iv) Oscillation:} if $\Wmat$ has a unit-modulus eigenvalue other than $1$ and the forcing is asymptotically negligible, Proposition~\ref{prop:oscillation} applies; this requires leaving the hypothesis $\bar w<1$ of Theorem~\ref{thm:closure}, which is what places the regime at the spectral boundary.
\end{proposition}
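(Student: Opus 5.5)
The plan is to treat the four clauses separately. Each reduces to an earlier result once we check that the relevant forcing channels vanish, either identically or asymptotically. The common tool is Corollary~\ref{cor:channels}: $\hvec^t=\hvec^t_{\mathrm{perc}}+\hvec^t_{\mathrm{con}}$. It therefore suffices to control $\Pmat^t$ and $\Cmat^t$ in each case.

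For (i) and (ii) I would argue by induction on $t$ that $\cvec^t\equiv\mathbf 0$ and $\Pmat^t\equiv\mathbf 0$. With $\alpha=0$ and $\bias=\mathbf 0$, the expression map~\eqref{eq:expression} gives $x_i^{t+1}=b_i^{t+1}$, so concealment vanishes at every step after the first. With $\gamma_{ij}=1$, the perception map~\eqref{eq:perception} gives $B_{ij}^{t+1}=x_j^t$. There is a timing subtlety here: perception at $t+1$ tracks expression at $t$, so $\Pmat^{t+1}=[x_j^t-x_j^{t+1}]$ need not vanish. I expect this to be the main obstacle.

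I would handle it by fixing the indexing convention the statement implicitly uses, namely synchronous observation with veridical initialization $B_{ij}^0=x_j^0=b_j^0$. Under that convention the update~\eqref{eq:update} reads $b_i^{t+1}=\sum_j W_{ij}B^t_{ij}=\sum_jW_{ij}b_j^t$. This is exactly the DeGroot recursion, i.e.\ $\hvec^t\equiv\mathbf 0$ in Theorem~\ref{thm:decomp}, and the induction closes. Proposition~\ref{prop:consensus} then applies when $\Wmat$ is row-stochastic and primitive. The same argument is insensitive to the sign of $\Wmat$, so Proposition~\ref{prop:polarization} applies for balanced $\Wmat=\Dmat\Wmat_+\Dmat$.

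Clause (iii) is essentially a corollary of Theorem~\ref{thm:closure}. The hypotheses $\bar w<1$, $\bar\alpha<1$ and $\gamma_{\min}>0$ are exactly those of the theorem. Part (a) gives convergence. Part (b) gives $\Pmat^t\to\Pmat^*=\mathbf 0$, hence $\hvec^t_{\mathrm{perc}}\to\mathbf 0$, by continuity of the linear map $\Pmat\mapsto(\Wmat\odot\Pmat)\one$. Part (c) gives $\cvec^*=\Lop\bias$, which is nonzero exactly when $\bias\ne\mathbf 0$, because $\Lop$ is nonsingular. For $\bias=\mathbf 0$, the fixed point $z^*=(\Imat-\Top)^{-1}\tilde\bias$ is $\mathbf 0$, and global attraction from part (a) gives $\cvec^t\to\mathbf 0$ for every $\alpha\le\bar\alpha<1$.

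For (iv) I would first observe that a unit-modulus eigenvalue forces $\spr(\Wmat)\ge1$. Since $\spr(\Wmat)\le\lVert\Wmat\rVert_\infty\le\bar w$, this is incompatible with $\bar w<1$, which is the boundary statement. Then, assuming $\hvec^t\to\mathbf 0$ (the hypothesis "asymptotically negligible"), write $\bvec^t=\Wmat^t\bvec^0+\sum_{s<t}\Wmat^{t-1-s}\hvec^s$. The homogeneous part is nonconvergent by Proposition~\ref{prop:oscillation} for generic $\bvec^0$. The forcing sum must not restore convergence, and this is delicate if $\Wmat$ is not power-bounded. I would interpret "negligible" as summable, $\sum_s\lVert\hvec^s\rVert<\infty$, and use power-boundedness from $\lVert\Wmat\rVert_\infty\le1$. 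The convolution then converges to a fixed vector plus a term whose oscillating component stays bounded and cannot cancel the nondecaying eigenmode for generic $\bvec^0$, since $\bvec^0$ enters only through the homogeneous part and can be chosen to avoid the single affine condition that cancellation would impose.
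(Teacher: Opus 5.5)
\textbf{Where your proposal agrees with the paper.} Your (iii) coincides with the paper's argument, which reads it off Theorem~\ref{thm:closure}(a)--(d). Your (i)--(ii) has the paper's structure: kill both channels, then cite Propositions~\ref{prop:consensus} and~\ref{prop:polarization}, with the sign of $\Wmat$ irrelevant to the forcing. Your boundary remark in (iv), $\spr(\Wmat)\le\lVert\Wmat\rVert_\infty\le\bar w<1$, is more direct than the paper's, which routes through $\spr(\Top)$ even though the eigenvalue belongs to $\Wmat$.

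\textbf{The gap is the timing issue you flag in (i)--(ii).} With $\gamma_{ij}=1$, the perception block of~\eqref{eq:stacked}, $\Mmat^{t+1}=(\Imat-\Dmat_\gamma)\Mmat^t+\Dmat_\gamma\mathbf S\xvec^t$, gives $B^{t+1}_{ij}=x^t_j$. Hence $P^{t+1}_{ij}=x^t_j-x^{t+1}_j$. The paper's proof passes from ``$B^{t+1}_{ij}=x^t_j=b^t_j$'' straight to ``$\Pmat^t=\mathbf 0$'', and that inference does not follow. Your repair does not close the hole either. The statement explicitly fixes~\eqref{eq:stacked}, so synchronous observation is not a convention it implicitly uses; it is a different perception map. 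Veridical initialization alone does not rescue the induction: it gives $\bvec^1=\Wmat\bvec^0$, but already $b^2_i=W_{ii}b^1_i+\sum_{j\neq i}W_{ij}b^0_j$.

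\textbf{Under~\eqref{eq:stacked} the clause is in fact false.} For $t\ge1$ the beliefs obey the delayed recursion $\bvec^{t+1}=\Dmat_W\bvec^t+\bar{\mathbf A}\bvec^{t-1}$. Its forcing $h^t_i=\sum_{j\neq i}W_{ij}(b^{t-1}_j-b^t_j)$ is not identically zero. The conserved quantity is $\pi^\top\bvec^{t+1}+\pi^\top\bar{\mathbf A}\bvec^t$, not $\pi^\top\bvec^t$. Take $\Wmat=\bigl(\begin{smallmatrix}0.9&0.1\\0.5&0.5\end{smallmatrix}\bigr)$, $\bvec^0=(1,0)$ and veridical initialization. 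Then $\hvec^1=(-0.05,\,0.05)$, and the beliefs converge to $11/14$, not to $\pi^\top\bvec^0=5/6$. With veridical initialization, the two limits agree for every $\bvec^0$ exactly when the self-weights $W_{ii}$ are uniform, as in the paper's worked example.

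\textbf{What you have actually proved, and what follows from it.} You have proved (i)--(ii) for a synchronous perception map. That is a defensible reading of the paper's description of $\gamma_{ij}=1$ as instantaneous observation, but you should present it as a change of model. Once you do, (iii) should be argued in the same model if all four clauses are to be limits of \emph{one} process, which is the proposition's point. The fixed-point conclusions (b)--(d) of Theorem~\ref{thm:closure} carry over, because the lag is invisible at stationarity. The convergence proof in (a), however, is written for the lagged $\Top$ and needs rechecking.

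\textbf{A lesser point in (iv).} Your claim that $\bvec^0$ enters only through the homogeneous part holds for an exogenous forcing sequence. In the closed system, $\hvec^t$ is itself linear in the initial state. Genericity therefore requires the total coefficient of the rotating mode to be a nonzero functional of $z^0$. The paper avoids this only by taking $\hvec\equiv\mathbf 0$.
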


\begin{proof}
Clauses (i) and (ii): $\alpha=0,\bias=\mathbf 0$ give $x_i^t=b_i^t$, hence $\cvec^t=\mathbf 0$; $\gamma=1$ gives $B_{ij}^{t+1}=x_j^t=b_j^t$, hence $\Pmat^t=\mathbf 0$; both channels vanish, so $\hvec^t\equiv\mathbf 0$ and the sign structure of $\Wmat$ (unsigned for (i), balanced for (ii)) is immaterial to the forcing, leaving the pure influence dynamics of the cited propositions. Clause (iii) is Theorem~\ref{thm:closure}(b)--(d). Clause (iv) is Proposition~\ref{prop:oscillation}, and the parenthetical observes that a unit-modulus eigenvalue violates $\spr(\Top)<1$, which under $\bar w<1$ cannot occur, so the regime lies outside the theorem's hypotheses.
\end{proof}

The practical and philosophical force of Proposition~\ref{prop:limits} is that movement between the regimes is \emph{continuous in the parameters of social epistemic life}. Raising conformity slides a consensual population toward entrenched distortion; flipping the sign structure of trust slides it toward polarization; pushing an influence eigenvalue onto the unit circle slides it toward perpetual churn. The same instrument that explains how a group comes to agree explains, at other parameter settings, how it comes to divide, to deceive itself, or to never rest. Consensus, polarization, pluralistic ignorance, and instability are not four different sociologies requiring four different theories; they are four regions of a single parameter space, and a population can be moved among them---by circumstance or by design---through changes in conformity, in the sign structure of its trust, and in the coherence of its influence. It is this last possibility, movement \emph{by design}, that Part~\ref{part:adversarial} takes up.

\part{Coupling Across Propositions}\label{part:coupling}

\section{Activating the third mode: inferential coupling}\label{sec:coupling}

Until now every operation has acted on a single proposition slice, and the third mode of the tensor---the proposition index $p$---has been a passive label. This is the proposition-separability we flagged in Section~\ref{sec:tensor}: with the propositions uncoupled, the tensor is merely a stack of independent matrices, and nothing genuinely tensorial occurs. But propositions are not independent in real epistemic life. To believe one thing commits one, on pain of incoherence, to believing others; evidence for one proposition is evidence for or against its logical and evidential neighbors; and a skilled manipulator can exploit these links, moving belief on a target proposition by moving belief on a proposition inferentially tied to it. Representing this requires letting the third mode act, and that is a genuinely multilinear operation.

Let $\Lop\in\real^{K\times K}$ be an \emph{inference operator}, with $L_{pq}$ the weight that belief in proposition $q$ lends, through logical or evidential connection, to belief in proposition $p$. Setting $\Lop=\Imat$ recovers separability (each proposition updates only from itself). The coupled belief update contracts the tensor along \emph{both} the agent mode and the proposition mode:
\begin{equation}
b_i^{t+1,p} = \sum_{q=1}^K L_{pq}\sum_{j=1}^N W_{ij}\, B_{ij}^{t,q}.
\end{equation}
Under veridical perception this is the two-sided multilinear map $\bvec^{t+1}=\Wmat\bvec^t\Lop^\top$ acting on the $N\times K$ belief matrix, and its vectorization is governed by a Kronecker product,
\begin{equation}
\vecop(\bvec^{t+1}) = (\Lop\otimes\Wmat)\,\vecop(\bvec^t),
\end{equation}
so the coupled dynamics are controlled by the spectrum of $\Lop\otimes\Wmat$, which by a standard fact \citep{horn2013} is exactly the set of products of the two spectra. This product structure has a consequence that is both mathematically clean and, we will argue, of real epistemic and adversarial significance.

\begin{proposition}[Product-spectrum law]\label{prop:product}
The eigenvalues of $\Lop\otimes\Wmat$ are the products $\mu\lambda$ over eigenvalues $\mu$ of $\Lop$ and $\lambda$ of $\Wmat$. Consequently: \emph{(i)} if both $\Lop$ and $\Wmat$ are row-stochastic and primitive, the coupled system reaches a joint consensus $(\pi_L\otimes\pi_W)^\top\vecop(\bvec^0)$ across agents and propositions, inferential spillover blending the propositionwise consensus values; \emph{(ii)} if $\spr(\Wmat)<1$ but $\spr(\Lop)\spr(\Wmat)>1$, the coupled beliefs diverge although every proposition slice is stable in isolation---mutually reinforcing inference destabilizes an otherwise-stable network; \emph{(iii)} if $\Lop=\Dmat_L\Lop_+\Dmat_L$ is balanced signed, beliefs polarize across the proposition mode, contradictory propositions acquiring mirror-image belief profiles.
\end{proposition}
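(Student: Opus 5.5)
The plan is to first establish the spectral statement and then read off (i)--(iii) from it, each by reduction to the corresponding single-slice result already available (Propositions~\ref{prop:consensus} and~\ref{prop:polarization}).

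For the spectrum I will use Schur triangularization. Write $\Lop=U_L T_L U_L^{*}$ and $\Wmat=U_W T_W U_W^{*}$ with $U_L,U_W$ unitary and $T_L,T_W$ upper triangular. The mixed-product rule gives $\Lop\otimes\Wmat=(U_L\otimes U_W)(T_L\otimes T_W)(U_L\otimes U_W)^{*}$. Here $U_L\otimes U_W$ is unitary, and $T_L\otimes T_W$ is upper triangular with diagonal entries $\mu_p\lambda_i$. The eigenvalues, with algebraic multiplicity, are therefore exactly the products, and in particular $\spr(\Lop\otimes\Wmat)=\spr(\Lop)\spr(\Wmat)$.

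For (i), row-stochastic and primitive matrices have Kronecker products that are again row-stochastic, with $(\Lop\otimes\Wmat)\one=\one\otimes\one$. The product is also primitive, since $(\Lop\otimes\Wmat)^k=\Lop^k\otimes\Wmat^k$ is strictly positive once both factors are. The consensus limit of Proposition~\ref{prop:consensus}, applied to the $NK$-state system, then gives convergence to $\one\,\pi^{\top}\vecop(\bvec^0)$. Here $\pi$ is the left Perron vector, and $\pi=\pi_L\otimes\pi_W$ because $(\pi_L\otimes\pi_W)^{\top}(\Lop\otimes\Wmat)=(\pi_L^{\top}\Lop)\otimes(\pi_W^{\top}\Wmat)$ and its entries sum to $1$.

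For (iii), the signed structure transfers through the Kronecker product: $\Lop\otimes\Wmat=(\Dmat_L\otimes\Imat)(\Lop_+\otimes\Wmat)(\Dmat_L\otimes\Imat)$. I will take $\Wmat$ row-stochastic and primitive, which is the implicit companion hypothesis, so $\Lop_+\otimes\Wmat$ is row-stochastic and primitive by the argument in (i). Proposition~\ref{prop:polarization} with gauge $\Dmat_L\otimes\Imat$ then yields the limit $(\Dmat_L\one\otimes\one)\,c$ with $c=(\pi_{L+}\otimes\pi_W)^{\top}(\Dmat_L\otimes\Imat)\vecop(\bvec^0)$. In this limit propositions with opposite $s_p$ receive mirror-image belief profiles, uniform across agents.

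For (ii), the spectral identity gives $\spr(\Lop\otimes\Wmat)=\spr(\Lop)\spr(\Wmat)>1$. I will pick an eigenvector $v$ of $\Lop\otimes\Wmat$ for an eigenvalue $\nu$ with $|\nu|>1$; I can take $v=u\otimes w$ from a Perron pair, or from any pair attaining the two spectral radii. Any initial condition with a nonzero component along the corresponding generalized eigenspace then has $\lVert(\Lop\otimes\Wmat)^t\vecop(\bvec^0)\rVert\to\infty$, so divergence is generic. Meanwhile each isolated slice $\bvec^{t+1,p}=\Wmat\bvec^{t,p}$ converges to $0$ because $\spr(\Wmat)<1$.

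The main obstacle is the scope bookkeeping, not the algebra. The Kronecker identification holds only under veridical perception, where $\hvec\equiv\mathbf 0$; I will state this explicitly. In (iii) I must supply the implicit hypothesis on $\Wmat$. In (ii) I must phrase divergence as holding for generic initial data, as Proposition~\ref{prop:oscillation} does.
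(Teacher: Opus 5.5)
Your proposal is correct and follows essentially the paper's route. The paper cites the Kronecker spectral identity from Horn--Johnson, which you instead prove directly by Schur triangularization; beyond that, you use the mixed-product rule for primitivity and the left Perron vector $\pi_L\otimes\pi_W$ in (i), $\spr(\Lop\otimes\Wmat)=\spr(\Lop)\spr(\Wmat)$ with generic excitation in (ii), and the gauge $\Dmat_L\otimes\Imat$ in (iii), and you make explicit the hypotheses the paper leaves implicit (veridical perception, $\Wmat$ row-stochastic and primitive in (iii), and genericity of initial data in (ii)).
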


\begin{proof}
The spectrum of a Kronecker product is the set of pairwise products of eigenvalues \citep[Theorem~4.2.12]{horn2013}; this is the first claim, and the eigenvectors are the Kronecker products of the factors' eigenvectors. For (i), $(\Lop\otimes\Wmat)^m=\Lop^m\otimes\Wmat^m$ is entrywise positive for $m$ large (both factors primitive), so $\Lop\otimes\Wmat$ is primitive row-stochastic and Proposition~\ref{prop:consensus} applies with left Perron vector $\pi_L\otimes\pi_W$. For (ii), $\spr(\Lop\otimes\Wmat)=\spr(\Lop)\spr(\Wmat)>1$ with the dominant eigenvector generically excited, so the iteration diverges even though $\spr(\Wmat)<1$ makes each slice stable. For (iii), $(\Dmat_L\otimes\Imat)^2=\Imat$ and the gauge argument of Proposition~\ref{prop:polarization} applies to $\Lop\otimes\Wmat=(\Dmat_L\otimes\Imat)(\Lop_+\otimes\Wmat)(\Dmat_L\otimes\Imat)$.
\end{proof}

Clause (ii) is the striking one, and it is the formal image of the greater-fool bubble with which we began. A population may be perfectly stable on every proposition considered in isolation---each issue, on its own, would settle---and yet be driven to divergence by the \emph{inferential links between the issues}, if those links are strong enough that $\spr(\Lop)\spr(\Wmat)>1$. Belief runs away not because any single issue is unstable but because believing $A$ raises belief in $B$, which raises belief in $A$, in a mutually reinforcing loop across propositions. This is issue-bundling as a destabilization mechanism, and Part~\ref{part:adversarial} will show it is available as a deliberate attack: a manipulator who cannot destabilize a population on any single proposition may be able to do so by forging inferential links between propositions, spending only the ``coupling budget'' $1-\spr(\Lop)\spr(\Wmat)$.

Finally, the closure theorem extends to the coupled setting, so that the entire three-layer analysis survives inferential coupling intact.

\begin{corollary}[Coupled closure]\label{cor:coupled}
Theorem~\ref{thm:closure} holds for the coupled three-layer system with $\bar w$ replaced by $\bar\ell\bar w$, where $\bar\ell=\max_p\sum_q\lvert L_{pq}\rvert$: since the absolute row sums of the composite contraction multiply exactly, $\lVert\Lop\otimes\Wmat\rVert_\infty=\bar\ell\bar w$, the closed system converges whenever $\bar\ell\bar w<1$, $\bar\alpha<1$, and $\gamma_{\min}>0$, and the channel-separation conclusions (b)--(d) hold unchanged.
\end{corollary}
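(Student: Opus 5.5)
The plan is to rerun the proof of Theorem~\ref{thm:closure} for the coupled three-layer system. The first step is to write that system explicitly. Each agent $i$ now carries a belief vector $b_i^{\cdot}\in\real^K$, perceptions $B_{ij}^{\cdot}$ and expressions $x_i^{\cdot}$. The belief row becomes $b_i^{t+1,p}=\sum_q L_{pq}\sum_j W_{ij}B_{ij}^{t,q}$, with the diagonal pinned, $B_{ii}^{t,q}=b_i^{t,q}$. The expression and perception maps act propositionwise, exactly as in~\eqref{eq:expression} and~\eqref{eq:perception}. Stacking over $p$ gives an operator $\Top_L$ with the same block pattern as~\eqref{eq:stacked}. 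The top-row blocks $\Dmat_W,\mathbf A$ are replaced by $\Lop\otimes\Dmat_W$ and $\Lop\otimes\mathbf A$. The remaining blocks are replaced by $\Imat_K\otimes(\cdot)$, so the climate, broadcast and retention operators are unchanged on each slice. The norm identity quoted in the statement is the standard fact that absolute row sums of a Kronecker product multiply: $\lVert\Lop\otimes\Wmat\rVert_\infty=\lVert\Lop\rVert_\infty\lVert\Wmat\rVert_\infty=\bar\ell\bar w$.

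For part (a) I would repeat the sup-norm bookkeeping of Theorem~\ref{thm:closure}(a), with $m$ now taken over all agents and propositions. The belief row gives
\[
\lvert b_i^{+,p}\rvert\le\sum_q\lvert L_{pq}\rvert\Bigl(\lvert W_{ii}\rvert+\sum_{j\neq i}\lvert W_{ij}\rvert\Bigr)m\le\bar\ell\bar w\,m .
\]
The expression row still mixes $(1-\alpha_i)$ times this with $\alpha_i$ times an average of perceptions, giving the bound $\kappa_L m$ with $\kappa_L=\bar\ell\bar w+\bar\alpha(1-\bar\ell\bar w)<1$. The perception row is still a convex combination. The two-step argument then goes through with $\kappa$ replaced by $\kappa_L$, which yields $\spr(\Top_L)<1$ and the unique fixed point $(\Imat-\Top_L)^{-1}\tilde\bias$.

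Part (b) is unchanged: the fixed-point perception equation is still $\gamma_{ij}(M_{ij}^{*q}-x_j^{*q})=0$ for each $q$. For part (c), substituting $\Mmat^*=\mathbf S\xvec^*$ turns the belief fixed-point equation into $\bvec^*=(\Lop\otimes\Dmat_W)\bvec^*+(\Lop\otimes\bar{\mathbf A})\xvec^*$. The coupled analogue of $(\Imat-\Dmat_W)^{-1}\bar{\mathbf A}$ is therefore $\mathbf R:=(\Imat-\Lop\otimes\Dmat_W)^{-1}(\Lop\otimes\bar{\mathbf A})$. The coupled $\Kop$ is then $(\Imat-\Dmat_\alpha)\mathbf R+\Dmat_\alpha\widehat{\mathbf G}$, and $\Lop_{\mathrm{conc}}=(\Imat-\mathbf R)(\Imat-\Kop)^{-1}$; here the subscript separates the concealment map from the inference operator, since the statement's letter $\Lop$ is overloaded. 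Part (d) follows as before because $\Kop\to\widehat{\mathbf G}$ as $\alpha\to1$, with $\widehat{\mathbf G}=\Imat_K\otimes\widehat{\mathbf G}$ still row-stochastic.

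The main obstacle is proving $\lVert\mathbf R\rVert_\infty<1$, which is what makes both $\Imat-\mathbf R$ and $\Imat-\Kop$ invertible. In the uncoupled case this was the scalar estimate $\sum_{j\neq i}\lvert W_{ij}\rvert/(1-\lvert W_{ii}\rvert)\le\bar w$, and $\Lop\otimes\Dmat_W$ is no longer diagonal. I would expand $\mathbf R$ as a Neumann series and bound the row sums by
\[
\frac{\bar\ell\,(\bar w-\lvert W_{ii}\rvert)}{1-\bar\ell\lvert W_{ii}\rvert}\le\bar\ell\bar w<1 .
\]
The final inequality needs $\bar\ell\le1$. If $\bar\ell>1$ I would fall back to a cruder argument. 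Since $\spr(\Top_L)<1$, the map $\Imat-\Top_L$ is invertible. Block elimination then shows that $\xvec^*$ is a linear image of $\bias$ and that $\bias=\mathbf 0$ forces $z^*=\mathbf 0$. Nonsingularity of the concealment map would be obtained by checking injectivity directly: $\cvec^*=\mathbf 0$ means $\xvec^*=\bvec^*$, and substituting this into the fixed-point system should force $\bias=\mathbf 0$. This step is where I expect to need the most care.
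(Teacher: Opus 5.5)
Your route is the paper's: rerun the sup-norm contraction of Theorem~\ref{thm:closure}(a) with $\kappa_L=\bar\ell\bar w+\bar\alpha(1-\bar\ell\bar w)$, then carry (b)--(d) over proposition by proposition. On (c) you are more careful than the paper. Its proof simply asserts that (b)--(d) use only the norm of the belief block. But the uncoupled estimate $\sum_{j\neq i}\lvert W_{ij}\rvert/(1-\lvert W_{ii}\rvert)\le\bar w$ relied on $\Dmat_W$ being diagonal, and $\Lop\otimes\Dmat_W$ is not. Your Neumann-series bound on $\mathbf R$ is the right replacement.

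The problem is that the obstacle you then flag does not exist. With $a=\lvert W_{ii}\rvert\le\bar w$,
\[
\frac{\bar\ell(\bar w-a)}{1-\bar\ell a}\le\bar\ell\bar w
\iff \bar w-a\le\bar w-\bar\ell\bar w\,a
\iff a\,(\bar\ell\bar w-1)\le 0 .
\]
This holds whenever $\bar\ell\bar w\le1$, with no condition on $\bar\ell$ alone. The series also converges, since $\bar\ell a\le\bar\ell\bar w<1$. So the corollary's own hypothesis gives $\lVert\mathbf R\rVert_\infty\le\bar\ell\bar w<1$. It also gives that each absolute row sum of $\Kop$ is at most $(1-\alpha_i)\bar\ell\bar w+\alpha_i\le\kappa_L<1$. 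Both bounds hold in the case $\bar\ell>1$, which is the one that actually matters: the worked example has $\bar\ell=1+\delta$. Hence $\Imat-\mathbf R$ and $\Imat-\Kop$ are invertible, $\Lop_{\mathrm{conc}}$ is nonsingular, and you should drop the fallback.

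As sketched, the fallback would not have helped anyway. From $\cvec^*=\mathbf 0$ you get $\xvec^*=\mathbf R\xvec^*$, so the injectivity you wanted to check is exactly invertibility of $\Imat-\mathbf R$, the thing you were trying to avoid. If you want an argument that does not use row sums, here is one:
\begin{itemize}
\item Write $\Imat-\mathbf R=(\Imat-\Lop\otimes\Dmat_W)^{-1}(\Imat-\Lop\otimes\Wmat)$ and use $\spr(\Lop\otimes\Wmat)=\spr(\Lop)\spr(\Wmat)\le\bar\ell\bar w<1$.
\item Any $\xvec\neq\mathbf 0$ with $\Kop\xvec=\xvec$ would lift to a nonzero fixed vector $(\mathbf R\xvec,\mathbf S\xvec,\xvec)$ of $\Top_L$, contradicting part (a).
\end{itemize}
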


\begin{proof}
The Kronecker product satisfies $\lVert\Lop\otimes\Wmat\rVert_\infty=\lVert\Lop\rVert_\infty\lVert\Wmat\rVert_\infty=\bar\ell\bar w$; substituting $\bar\ell\bar w$ for $\bar w$ in the contraction constant $\kappa$ of Theorem~\ref{thm:closure}(a) leaves the argument otherwise unchanged, and parts (b)--(d) never used the internal structure of the belief-transmission block beyond its norm.
\end{proof}

\paragraph{Worked example.} Four agents, with an influence matrix having $\spr(\Wmat)=0.8$---a comfortably stable slice, on its own convergent---and two propositions coupled by the symmetric inference operator
\begin{equation}
\Lop(\delta) = \begin{pmatrix} 1 & \delta\\ \delta & 1\end{pmatrix}, \qquad \spr(\Lop)=1+\delta.
\end{equation}
The product-spectrum law gives $\spr(\Lop\otimes\Wmat)=(1+\delta)(0.8)$. At weak coupling $\delta=0.15$, this is $(1.15)(0.8)=0.92<1$, and beliefs decay: the coupling is not strong enough to overcome the slice stability. But at $\delta=0.35$, the product is $(1.35)(0.8)=1.08>1$, and the identical network---same agents, same within-proposition influence, same stable slice---now \emph{diverges} (Figure~\ref{fig:coupling}). Nothing about either proposition changed; only the inferential link between them was strengthened, from $0.15$ to $0.35$, and that alone flipped the population from stable to unstable. This is the bubble mechanic in its starkest form: the instability lives entirely in the coupling between propositions, in the mutual reinforcement of beliefs across issues, and not in any issue itself. It is also, as we will see, a template for attack.

\begin{figure}[t]
\centering
\includegraphics[width=0.62\linewidth]{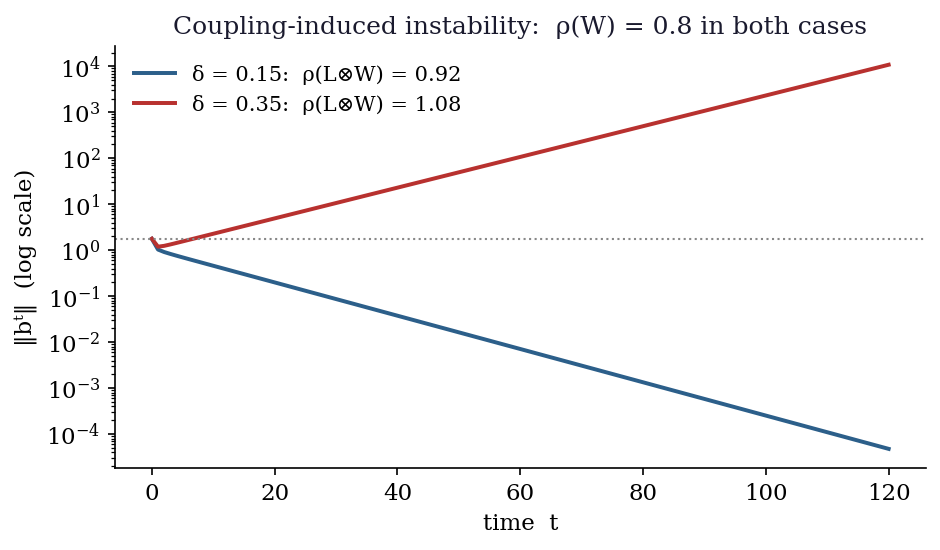}
\caption{Coupling-induced instability. With the within-proposition influence fixed at $\spr(\Wmat)=0.8$ (stable), weak inferential coupling $\delta=0.15$ gives $\spr(\Lop\otimes\Wmat)=0.92$ and decay, while stronger coupling $\delta=0.35$ gives $\spr=1.08$ and divergence. Stability is a property of the coupled operator, not of any single proposition slice.}
\label{fig:coupling}
\end{figure}

\part{The Manipulation of Social Knowledge}\label{part:adversarial}

The framework so far is descriptive: it says how interactive beliefs move, and what stable forms collective epistemic life can take. We now turn it to a normative and practical question that the descriptive theory makes newly tractable: how is social knowledge \emph{attacked}, and how can it be \emph{defended}? The contemporary urgency of the question needs little rehearsal. Influence operations, disinformation campaigns, astroturfing, and algorithmic curation are all, in one way or another, attempts to manipulate not what people believe about the world directly but what they believe others believe---the perceived climate, the apparent consensus, the sense of what is sayable. The empirical literature has documented the reach of such campaigns \citep{vosoughi2018,lazer2018,aral2019}, and a formal literature models the manipulator as a stubborn agent dragging a first-order average \citep{friedkin1990,acemoglu2011,yildiz2013}. But the characteristic modern attacks are irreducibly second-order, and a first-order model cannot represent them: a lie about what others believe has no state variable to land on. Our three-layer framework supplies exactly the missing variables, and with them an exact theory of second-order manipulation.

\section{The attack surface: three layers, three families}\label{sec:surface}

The organizing insight is that attacks on social knowledge are individuated not by the \emph{kind of move} the manipulator makes but by the \emph{epistemic layer} the move targets. The three-layer ontology of Part~\ref{part:foundations}---content, expressed climate, observation medium---induces a threefold taxonomy of attack, and the classical families of epistemic misconduct map onto the layers exactly.

\paragraph{Inferential obstructions target the content layer.} Equivocation, misdirection, planted contradiction, the manufacture of false evidence: these operate on beliefs and the inferential links between them. They attack $\bvec$ directly, or the inference operator $\Lop$ of Part~\ref{part:coupling}. Classical propaganda---a false claim about the world---is the pure case. These are \emph{first-order} attacks: they concern what is true.

\paragraph{Dramaturgical exploitations target the perceived climate.} Staged consensus, manufactured majorities, performed conviction, the ``everyone now agrees that\dots'' that no one has verified: these operate on the perception and expression layers, $\Mmat$ and $\xvec$. ``Most people think $P$'' is an injection into the perceived climate $\widehat c$; astroturfing fabricates entries of $\xvec$ corresponding to no private belief. These are \emph{second-order} attacks: they concern what is believed to be believed.

\paragraph{Context manipulations target the observation medium.} Control of the feed, the ranking, the translation, the venue, the sampling of who is heard: these corrupt the map by which one layer observes another. The perception update comes to track not $x_j^t$ but $x_j^t+a_{ij}$, where the corruption $a_{ij}$ is set by whoever mediates $i$'s observation of $j$. The manipulator here touches no one's belief and no one's expression; it owns the channel. These too are second-order, but---as we will prove---of a distinct and more durable kind than dramaturgical exploitation.

First-order models see only the content layer; everything second-order, climate and medium, is invisible to them, and it is exactly there that contemporary influence operations work. The three families are priced differently because they attack different layers, and the rest of Part~\ref{part:adversarial} is the pricing. An adversary, formally, is an actor outside the population who adds terms to the dynamics or alters its parameters, under a budget, to change where beliefs go; the state space of Part~\ref{part:foundations} makes the possible targets exhaustive---states, the channels between them, and the operators governing them---and we take the families in turn.

\section{Robustness I: attacks on states are transient}\label{sec:transience}

The most basic robustness fact is that the framework's global stability, established in the closure theorem, is \emph{itself} a security guarantee against a large class of attacks: any one-shot perturbation of the state, however large, decays.

\begin{theorem}[Transience of state attacks]\label{thm:transience}
Let the population satisfy the hypotheses of Theorem~\ref{thm:closure}, and let an adversary apply an arbitrary one-shot perturbation $\Delta z$ to the state at time $t_0$---any combination of injected beliefs, fabricated expressions, and planted perceptions, of any magnitude. Then the attacked trajectory converges to the same fixed point as the unattacked one, and the entire effect of the attack decays geometrically: the difference at time $t_0+k$ is $\Top^k\Delta z$, of norm $O(\spr(\Top)^k)$.
\end{theorem}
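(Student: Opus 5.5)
The plan is to exploit the affine structure of the stacked recursion~\eqref{eq:stacked} together with the contraction already established in Theorem~\ref{thm:closure}(a). Let $z^t$ denote the unattacked trajectory and $\tilde z^t$ the attacked one. They coincide for $t<t_0$, and at time $t_0$ we have $\tilde z^{t_0}=z^{t_0}+\Delta z$, where $\Delta z$ is an arbitrary vector in $\real^{n}$. Any combination of injected beliefs, fabricated expressions, and planted perceptions is simply some choice of the $\bvec$-, $\xvec$-, and $\Mmat$-blocks of $\Delta z$, so no case split over attack types is needed. After $t_0$ the adversary adds nothing further, so both trajectories obey the same recursion $z\mapsto\Top z+\tilde\bias$, with the same forcing $\tilde\bias$.

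The first step is to subtract the two recursions. Set $d^t=\tilde z^t-z^t$. The constant forcing cancels, which gives the homogeneous linear recursion $d^{t+1}=\Top d^t$ for $t\ge t_0$ with $d^{t_0}=\Delta z$. Induction on $k$ then yields $d^{t_0+k}=\Top^k\Delta z$ exactly, which is the claimed form of the difference. The one point to state explicitly is that an attack on the perception block $\Mmat$, or on the diagonal $B_{ii}=b_i$, is consistent with the stacked state. The diagonal is not a separate coordinate of $z$: it is pinned to $\bvec$. So a perturbation of ``planted perceptions'' only lands on the off-diagonal entries, and these are exactly the $\Mmat$-block.

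The second step is the decay rate. Theorem~\ref{thm:closure}(a) gives $\spr(\Top)<1$. By Gelfand's formula, $\lVert\Top^k\rVert^{1/k}\to\spr(\Top)$. Hence for any $\epsilon>0$ there is a $C_\epsilon$ with $\lVert\Top^k\Delta z\rVert\le C_\epsilon(\spr(\Top)+\epsilon)^k\lVert\Delta z\rVert$, and choosing $\epsilon$ so that $\spr(\Top)+\epsilon<1$ gives geometric decay. Convergence to the same fixed point then follows: $z^t\to z^*$ by Theorem~\ref{thm:closure}(a), and $\tilde z^t=z^t+d^t$ with $d^t\to\mathbf 0$. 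Alternatively, note that $z^*=(\Imat-\Top)^{-1}\tilde\bias$ depends only on $\Top$ and $\tilde\bias$, neither of which the attack touches. Both routes rest on the fact that a one-shot state attack alters initial conditions and leaves the operator unchanged.

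The main obstacle is the literal claim ``$O(\spr(\Top)^k)$''. If $\Top$ has a nontrivial Jordan block at its dominant eigenvalue, the true bound is $O(k^{m-1}\spr(\Top)^k)$, not $O(\spr(\Top)^k)$. I would handle this in one of two ways. The first option is to read the rate as any $\rho'\in(\spr(\Top),1)$, which is what the Gelfand argument delivers. The second is to give a clean explicit rate from the proof of Theorem~\ref{thm:closure}(a): there the Lyapunov-type quantity $V^t$ satisfies $V^{t+2}\le\hat\rho V^t$ and $V^{t+1}\le V^t$, so $\lVert d^{t_0+k}\rVert_\infty\le\hat\rho^{\lfloor k/2\rfloor}\lVert\Delta z\rVert_\infty$, a constant-free geometric bound at rate $\hat\rho^{1/2}\ge\spr(\Top)$. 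I would state the $\hat\rho^{1/2}$ bound as the rigorous one and remark that the rate approaches $\spr(\Top)$ asymptotically, with equality whenever the dominant eigenvalue is semisimple.
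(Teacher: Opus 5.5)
Your argument is correct and follows the paper's proof: subtracting the two affine recursions cancels $\tilde\bias$ and leaves the difference $\Top^k\Delta z$, which decays because $\spr(\Top)<1$ by Theorem~\ref{thm:closure}(a), so both trajectories share the limit $(\Imat-\Top)^{-1}\tilde\bias$. Your caveat is valid and goes beyond the paper's one-line proof. The literal bound $O(\spr(\Top)^k)$ requires the eigenvalues of $\Top$ of maximal modulus to be semisimple. Without that, the rigorous rates are any $\rho'\in(\spr(\Top),1)$, or your explicit constant-free rate $\hat\rho^{1/2}$ from the contraction estimate.
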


\begin{proof}
The dynamics are affine with fixed operator $\Top$. If $z^t$ and $\tilde z^t$ are trajectories with $\tilde z^{t_0}=z^{t_0}+\Delta z$, their difference obeys the homogeneous recursion $\tilde z^{t_0+k}-z^{t_0+k}=\Top^k\Delta z$, which tends to zero geometrically since $\spr(\Top)<1$ by Theorem~\ref{thm:closure}(a). Both trajectories share the limit $(\Imat-\Top)^{-1}\tilde\bias$.
\end{proof}

This is the population-level form of epistemic self-healing. A burst of ``most people think $P$,'' however loud, is a one-shot perturbation of the perception layer, and the perception trackers subsequently wash it out against actual expressed belief; the lie decays at the spectral rate (Figure~\ref{fig:transient}, blue). Three qualifications set the agenda for the rest of Part~\ref{part:adversarial}, and each corresponds to escaping a hypothesis of the theorem. The decay rate is governed by the spectral gap, so populations near criticality forget slowly. The theorem concerns \emph{one-shot} attacks; persistent injection is a different matter (Section~\ref{sec:persuader}). And the theorem presumes the observation medium is honest---exactly what a context manipulation revokes (Section~\ref{sec:channel}).

\begin{figure}[t]
\centering
\includegraphics[width=0.62\linewidth]{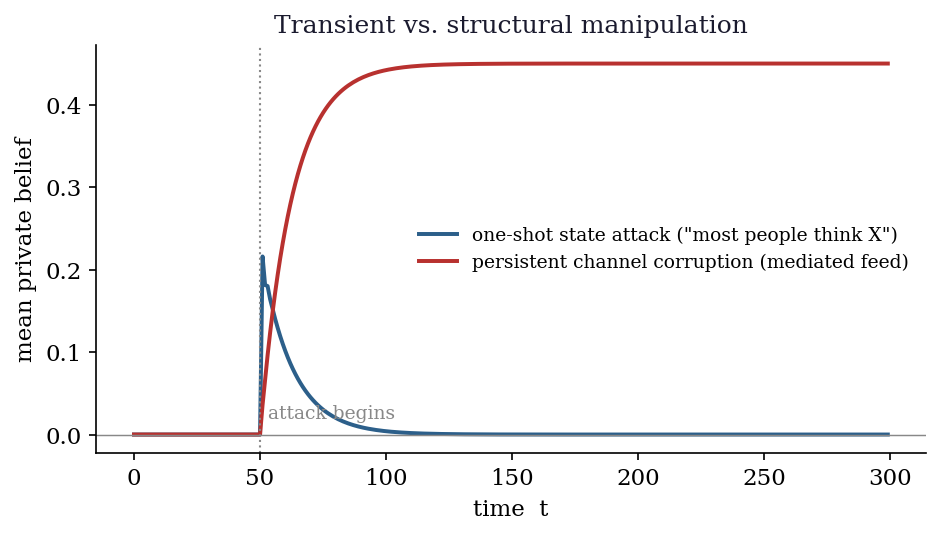}
\caption{Transient versus structural manipulation. A one-shot lie about the climate of opinion (blue) decays geometrically, per Theorem~\ref{thm:transience}; persistent corruption of the observation medium (red) displaces beliefs permanently, without ever touching any agent's belief or expression (Theorem~\ref{thm:channel}).}
\label{fig:transient}
\end{figure}

\section{The Persuader: the cost of persuasion and a ranking of the families}\label{sec:persuader}

By Theorem~\ref{thm:transience}, durable persuasion requires \emph{persistence}: the adversary must inject every period, becoming part of the environment. A persistent injection is indistinguishable from a forged component of the affine drive $\tilde\bias$, and its effect is read off the fixed point $z^*=(\Imat-\Top)^{-1}\tilde\bias$. Two injection layers matter, corresponding to two of our three families. A \emph{belief-layer} injection (a constant $\mathbf v$ added to the belief update) is an inferential obstruction, a first-order lie about the world. An \emph{expression-layer} injection (a constant $\mathbf u$ added to the expression update) is a dramaturgical exploitation, a second-order lie about what is being said. The steady-state displacements are linear in the injection, and on the symmetric influence family the gains admit exact closed forms that---remarkably---rank the two families by a single organizational parameter.

\begin{theorem}[Cost of persuasion; conformity crossover]\label{thm:crossover}
For uniform conformity $\alpha<1$ and the symmetric influence family $\Wmat=w_d\Imat+w_o(\mathbf J-\Imat)$ with $r=(N-1)w_o/(1-w_d)<1$, the steady-state gains per unit uniform injection are
\begin{equation}
g_x(\alpha) = \frac{r}{(1-\alpha)(1-r)} \quad\text{(dramaturgical, expression layer)}, \qquad
g_b = \frac{1}{(1-w_d)(1-r)} \quad\text{(inferential, belief layer)},
\label{eq:gains}
\end{equation}
so the belief-layer gain is independent of conformity while the expression-layer gain diverges as $\alpha\to1$. The two families cross in efficiency at exactly
\begin{equation}
\alpha^* = 1 - r(1-w_d).
\label{eq:crossover}
\end{equation}
Below $\alpha^*$ inferential obstruction is cheaper; above $\alpha^*$ dramaturgical exploitation strictly dominates, with advantage $g_x/g_b=r(1-w_d)/(1-\alpha)$ growing without bound.
\end{theorem}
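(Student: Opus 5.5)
The plan is to read the persistent injection as a forged component of the affine drive, as the paragraph preceding the theorem indicates. The attacked system becomes $z^{t+1}=\Top z^t+\tilde\bias+\tilde{\mathbf v}$, where $\tilde{\mathbf v}$ carries $\mathbf v$ in the belief block or $\mathbf u$ in the expression block. Under the hypotheses of Theorem~\ref{thm:closure}, $\spr(\Top)<1$. The attacked fixed point is therefore $(\Imat-\Top)^{-1}(\tilde\bias+\tilde{\mathbf v})$. By linearity, the displacement relative to the unattacked fixed point is $(\Imat-\Top)^{-1}\tilde{\mathbf v}$, independent of $\bias$. So it suffices to solve the fixed-point equations with $\bias=\mathbf 0$ and a single injection. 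I will measure the gain as the steady-state displacement of private belief $b^*$ per unit injection, since that is what the persuader ultimately wants to move.

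The second step reduces the problem to scalars by symmetry. Take $\Wmat=w_d\Imat+w_o(\mathbf J-\Imat)$, uniform $\alpha$, and uniform injection $\mathbf u=u\one$ or $\mathbf v=v\one$. The stacked system is then invariant under simultaneous permutation of agents, so by uniqueness of the fixed point the fixed point is uniform: $b_i^*=b$, $x_i^*=x$, and $M_{ij}^*=m$. Theorem~\ref{thm:closure}(b) is unaffected by the extra drive, because the perception row is unchanged. It gives $m=x$, and hence the perceived climate is $\widehat c_i=m=x$ (a weighted average of equal entries). The three rows collapse to two scalar equations. The first is $b=w_d b+(N-1)w_o x+v$, that is, $b=r x+v/(1-w_d)$. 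The second is $x=(1-\alpha)b+\alpha x+u$.

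The third step solves the two cases. For the expression-layer injection ($v=0$), substituting $b=rx$ gives $x(1-\alpha)(1-r)=u$. Then $b=rx=r u/\bigl((1-\alpha)(1-r)\bigr)$, which is $g_x(\alpha)$. For the belief-layer injection ($u=0$), the expression equation forces $x=b$. Then $b(1-w_d)=(N-1)w_o b+v$, so $b=v/\bigl((1-w_d)(1-r)\bigr)$, which is $g_b$. Here $\alpha$ has dropped out, which gives the claimed independence from conformity. The divergence of $g_x$ as $\alpha\to1$ is read off the factor $(1-\alpha)^{-1}$. The ratio is $g_x/g_b=r(1-w_d)/(1-\alpha)$. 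Setting this ratio equal to $1$ gives $\alpha^*=1-r(1-w_d)$, and the monotonicity of the ratio in $\alpha$ yields the two-sided dominance statement.

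The main obstacle is bookkeeping of hypotheses rather than algebra. I must check that $1-w_d>0$ and $r<1$ are enough for the fixed point to exist and be unique. When $\bar w=\lvert w_d\rvert+(N-1)w_o<1$ this follows from Theorem~\ref{thm:closure}(a). Otherwise, I would fall back on the observation that the uniform scalar system has the nonzero determinant $(1-w_d)(1-\alpha)(1-r)$. This licenses the steady-state formulas, and hence the gain comparison, without invoking the global convergence statement. I must also be explicit that the climate identity $\widehat c_i=x$ uses $w_o\neq0$, so that the climate operator is defined.
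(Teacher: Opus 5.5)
Your proposal is correct and is essentially the paper's proof. The paper also restricts to the uniform direction: it uses $\widehat{\mathbf G}\one=\one$, $(\Imat-\Dmat_W)^{-1}\bar{\mathbf A}\one=r\one$ and $\Kop\one=[(1-\alpha)r+\alpha]\one$ for $g_x$, and the relation $x=b$ for $g_b$, which are exactly your two scalar equations. Your permutation-symmetry-plus-uniqueness step just makes explicit why a uniform injection produces a uniform response.

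One bookkeeping point concerns your first paragraph. Your scalar equation $x=(1-\alpha)b+\alpha x+u$ lets expression read the injected belief, as the paper does, so in stacked form the belief-layer drive is $(\mathbf v,\mathbf 0,(\Imat-\Dmat_\alpha)\mathbf v)$, not $\mathbf v$ in the belief block alone. This is because the expression row of $\Top$ reads $\Dmat_W\bvec^t+\mathbf A\Mmat^t$, not $\bvec^{t+1}$. Placing $\mathbf v$ only in the belief block would give $x=b-v$ and a belief gain multiplied by $1-(N-1)w_o$, which would change the crossover.
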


\begin{proof}
On the uniform (consensus) direction $\one$, the normalized climate operator fixes $\one$ ($\widehat{\mathbf G}\one=\one$) and the belief-transmission contraction scales it, $(\Imat-\Dmat_W)^{-1}\bar{\mathbf A}\one=r\one$. For the expression layer, the fixed-point operator from Theorem~\ref{thm:closure}(c) satisfies $\Kop\one=[(1-\alpha)r+\alpha]\one$, so $(\Imat-\Kop)\one=(1-\alpha)(1-r)\one$, and a unit uniform expression injection $u\one$ produces $\Delta\bvec^*=(\Imat-\Dmat_W)^{-1}\bar{\mathbf A}(\Imat-\Kop)^{-1}u\one=\frac{r}{(1-\alpha)(1-r)}u\one$, giving $g_x$. For the belief layer, on the uniform direction the expression fixed-point equation gives $(1-\alpha)x=(1-\alpha)b$, hence $x=b$ for $\alpha<1$, and the belief equation becomes $b=rb+v/(1-w_d)$, so $\Delta b=v/[(1-w_d)(1-r)]$, independent of $\alpha$, giving $g_b$. Equating $g_x=g_b$ and solving for $\alpha$ yields~\eqref{eq:crossover}. All identities verified numerically to machine precision against both the closed forms and long-run simulation.
\end{proof}

Theorem~\ref{thm:crossover} does more than price two instruments; it \emph{ranks the three families of attack} and shows the ranking is governed by a single, measurable property of the target population. Because context manipulations act through the same second-order machinery as dramaturgical exploitations---both enter through perception and expression rather than through belief---the crossover sorts the families cleanly. Below $\alpha^*$, first-order attacks on content are the efficient choice; above $\alpha^*$, the two second-order families, on climate and on medium, dominate, and their advantage grows without bound as conformity rises. This yields a prediction about institutions that, to our knowledge, has not previously been stated.

\begin{corollary}[Organizational vulnerability]\label{cor:orgvuln}
Order populations by their conformity pressure $\alpha$. Then the efficient attack surface shifts monotonically with $\alpha$: low-conformity populations are most cheaply attacked through their content (inferential obstruction), high-conformity populations through their climate and medium (dramaturgical exploitation and context manipulation). At $\alpha>\alpha^*$ the second-order families strictly dominate, with efficiency advantage $\propto(1-\alpha)^{-1}$.
\end{corollary}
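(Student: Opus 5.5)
The corollary is a direct reading of Theorem~\ref{thm:crossover}, so I would derive it from the closed-form gains~\eqref{eq:gains}. I would stay within the symmetric influence family, where those gains are exact. The first step is to define ``efficient attack surface'' as the layer with the larger steady-state gain per unit injection. Equivalently, it is the layer with the smaller injection cost to reach a target displacement of $\bvec^*$, since by linearity of $z^*=(\Imat-\Top)^{-1}\tilde\bias$ the cost is proportional to $1/g$. Under this definition the ranking question reduces to comparing $g_x(\alpha)$ and $g_b$ as functions of $\alpha\in[0,1)$.

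The second step is monotonicity. The belief-layer gain $g_b$ does not depend on $\alpha$. The expression-layer gain $g_x(\alpha)=r/[(1-\alpha)(1-r)]$ is strictly increasing in $\alpha$, because $0<r<1$ is fixed. The ratio $g_x/g_b=r(1-w_d)/(1-\alpha)$ is therefore strictly increasing and crosses $1$ at exactly one point, $\alpha^*=1-r(1-w_d)$. It follows that the set of $\alpha$ for which the second-order layer is preferred is an up-set $(\alpha^*,1)$, which is the monotone shift claimed. The $(1-\alpha)^{-1}$ scaling of the advantage can be read off the ratio directly. Above $\alpha^*$ the dominance is strict, and it diverges as $\alpha\to1$.

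The third step, which I expect to be the main obstacle, is extending the conclusion from dramaturgical exploitation to context manipulation. A channel corruption $a_{ij}$ enters the perception row, so at the fixed point Theorem~\ref{thm:closure}(b) gives $M_{ij}^*=x_j^*+a_{ij}$ rather than $M_{ij}^*=x_j^*$. I would take a uniform corruption $a_{ij}=a$ and substitute it into the belief and expression fixed-point equations along the $\one$ direction. Through $\bar{\mathbf A}$ and $\widehat{\mathbf G}$, it contributes a term $r\,a/(1-w_d)$-type to the belief equation and $\alpha a$ plus a belief-mediated term to the expression equation. I would then show that the resulting belief displacement carries the same $(\Imat-\Kop)^{-1}$ factor, and hence a gain $\propto 1/(1-\alpha)$ that is bounded below by a positive multiple of $g_x$.

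If that closed form turns out to be messy, my fallback is weaker. I would argue that the medium-layer gain is at least of order $(1-\alpha)^{-1}$, and so also eventually exceeds $g_b$ for $\alpha$ near $1$. In that case the corollary's strict-dominance threshold for context manipulation would be stated at the analogous crossover point, which lies at or below $\alpha^*$.
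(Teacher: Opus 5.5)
Your steps 1 and 2 match how the paper obtains the corollary. It reads it off the closed-form gains of Theorem~\ref{thm:crossover}, whose ratio $g_x/g_b=r(1-w_d)/(1-\alpha)$ is increasing, crosses $1$ at $\alpha^*$, and scales as $(1-\alpha)^{-1}$.

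You depart from the paper on the medium family. There the paper gives no computation; it only asserts that context manipulation ``acts through the same second-order machinery'' as dramaturgical exploitation. Your step 3 turns that assertion into a proof, and it closes more cleanly than you expect. With $a_{ij}=a$, the fixed-point equations of Theorem~\ref{thm:channel} on the direction $\one$ are $b=r(x+a)$ and $x=(1-\alpha)b+\alpha(x+a)$. The substitution $\tilde x=x+a$ turns them into $b=r\tilde x$ and $\tilde x=(1-\alpha)b+\alpha\tilde x+a$, which is exactly the expression-injection system with $u=a$. So the medium gain equals $g_x(\alpha)=r/[(1-\alpha)(1-r)]$ exactly, with the same crossover $\alpha^*$ and the same advantage $r(1-w_d)/(1-\alpha)$. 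The same substitution $\tilde x_j=x_j+a_j$ shows, for general $\Wmat$ and $\alpha_i$, that any target-dependent corruption $a_{ij}=a_j$ displaces beliefs exactly as the expression injection $\mathbf u=\mathbf a$. That is the precise content of the paper's analogy.

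The exactness matters, because your stated target would not be enough on its own. A lower bound $c\,g_x$ with $c>0$ guarantees medium-family dominance only for $\alpha>1-c\,r(1-w_d)$. When $c<1$ that threshold lies above $\alpha^*$, not below. So you need $c\ge 1$ to recover the clause ``at $\alpha>\alpha^*$.'' For the same reason, your fallback's claim that the medium crossover lies at or below $\alpha^*$ does not follow from an order-$(1-\alpha)^{-1}$ bound; the inequality runs the other way. Since $c=1$ holds, the fallback is simply unnecessary.

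Two minor points:
\begin{itemize}
\item The corruption enters the belief equation as $ra$, since $(\Imat-\Dmat_W)^{-1}(\Wmat\odot\Amat)\one=ra\one$, not as $ra/(1-w_d)$.
\item The comparison assumes the unit of corruption is $a$ per target, the natural analogue of the theorem's per-unit uniform injection. A cost that counted all $N(N-1)$ corrupted edges would tilt the ranking against the medium family.
\end{itemize}

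The paper's route is a one-line appeal to analogy. Yours makes the medium half of the corollary an actual derivation, at the price of fixing that cost convention explicitly.
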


The corollary is a testable claim about the differential vulnerability of institutions to different kinds of epistemic attack. A high-conformity organization---one whose members resolve a large share of their expressed positions by reference to the perceived consensus rather than to private conviction---is systematically \emph{more} vulnerable to staged consensus and to control of its internal communication channels than to the planting of false facts, and the gap widens the more conformist it is. The reverse holds for low-conformity, high-dissent cultures, where inferential obstruction remains the efficient attack because there is little perceived-consensus machinery to exploit. For the worked instance of Section~\ref{sec:distortion} ($r\approx0.474$, $w_d=0.24$), the crossover sits at $\alpha^*=0.64$: a body that resolves two-thirds of its expression by perceived consensus has already made second-order attack the efficient instrument, and at $\alpha=0.9$ the dramaturgical advantage is $3.6$-fold (Figure~\ref{fig:crossover}). An adversary optimizing under a budget will migrate from content to climate and medium precisely as a population's conformity rises---and the migration is rational, a response to the changing relative price of the two kinds of lie, not a matter of taste.

\begin{figure}[t]
\centering
\includegraphics[width=0.62\linewidth]{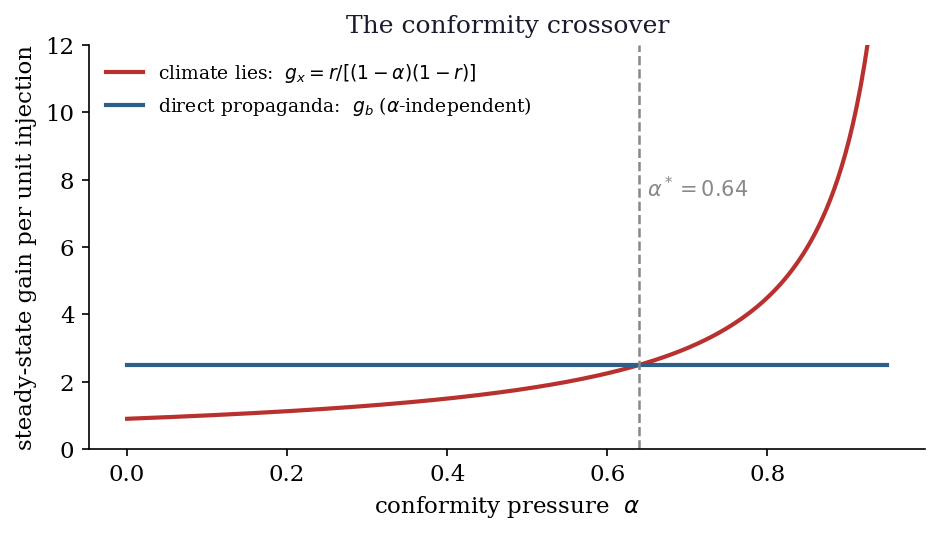}
\caption{The conformity crossover ranks the families. The gain of dramaturgical exploitation (climate) grows as $1/(1-\alpha)$ while the gain of inferential obstruction (content) is $\alpha$-independent; above $\alpha^*$ the second-order families---dramaturgical and, through the same machinery, contextual---strictly dominate, with advantage growing without bound.}
\label{fig:crossover}
\end{figure}

Automated amplification fits the same schedule. A bot is a pure-expression agent: it has no private belief, only a fixed profession it repeats.

\begin{proposition}[Bots are structured persistent injections]\label{prop:bots}
Add to the population an agent with no belief dynamics and fixed expression $s$, receiving influence weights $W_{i,\mathrm{bot}}$ from humans. The human block of the resulting fixed point equals exactly the fixed point of the human-only system with persistent belief-layer injection $v_i=W_{i,\mathrm{bot}}s$ and climate injection $\lvert W_{i,\mathrm{bot}}\rvert s/\sum_j\lvert W_{ij}\rvert$. Automated amplification therefore inherits the price schedule of Theorem~\ref{thm:crossover}, with displacement bounded by the bot's influence share; and a bot is structurally identifiable as an agent whose concealment gap is undefined---an expression anchored to no belief.
\end{proposition}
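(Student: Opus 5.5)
We would prove the proposition by writing out the fixed-point equations of the augmented $(N+1)$-agent system and eliminating the bot's coordinates, so that what remains is literally the human-only closed system with extra constant drive terms. The bot has no belief update and a frozen expression $x_{\mathrm{bot}}=s$. Each human $i$ carries a perception $M_{i,\mathrm{bot}}$ of the bot, updated by~\eqref{eq:perception}. Humans additionally contribute the terms $W_{i,\mathrm{bot}}M_{i,\mathrm{bot}}$ to the belief update~\eqref{eq:update} and $\lvert W_{i,\mathrm{bot}}\rvert M_{i,\mathrm{bot}}$ to the numerator of the climate in~\eqref{eq:expression}. We assume the enlarged rows still satisfy $\bar w<1$, so that Theorem~\ref{thm:closure} applies to the augmented system and its fixed point exists and is unique.

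\textbf{Step 1: freeze the bot's perception.} At the fixed point, the perception row for the pair $(i,\mathrm{bot})$ gives $\gamma_{i,\mathrm{bot}}(M^*_{i,\mathrm{bot}}-s)=0$, hence $M^*_{i,\mathrm{bot}}=s$. This is the argument of Theorem~\ref{thm:closure}(b); here the bot's expression is a constant rather than a state variable.

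\textbf{Step 2: substitute into the human rows.} With $M^*_{i,\mathrm{bot}}=s$, the human belief fixed-point equation becomes the human-only equation plus the constant $v_i=W_{i,\mathrm{bot}}s$. The human climate becomes
\[
\widehat c_i=\frac{\sum_{j\neq i,\,j\le N}\lvert W_{ij}\rvert M_{ij}+\lvert W_{i,\mathrm{bot}}\rvert s}{\sum_{j}\lvert W_{ij}\rvert},
\]
where the denominator is taken over the enlarged row, matching the normalization written in the statement. The bot term is therefore an additive climate injection $\lvert W_{i,\mathrm{bot}}\rvert s/\sum_j\lvert W_{ij}\rvert$, which enters the expression update with weight $\alpha_i$.

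\textbf{Step 3: the main obstacle, the normalization of the human-human weights.} The remaining human-human climate weights are now divided by the enlarged denominator, so the human-only system must be defined with that normalization: a sub-averaging climate operator whose rows sum to $1$ minus the bot's share. We will state explicitly that "the human-only system" carries these inherited weights. With this reading the two fixed-point systems coincide term by term. By uniqueness of the fixed point, which follows from Theorem~\ref{thm:closure}(a) for the human system since its row sums are no larger, the human blocks are equal.

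\textbf{Step 4: the consequences.}
\begin{itemize}
\item \emph{Price schedule.} Since the displacement is produced by persistent injections $(\mathbf v,\mathbf u)$ entering $(\Imat-\Top)^{-1}$ linearly, the gains of Theorem~\ref{thm:crossover} apply directly: the belief-layer component has gain $g_b$ and the climate component gain $g_x$ type.
\item \emph{Boundedness.} The displacement is bounded by $\lVert(\Imat-\Top)^{-1}\rVert$ times $\max_i\lvert W_{i,\mathrm{bot}}\rvert\,\lvert s\rvert$, which is proportional to the bot's influence share.
\item \emph{Identifiability.} This is immediate from the definitions: $c_{\mathrm{bot}}=x_{\mathrm{bot}}-b_{\mathrm{bot}}$ requires a private belief, and the bot has none.
\end{itemize}
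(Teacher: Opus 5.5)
Your proposal follows the paper's own proof. The perception fixed-point equation pins the bot trackers at $M^*_{i,\mathrm{bot}}=s$; substituting into the human belief and climate rows and block-eliminating the bot's coordinates then leaves the human-only system with the stated constant injections. Your Step~3 correctly makes explicit that the human--human climate weights must keep the enlarged denominator, a point the paper's one-line argument leaves implicit.

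There is one caution about Step~4. Your sub-averaging climate operator breaks the identity $\widehat{\mathbf G}\one=\one$ used in the proof of Theorem~\ref{thm:crossover}, so the crossover gains transfer only to first order in the bot's share at fixed $\alpha<1$. When every human attends to the bot, its climate share cuts off the $1/(1-\alpha)$ growth as $\alpha\to1$. Your climate bound should also use the normalized share $\lvert W_{i,\mathrm{bot}}\rvert/\sum_j\lvert W_{ij}\rvert$ rather than $\lvert W_{i,\mathrm{bot}}\rvert$.
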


\begin{proof}
At any fixed point the bot's trackers satisfy $M_{i,\mathrm{bot}}^*=s$ (its expression is constant), so its contributions to $i$'s belief aggregation and perceived climate are the stated constants; block-eliminating the bot's coordinates from the stacked fixed-point system leaves the human-only system with those constants as injections. Verified numerically against full simulation of the extended system.
\end{proof}

\section{Context manipulation: when robustness is revoked}\label{sec:channel}

The self-healing of Theorem~\ref{thm:transience} rests on an assumption so natural it is nearly invisible: that agents observe one another's actual expressions. Whoever mediates that observation---a feed, a ranking algorithm, an editor, a translator, a pollster who chooses whom to sample---can revoke it. This is the third family, context manipulation, and it behaves categorically differently from the other two.

\begin{theorem}[Context manipulation produces permanent displacement]\label{thm:channel}
Suppose $i$'s perception of $j$ tracks $x_j^t+a_{ij}$ for a fixed corruption matrix $\Amat$ (zero diagonal). Then under the hypotheses of Theorem~\ref{thm:closure} the system still converges, but its fixed point satisfies $M_{ij}^*=x_j^*+a_{ij}$: the perception-lag matrix no longer vanishes but equals the corruption exactly, $\Pmat^*=\Amat$, and beliefs are permanently displaced by a nonsingular linear image of $\Amat$---even with fully sincere agents ($\bias=\mathbf 0$):
\begin{equation}
\xvec^*=(\Imat-\Kop)^{-1}\bigl[(1-\alpha)(\Imat-\Dmat_W)^{-1}(\Wmat\odot\Amat)\one+\alpha\,\widehat g_A\bigr], \qquad
\bvec^*=(\Imat-\Dmat_W)^{-1}\bigl[\bar{\mathbf A}\xvec^*+(\Wmat\odot\Amat)\one\bigr],
\end{equation}
where $\widehat g_A$ is the climate-weighted image of the corruption.
\end{theorem}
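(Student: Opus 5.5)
The approach is to treat the corrupted system as the closed system of Theorem~\ref{thm:closure} with an enlarged affine drive, and then repeat the fixed-point elimination of part~(c) with $\Amat$ carried along as an extra source term.

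\emph{Convergence.} Replacing $x_j^t$ by $x_j^t+a_{ij}$ in the perception update~\eqref{eq:perception} leaves the homogeneous operator $\Top$ of~\eqref{eq:stacked} unchanged. It only adds the constant $\Dmat_\gamma\vecop(\Amat)$ to the perception block of the drive, so the drive becomes $\tilde\bias_A=(\mathbf 0,\Dmat_\gamma\vecop(\Amat),\bias)$. By Theorem~\ref{thm:closure}(a) we still have $\spr(\Top)<1$, so the iterates converge geometrically to the unique fixed point $z^*=(\Imat-\Top)^{-1}\tilde\bias_A$. This step is immediate and needs no new contraction estimate.

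\emph{Perception lag equals the corruption.} At the fixed point the perception row reads $\gamma_{ij}(M^*_{ij}-x^*_j-a_{ij})=0$. Since $\gamma_{ij}\ge\gamma_{\min}>0$, this gives $M^*_{ij}=x^*_j+a_{ij}$ and hence $\Pmat^*=\Amat$, exactly as in Theorem~\ref{thm:closure}(b).

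\emph{Closed forms.} Substitute $M^*=\mathbf S\xvec^*+\Amat$ into the belief row. This gives $(\Imat-\Dmat_W)\bvec^*=\bar{\mathbf A}\xvec^*+(\Wmat\odot\Amat)\one$, because $\sum_{j\neq i}W_{ij}a_{ij}$ is the $i$-th row sum of $\Wmat\odot\Amat$; that is the stated formula for $\bvec^*$. Next substitute into the expression row with $\bias=\mathbf 0$, using $\widehat{\mathbf G}$ on the row-constant part $\mathbf S\xvec^*$ and defining $\widehat g_A$ as the climate operator applied to $\Amat$, so that $(\widehat g_A)_i=\sum_{j\neq i}\lvert W_{ij}\rvert a_{ij}/\sum_{j\neq i}\lvert W_{ij}\rvert$. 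Collecting the terms in $\xvec^*$ produces $\Kop\xvec^*$, with the same $\Kop$ as in~\eqref{eq:cstar}. The remaining source is $(1-\alpha)(\Imat-\Dmat_W)^{-1}(\Wmat\odot\Amat)\one+\alpha\widehat g_A$. Since $\lVert\Kop\rVert_\infty\le\kappa<1$ as in part~(c), $\Imat-\Kop$ is invertible, which gives the displayed $\xvec^*$. If $\bias\neq\mathbf 0$, the bias simply adds to the source by linearity.

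The main obstacle is justifying ``a nonsingular linear image of $\Amat$''. The map $\Amat\mapsto\xvec^*$ runs from $N(N-1)$ dimensions into $N$, so it cannot be injective; all one can mean is that it factors as the invertible $(\Imat-\Kop)^{-1}$ composed with the aggregation map $\Amat\mapsto$ source. I would therefore prove the following two claims.
\begin{itemize}
\item The displacement vanishes if and only if the aggregated source vanishes.
\item The source is nonzero generically. For instance, a uniform positive corruption $a_{ij}=a$ with $W_{ij}\ge0$ gives a source proportional to $\one$ with strictly positive coefficient, so $\xvec^*\neq\mathbf 0$ and therefore $\bvec^*\neq\mathbf 0$.
\end{itemize}
Together these show that permanent displacement occurs even with $\bias=\mathbf 0$, in contrast with Theorem~\ref{thm:closure}(d). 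I would state this interpretation of ``nonsingular'' explicitly rather than claim full injectivity.
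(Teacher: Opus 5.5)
Your proof is correct and is essentially the paper's own argument: the corruption enters only the affine drive, so $\Top$ and hence $\spr(\Top)<1$ are unchanged. The fixed-point perception row then forces $M^*_{ij}=x^*_j+a_{ij}$, so $\Pmat^*=\Amat$, and substituting $\Mmat^*=\mathbf S\xvec^*+\Amat$ into the belief and expression rows gives the displayed closed forms.

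Your reading of ``nonsingular linear image'' is a legitimate sharpening that the paper's proof never supplies, since only the factor $(\Imat-\Kop)^{-1}$ is invertible. One caveat: your if-and-only-if holds for $\xvec^*$ but not automatically for $\bvec^*$, which also carries the direct term $(\Imat-\Dmat_W)^{-1}(\Wmat\odot\Amat)\one$; with signed weights and $\alpha>0$ the source can vanish while beliefs are still displaced. Your nonnegative-weight example is unaffected, because there the source and both terms of $\bvec^*$ are entrywise positive.
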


\begin{proof}
The corrupted perception row reads $M_{ij}^*=(1-\gamma_{ij})M_{ij}^*+\gamma_{ij}(x_j^*+a_{ij})$ at the fixed point, forcing $M_{ij}^*=x_j^*+a_{ij}$ for $\gamma_{ij}>0$; hence $\Pmat^*=[M_{ij}^*-x_j^*]=\Amat$. Substituting $\Mmat^*=\mathbf S\xvec^*+\Amat$ into the belief and expression rows yields the stated linear system; convergence is unaffected because the corruption enters only the constant (affine) term, leaving $\Top$ and hence $\spr(\Top)<1$ unchanged. Verified numerically to machine precision, including $\Pmat^*=\Amat$ exactly.
\end{proof}

The contrast with Theorem~\ref{thm:transience} is the deepest division in the theory of epistemic attack, and it cuts the phenomenon of collective misbelief into two kinds that are routinely conflated and that require opposite remedies.

\paragraph{Episodic propagation is tied to discovery and is recoverable.} A false claim injected into beliefs or expressions---an inferential obstruction or a dramaturgical exploitation, applied once or even repeatedly but as discrete events---is a \emph{state} attack. By Theorem~\ref{thm:transience} its effect decays once the injection stops, and the natural remedy is \emph{exposure}: revealing the falsehood removes the forcing, and the population relaxes to its uncorrupted fixed point. Discovery is curative because the distortion lives in the state, and the state heals; the recovery timescale is set by the spectral gap. This is the kind of misinformation that fact-checking addresses, and against which fact-checking is, in principle, sufficient.

\paragraph{Structural sedimentation is architecture-level and impervious to disclosure.} A context manipulation---corruption of the medium through which agents observe one another---is not a state attack but a boundary condition. By Theorem~\ref{thm:channel} it displaces the fixed point \emph{permanently}, and, crucially, \emph{regardless of whether the manipulation is discovered}. Exposing that a feed is ranked, or that a translation is slanted, or that a poll oversamples, does nothing on its own: as long as the corrupted medium remains in place, $\Pmat^*=\Amat$ and the displacement persists. The distortion has sedimented into the architecture; it is a property of the channel, not of any act transmitted through it, and no quantity of revelation about the acts removes it. The only remedy is architectural---restoring an honest medium, which restores the vanishing of the perception term (Theorem~\ref{thm:closure}(b)) and lets the residual displacement decay by Theorem~\ref{thm:transience}.

The practical error this distinction guards against is applying the remedy for one kind to the other, and it is a consequential error. Treat structural sedimentation as though it were episodic propagation---answer a corrupted recommender with fact-checks and disclosures while leaving the recommender in place---and one addresses the class of attack the architecture is \emph{not} carrying. Matters can then get worse. Each debunked episode is regenerated by the standing architecture, and effort spent on exposure is effort not spent on the only intervention that works, which is to change the channel. Treat episodic propagation as structural, in the other direction---re-architect a communication channel in panicked response to a transient rumor that Theorem~\ref{thm:transience} guarantees would have decayed on its own---and one over-corrects at cost, risking damage to an epistemic infrastructure that was functioning. A rumor is an injection. A recommender is a boundary condition. They are different mathematical objects, and they do not admit the same cure.

\section{The Concealer: ambient distortion as a distinct equilibrium}\label{sec:concealer}

The third adversary attacks neither states nor structure but the \emph{expression environment} itself: it raises the price of sincerity. Pile-ons, public example-making, selective enforcement, the strategic deployment of outrage: these act as an induced expressive bias $\bias\neq\mathbf 0$ (and exert upward pressure on conformity $\alpha$). By the closure theorem, the population then converges to genuine, self-sustaining pluralistic ignorance, $\cvec^*=\Lop\bias$, amplified without bound as conformity rises. Nothing in the attack need be false: every individual act may be a truthful report of a real punishment for dissent. The distortion is manufactured entirely out of incentives, not out of falsehoods, which is what makes it invisible to any defense aimed at falsehood.

This gives a precise mechanism to a distinction often drawn descriptively but rarely explained: the difference between \emph{episodic} mistrust---particular breaches, accumulating---and \emph{ambient} mistrust, the diffuse standing sense that others cannot be trusted or that candor is unsafe. The framework shows these are not the same phenomenon at different intensities. Episodic mistrust is a sequence of state attacks: discrete injections that, by Theorem~\ref{thm:transience}, decay unless renewed, so that ambient mistrust construed as accumulated episodes would require continual replenishment and would relax the moment the episodes ceased. Ambient mistrust in the present theory is something else entirely: a \emph{distinct equilibrium} $\cvec^*=\Lop\bias$, produced not by any episode but by the standing expression environment---the cost of sincerity encoded in $\bias$. It does not accumulate from episodes and does not decay when episodes stop; it is a fixed point, sustained by architecture rather than by events. Two populations with identical episodic histories can sit at different ambient equilibria if their expression environments differ, and a single population's ambient level is set by $\bias$ and $\alpha$, not by its recent breaches. The distinction matters because the two call for different responses, and---as with the episodic/structural divide of the previous section---applying the wrong one is worse than doing nothing.

\begin{corollary}[Transparency is inert against, and can entrench, ambient distortion]\label{cor:transparency}
Let the concealment equilibrium be $\cvec^*=\Lop\bias$ with $\bias\neq\mathbf 0$. An intervention that increases the observability of expressed belief without reducing the cost of sincere dissent leaves $\bias$ unchanged, hence leaves $\cvec^*$ unchanged; and if by making endorsement more visible it raises the perceived-consensus weight and thus effective conformity $\alpha$, it \emph{increases} $\lVert\cvec^*\rVert$, deepening the distortion it was meant to cure.
\end{corollary}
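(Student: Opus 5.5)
The plan rests on the closed form of Theorem~\ref{thm:closure}(c), $\cvec^*=\Lop\bias$ with $\Lop=[\Imat-(\Imat-\Dmat_W)^{-1}\bar{\mathbf A}](\Imat-\Kop)^{-1}$ and $\Kop=(\Imat-\Dmat_\alpha)(\Imat-\Dmat_W)^{-1}\bar{\mathbf A}+\Dmat_\alpha\widehat{\mathbf G}$. The first step is to see what this formula depends on. It involves only $\Wmat$, $\Dmat_\alpha$, the climate operator $\widehat{\mathbf G}$ and the bias $\bias$; it contains no perception rate $\gamma_{ij}$. A transparency intervention is modelled as an increase in observability. In the framework that means either raising the perception rates $\gamma_{ij}$ (faster, more accurate tracking of expression) or, in the sharpest case, setting $\gamma_{ij}=1$. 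Under the hypothesis that the intervention does not reduce the cost of sincere dissent, $\bias$ is held fixed. Provided the hypotheses $\bar w<1$, $\bar\alpha<1$, $\gamma_{\min}>0$ still hold, Theorem~\ref{thm:closure}(a) gives convergence to the same formula, so $\cvec^*$ is literally unchanged. This is the inertness clause. Only the rate $\spr(\Top)$ moves, which matches the $\gamma$-independence already noted in Figure~\ref{fig:regimemap}.

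The second clause says visibility of endorsement raises the effective conformity $\alpha$. I will prove it for uniform conformity $\alpha_i=\alpha$, interpreting it as showing that $\lVert\cvec^*\rVert$ increases in $\alpha$. Write $\mathbf R=(\Imat-\Dmat_W)^{-1}\bar{\mathbf A}$, so $\Kop(\alpha)=\mathbf R+\alpha(\widehat{\mathbf G}-\mathbf R)$. Differentiating $(\Imat-\Kop)^{-1}$ gives
\[
\frac{d\cvec^*}{d\alpha}=(\Imat-\mathbf R)(\Imat-\Kop)^{-1}(\widehat{\mathbf G}-\mathbf R)(\Imat-\Kop)^{-1}\bias .
\]
A fully general monotonicity of a norm in $\alpha$ is false for arbitrary $\bias$, so I would establish it under the symmetric-family assumptions of Theorem~\ref{thm:crossover}, by diagonalising. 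The matrices $\mathbf R$ and $\widehat{\mathbf G}$ are both circulant-symmetric there, so they commute and share eigenvectors. On $\one$ they have eigenvalues $r$ and $1$. On $\one^\perp$ they have eigenvalues $\mu_R=-w_o/(1-w_d)$ and $\mu_G=-1/(N-1)$.

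The key step, and the main obstacle, is the eigenvalue computation. On each common eigendirection the factor of $\cvec^*$ is $(1-\mu_R)/\bigl(1-(1-\alpha)\mu_R-\alpha\mu_G\bigr)$. This factor is increasing in $\alpha$ exactly when $\mu_G>\mu_R$. On $\one$ this holds ($1>r$), and this is where the divergence of Theorem~\ref{thm:closure}(d) comes from. On $\one^\perp$ I must compare $-1/(N-1)$ with $-w_o/(1-w_d)$. This reduces to $(N-1)w_o\le 1-w_d$, i.e.\ $r\le1$, which is already assumed. So every spectral factor is nondecreasing in $\alpha$ (strictly when $r<1$), and therefore $\lVert\cvec^*\rVert_2$ strictly increases for every $\bias\neq\mathbf 0$.

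For general $\Wmat$ I would state the conclusion more modestly. It holds for the operator norm $\lVert\Lop\rVert$, via Theorem~\ref{thm:closure}(d), and for the component of $\bias$ along the Perron direction. I would flag that the strict pointwise claim needs the commuting, symmetric structure.
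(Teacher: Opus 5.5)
Your inertness clause is correct and is the paper's own argument: by Theorem~\ref{thm:closure}(b)--(c) the fixed point is $\gamma$-free, so raising observability with $\bias$ held fixed leaves $\cvec^*$ unchanged.

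The second clause fails at the step you single out as key, because of a sign error. On $\one^\perp$ you have $\mu_G-\mu_R=-\frac{1}{N-1}+\frac{w_o}{1-w_d}=\frac{r-1}{N-1}$. This is \emph{negative} when $r<1$, so the condition $\mu_G>\mu_R$ is equivalent to $r>1$, not to $r\le 1$. The spectral factor on $\one^\perp$ is therefore $f_\perp(\alpha)=\frac{N-1+r}{N-1+r+\alpha(1-r)}$, which is strictly \emph{decreasing} in $\alpha$; on $\one$ the factor is $1/(1-\alpha)$. Hence for any $\bias\perp\one$ you get $\lVert\cvec^*\rVert_2=f_\perp(\alpha)\lVert\bias\rVert_2$, which falls as conformity rises. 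Conformity amplifies the common (mean) component of the bias but averages away its dispersion. So the pointwise claim that $\lVert\cvec^*\rVert_2$ increases for every $\bias\neq\mathbf 0$ is false even in the symmetric family, and no repair of the eigenvalue argument can rescue it. The paper's own worked instance shows this. With $\bias=(0.5,-0.2,0.3,0.1,-0.4)$, $\lVert\cvec^*\rVert\approx0.742$ at $\alpha=0$ (where $\cvec^*=\bias$), but $\lVert\cvec^*\rVert\approx0.739$ at $\alpha=0.5$.

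What can be proved is the worst-case, operator-norm version, which is what the paper's proof actually invokes. In the symmetric family $\Lop$ is symmetric with eigenvalues $1/(1-\alpha)$ and $f_\perp(\alpha)\le 1$. So $\lVert\Lop\rVert_2=1/(1-\alpha)$, which is strictly increasing and divergent. The mean component of $\cvec^*$ is $\bar\beta\,\one/(1-\alpha)$, so $\lVert\cvec^*\rVert$ eventually grows without bound whenever $\bar\beta\neq0$.

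The paper's step from ``$\lVert\Lop\rVert$ increasing'' to ``$\lVert\Lop\bias\rVert$ increases'' is the same non-sequitur you were trying to avoid. The clause should therefore be read as a statement about $\sup_{\lVert\bias\rVert=1}\lVert\cvec^*\rVert$, or about biases with a nonzero common component, not about every fixed $\bias$.

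Your fallback for general $\Wmat$ also overreaches in two ways:
\begin{itemize}
\item Theorem~\ref{thm:closure}(d) gives divergence of $\lVert\Lop\rVert$ as $\alpha\to1$, not monotonicity in $\alpha$.
\item For non-symmetric $\Wmat$, the vector $\one$ is generally not an eigenvector of $(\Imat-\Dmat_W)^{-1}\bar{\mathbf A}$, so there is no clean ``Perron component'' of $\bias$ to track.
\end{itemize}
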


\begin{proof}
By Theorem~\ref{thm:closure}(b)--(c) the concealment fixed point depends on the expression environment through $\bias$ and, via $\Lop$ and $\Kop$, through $\alpha$; it is independent of the observability of expression, since the perception channel vanishes at the fixed point and the concealment channel does not involve how well expression is perceived. Raising observability of endorsement without altering the sincerity cost holds $\bias$ fixed, leaving $\cvec^*=\Lop\bias$ fixed. If in addition it raises $\alpha$, then since $\lVert\Lop\rVert$ is increasing in $\alpha$ with $\lVert\Lop\rVert\to\infty$ as $\alpha\to1$ by Theorem~\ref{thm:closure}(d), $\lVert\cvec^*\rVert=\lVert\Lop\bias\rVert$ increases.
\end{proof}

This is why transparency initiatives so often fail against ambient distortion, and it says precisely how they fail. Making expressed endorsement more visible---publishing who supported what, surfacing the majority position, broadcasting rates of compliance---does not touch the quantity that produces pluralistic ignorance. That quantity is the gap between private belief and public expression, driven by the cost of dissent. Worse, making the ambient endorsement more salient can raise the very conformity that amplifies the gap. A well-intentioned transparency program then \emph{reinforces} the concealment equilibrium rather than dissolving it: agents see more clearly what everyone is expressing, infer the consensus is even stronger than they thought, and conceal their dissent even more. The only interventions that move $\cvec^*$ are those that lower the cost of sincere dissent---protected disagreement, anonymity where appropriate, the secret ballot---because those alone reduce $\bias$. Against the Persuader and the context manipulator, exposure and architectural repair respectively help. Against the Concealer, only the price of sincerity matters, and transparency aimed at the wrong quantity is not merely inert but counterproductive. This is perhaps the sharpest practical lesson of the adversarial theory. The instinct to answer collective misperception with more visibility is exactly wrong when the misperception is a concealment equilibrium, and the framework says when that is the case.

\section{Backchannels, trust conduits, and the closed epistemic walk}\label{sec:gossip}

``I heard $X$ say that $Y$ thinks $P$'' is an injection into a composed walk of the attribution slice, in the sense of Theorem~\ref{thm:path}. Two phenomena usually treated separately---the \emph{backchannel} (a rumor passed along a chain of intermediaries) and the \emph{trust conduit} (corroboration that returns to its source through the network)---are, in the walk calculus, the same object seen from two ends: both are \emph{closed epistemic walks}, sequences of attributions that begin and end at the same node. Theorem~\ref{thm:path} prices them, and the pricing yields a concrete audit rule.

A planted attribution reaching a listener through a chain of length $L$ is weighted by the product of the $L$ attribution weights along the chain; when $\spr(\Bten)<1$ these products decay geometrically in depth (Theorem~\ref{thm:closed}), so \emph{deep backchannel gossip is self-attenuating}---a third-hand attribution is structurally cheaper to plant but structurally weaker on arrival. The danger is not depth but closure.

\begin{theorem}[Echo multiplier of a closed walk]\label{thm:echo}
Let $\spr(\Bten)<1$. An attribution planted on the directed edge $(X,Y)$ returns to its planting site through walks from $Y$ back to $X$ of every length, with total weight
\begin{equation}
\mathcal{E}_{XY} = \sum_{L\ge0}(\Bten^L)_{YX} = \bigl[(\Imat-\Bten)^{-1}\bigr]_{YX},
\label{eq:echo}
\end{equation}
the \emph{echo multiplier}; and the sensitivity of the population's total closed-walk mass to the planted edge is the squared resolvent, $\partial\tr(\Imat-\Bten)^{-1}/\partial B_{XY}=[(\Imat-\Bten)^{-2}]_{YX}$. For nonnegative primitive attribution weights, every entry of the resolvent grows without bound as $\spr(\Bten)\to1$: as the attribution operator approaches spectral radius one, the resolvent develops a pole, and a single planted attribution generates unbounded apparent corroboration---the claim returns to its planter, and to everyone else, as seemingly independent confirmation. An echo chamber, in the quantitative sense the walk calculus makes available (Section~\ref{sec:invariance}), is an attribution network whose spectral radius approaches one: the divergence is a resolvent pole---a property of accumulated attribution weight, not of common belief---and it is a pole, not a critical point, carrying no scaling exponent.
\end{theorem}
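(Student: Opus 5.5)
The plan is to prove the echo-multiplier identity, the trace sensitivity, and the blow-up at $\spr(\Bten)\to 1$, in that order.

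\emph{Step 1: the Neumann series.} Since $\spr(\Bten)<1$, Gelfand's formula (as in the proof of Theorem~\ref{thm:closed}) gives $\lVert\Bten^L\rVert\le C\theta^L$ for some $\theta\in(\spr(\Bten),1)$. Hence $\sum_{L\ge0}\Bten^L$ converges absolutely. Telescoping $(\Imat-\Bten)\sum_{L=0}^{M}\Bten^L=\Imat-\Bten^{M+1}\to\Imat$ identifies the sum with $(\Imat-\Bten)^{-1}$. Reading off the $(Y,X)$ entry and applying Theorem~\ref{thm:path} term by term expresses $\mathcal{E}_{XY}$ as the total weight of walks $Y\to\cdots\to X$ of every length. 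Appending the planted edge $(X,Y)$ to such a walk closes it at $X$, which justifies the "returns to its planting site" reading. The $L=0$ term contributes only when $X=Y$, consistent with the zero-length walk.

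\emph{Step 2: the trace sensitivity.} Write $R=(\Imat-\Bten)^{-1}$. Differentiate $R(\Imat-\Bten)=\Imat$ to get $\partial R=R\,(\partial\Bten)\,R$. With $\partial\Bten/\partial B_{XY}=e_Xe_Y^{\top}$ this gives $\partial\tr R/\partial B_{XY}=\tr(R e_X e_Y^{\top} R)=e_Y^{\top}R^2e_X=[(\Imat-\Bten)^{-2}]_{YX}$, by cyclicity of the trace. Alternatively, one can differentiate the series $\sum_L\tr\Bten^L$ termwise, using $\partial\tr\Bten^L=L\,[\Bten^{L-1}]_{YX}$ and $\sum_L L\Bten^{L-1}=R^2$. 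This is legitimate because the differentiated series converges uniformly near $\Bten$.

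\emph{Step 3: divergence as $\spr(\Bten)\to1$.} This is the main obstacle, since "grows without bound" needs a precise parametrization. I would take a family $\Bten_s$ of nonnegative primitive matrices with $\rho_s:=\spr(\Bten_s)\uparrow1$. The canonical choice is $\Bten_s=s\Bten_0$ with $\spr(\Bten_0)=1$ and $s\uparrow1$; a general family converging to a limit with $\rho=1$ can be handled in the same way, assuming a uniform primitivity constant.

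By Perron--Frobenius there are a positive right eigenvector $u_s$ and a positive left eigenvector $v_s$ with $v_s^{\top}u_s=1$, and nonnegativity makes every term of the Neumann series nonnegative. The Perron projection gives the lower bound
\[
R_s\ \ge\ \sum_{L\ge0}\Bten_s^L\ \ge\ \tfrac{1}{1-\rho_s}\,u_sv_s^{\top}-(\text{bounded}),
\]
entrywise. To make this clean in the scalar family, note that $\Bten_0^L\to$ a positive rank-one limit for primitive $\Bten_0$, so $\Bten_0^L\ge c\,\mathbf J$ entrywise for all $L\ge L_0$ with some $c>0$. Therefore $[R_s]_{YX}\ge c\sum_{L\ge L_0}s^L=c\,s^{L_0}/(1-s)\to\infty$ for every pair $(Y,X)$, and the same bound applied to $R_s^2\ge R_s$ gives divergence of the sensitivity.

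Finally, the "pole, not critical point" remark follows because $R_s$ is a rational function of $s$ whose only singularity at $s=1$ comes from the simple Perron eigenvalue; it is therefore a simple pole with residue $u_0v_0^{\top}$. The divergence is exactly $(1-s)^{-1}$ rather than an anomalous power. The remaining interpretive clauses are walk-weight statements covered by Proposition~\ref{prop:walkonly}.
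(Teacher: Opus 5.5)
Your proposal is correct and follows the paper's proof: the Neumann series read entrywise through Theorem~\ref{thm:path}, the identity $\partial(\Imat-\Bten)^{-1}=(\Imat-\Bten)^{-1}(\partial\Bten)(\Imat-\Bten)^{-1}$ combined with trace cyclicity, and Perron--Frobenius for the blow-up. Fixing the parametrization $\Bten_s=s\Bten_0$, or requiring a uniform primitivity bound along the family, sharpens the paper's one-line appeal to Perron--Frobenius and is in fact necessary: the family $\Bten_s=\bigl(\begin{smallmatrix} s & 1-s\\ 1-s & 0\end{smallmatrix}\bigr)$ is nonnegative and primitive with $\spr(\Bten_s)\to1$, yet only its $(1,1)$ resolvent entry diverges.
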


\begin{proof}
The Neumann series $\sum_{L\ge0}\Bten^L=(\Imat-\Bten)^{-1}$ converges since $\spr(\Bten)<1$, and its $(Y,X)$ entry sums the weights of all walks from $Y$ to $X$ of every length, by Theorem~\ref{thm:path}. The trace derivative is $\partial\tr(\Imat-\Bten)^{-1}/\partial B_{XY}=\tr[(\Imat-\Bten)^{-1}\mathbf e_X\mathbf e_Y^\top(\Imat-\Bten)^{-1}]=[(\Imat-\Bten)^{-2}]_{YX}$, using $\partial(\Imat-\Bten)^{-1}=(\Imat-\Bten)^{-1}(\partial\Bten)(\Imat-\Bten)^{-1}$. Divergence as the Perron root approaches $1$ follows from Perron--Frobenius applied to $\sum_L\Bten^L$. Both identities verified numerically to machine precision.
\end{proof}

The result gives the backchannel and the trust conduit a common formal anchor and a single defensive rule (Figure~\ref{fig:echo}). Because both are closed walks, both are priced by~\eqref{eq:echo}, and both are dangerous exactly when the attribution network runs near $\spr(\Bten)=1$. The audit rule follows from the structure of the walk:
\begin{quote}
\emph{Demand the belief at the end of the attribution chain; treat closed chains as structurally adversarial.}
\end{quote}
An attribution chain that terminates in a private belief---a $b$ at the end of the walk, someone who actually holds the view rather than merely reporting that others hold it---is an open walk anchored in ground truth, and is genuine evidence. A chain that closes on itself---``people are saying that people are saying,'' corroboration with no terminating believer---is structurally an echo, and by Theorem~\ref{thm:echo} its apparent evidential weight is manufactured by the network's own recirculation, not supplied by any independent source. The rule operationalizes the theorem: trace each attribution to its terminal, and if the terminal is not a belief but a return to the chain's own origin, discount the corroboration to zero. Closed chains are presumptively adversarial not because they are always deliberately planted but because their apparent evidential weight is generated by closure itself, which an adversary operating a network near spectral criticality can exploit at will.

\begin{figure}[t]
\centering
\includegraphics[width=0.62\linewidth]{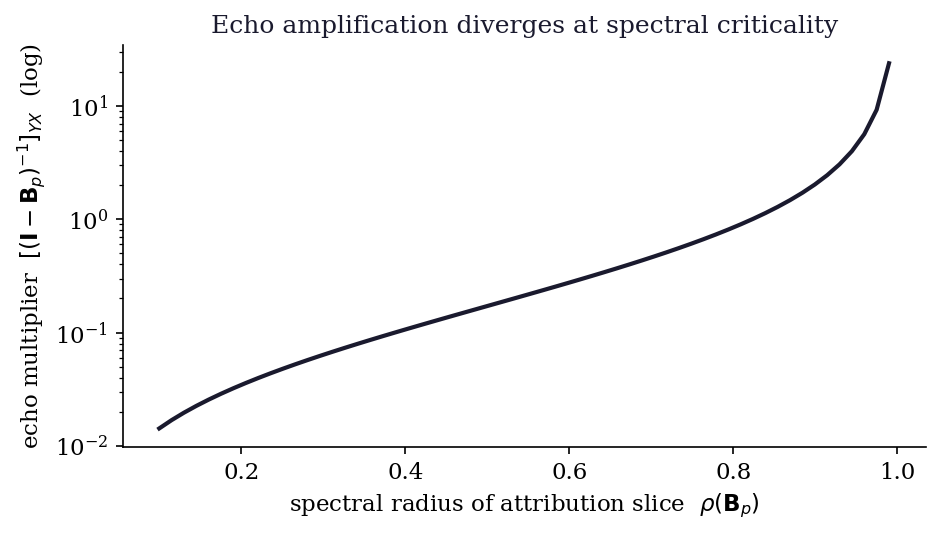}
\caption{The echo multiplier is a resolvent entry~\eqref{eq:echo} and grows without bound as the attribution network approaches spectral radius one---a resolvent pole. Near $\spr=1$, a single planted attribution returns as unbounded apparent corroboration: the formal signature of an echo chamber.}
\label{fig:echo}
\end{figure}

Finally, Assumption~\ref{ass:composition} earns its keep defensively, as anticipated in Section~\ref{sec:higher}: compositional cognition compresses the attack surface available to an adversary who injects higher-order attributions.

\begin{proposition}[Compositional cognition compresses the attack surface]\label{prop:composition}
An agent who stores belief hierarchies as free parameters exposes an attack surface of $N^L K$ independently injectable attributions at depth $L$. An agent satisfying Assumption~\ref{ass:composition} stores $N^2K$ parameters in total, from which all higher-order attributions are computed; any injected higher-order claim inconsistent with the agent's own composed estimate $(\Bten^2)_{iY}$ is detectable by direct comparison. Compositionality does not prevent first-order deception, but it makes higher-order deception either redundant (if consistent with the composition) or detectable (if not).
\end{proposition}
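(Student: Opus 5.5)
The plan is to treat the proposition as three short claims: a parameter count for free hierarchies, a parameter count for compositional agents, and a dichotomy for injected higher-order claims. Each will be handled by counting or by appeal to Theorem~\ref{thm:path}.

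\emph{Free hierarchies.} A depth-$L$ attribution is indexed by a proposition and a walk of length $L$, namely ``$i_0$'s model of $i_1$'s model of $\cdots$ of $i_L$'s belief in $p$'' with $i_0=i$. I will fix the convention that the observer and the next $L-1$ indices, $L$ agent indices in all, range freely over $A$, while the terminal target is determined by the chain's last step. Under this convention the free-parameter family at depth $L$ is indexed by $A^L\times Q$, which gives $N^LK$. Without Assumption~\ref{ass:composition}, no constraint ties these entries to one another or to the first-order slice. Each is therefore an independently injectable coordinate.

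\emph{Compositional agents.} Under Assumption~\ref{ass:composition} the stored state is the tensor $\Bten\in\real^{N\times N\times K}$, which has $N^2K$ entries. Theorem~\ref{thm:path} then makes every depth-$L$ attribution the explicit polynomial $(\Bp^L)_{ij}$ in those entries, so nothing beyond $N^2K$ numbers is stored. The consequence for the attacker follows directly. The only way to move a computed higher-order attribution is to move first-order entries, so the attack surface is the first-order surface and nothing more. This is the sense of ``compositionality does not prevent first-order deception.''

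\emph{The dichotomy.} Suppose an adversary asserts a second-order value $\tilde B$ for ``$i$'s model of what others take $Y$ to believe.'' The agent computes $(\Bten^2)_{iY}=\sum_m B_{im}B_{mY}$ from stored data and compares. If $\tilde B=(\Bten^2)_{iY}$, the claim carries no information beyond what the agent already holds, so it is redundant and accepting it changes nothing. If $\tilde B\neq(\Bten^2)_{iY}$, the discrepancy $\tilde B-(\Bten^2)_{iY}\neq0$ flags the claim, so it is detectable. The same test extends to depth $L$ using $(\Bten^L)_{iY}$.

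\emph{Expected obstacle.} I expect the main difficulty to be making the counting convention precise, specifically whether the terminal index adds a factor of $N$. Also, ``detectable'' must be read as detection relative to the agent's own composition, not relative to the truth. I will state both explicitly and note that the comparison is a walk-weight check, consistent with Proposition~\ref{prop:walkonly}.
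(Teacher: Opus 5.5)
Your proposal takes the same route as the paper: the two parameter counts (free storage versus the $N^2K$ first-order entries), followed by the consistency test $\tilde B\overset{?}{=}(\Bten^2)_{iY}$, which yields the redundant-or-detectable dichotomy. On the counting convention you flag, your stated convention does not work as written, since a length-$L$ walk whose terminal index ranges freely gives $N^{L+1}K$ chains. The clean fix is to let a depth-$L$ chain carry $L$ agent indices in total, the last being the belief-holder, so that the first-order slice is the $L=2$ case and matches $N^2K$; either way the exponential-versus-fixed compression, which is the point of the proposition, is unaffected.
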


\begin{proof}
The parameter counts are immediate: a free depth-$L$ hierarchy has $N^L$ attribution chains per proposition, while a compositional agent stores only the $N^2$ first-order attributions per proposition and computes the rest. For detectability, a compositional agent's second-order attribution is determined by its first-order state, so an asserted value $\tilde B$ for ``$X$'s estimate of $Y$'' admits the consistency test $\tilde B\overset{?}{=}(\Bten^2)_{iY}$, computable from the agent's own $N^2$ entries; rejection flags the injection.
\end{proof}

\section{Defense by design}\label{sec:defense}

Each family meets its counter in a parameter, and because the families are individuated by the layer they target, the counters are as distinct as the targets. Against inferential obstruction (content layer): \emph{exposure}. First-order lies are state attacks; by Theorem~\ref{thm:transience} they decay once revealed, so fact-checking is the correct and sufficient remedy \emph{for this family alone}. Against dramaturgical exploitation (climate layer): \emph{lower conformity}. Theorem~\ref{thm:crossover} makes conformity the inverse price of climate manipulation, so institutions of sincere expression are quantitative defenses; the secret ballot is an institutionalized $\alpha\approx0$ reading of private belief, which is exactly why elections ``surprise'' manipulated public discourses---the ballot reads $\bvec$ while the discourse reads $\xvec$. Against context manipulation (medium layer): \emph{architecture, not disclosure}. Theorem~\ref{thm:channel} locates permanent distortion in the medium, and Section~\ref{sec:channel} shows exposure is inert against it; the defense is verifiable, unranked access to what others actually said. Against the Perverter (structure): \emph{spectral slack}. The margins $1-\bar w$ and $1-\spr(\Lop)\spr(\Wmat)$ are stability reserves; a polity can afford asserted inferential entailments only in proportion to its slack. Against the Concealer (expression environment): \emph{lower the cost of sincerity}, since by Corollary~\ref{cor:transparency} transparency is inert or counterproductive here. Against closed-walk corroboration: \emph{audit chains to their terminals} and operate the attribution network with spectral slack, where the echo multiplier is bounded.

The composite lesson inverts the emphasis of content-centric defense. The three families target three layers, and the single most common error in practice is to apply the content-layer remedy---exposure---to a climate-layer or medium-layer attack, where it is respectively insufficient or actively counterproductive. Moderating messages addresses inferential obstruction, which is the one family the system already heals on its own; it does nothing for the dramaturgical and contextual attacks that dominate precisely in the high-conformity organizations most tempted to reach for it. A theory of the defense of social knowledge must begin by identifying which layer is under attack, because the remedy that repairs one layer can deepen the damage to another.

\part{The Transient Fragility of Hierarchies}\label{part:physics}

\section{Non-normal reactivity: why spectral safety is not transient safety}\label{sec:reactivity}

The robustness theorem of Part~\ref{part:adversarial} (Theorem~\ref{thm:transience}) certifies that one-shot attacks decay, because $\spr(\Top)<1$. But asymptotic decay is a statement about the long run, and it conceals a phenomenon that the short run can hide: a perturbation may grow, sometimes enormously, \emph{before} it decays. Whether it does depends on a property of the operator that the spectrum does not see---its \emph{non-normality}---and the property is generic in exactly the epistemic structures that describe real institutions: directed, hierarchical ones. This final analytical part establishes that hierarchical epistemic populations are \emph{reactive}---provably amplifying, at arbitrarily small spectral radius, perturbations that the spectral analysis certifies as safe---and reports numerically that the transient gain grows with hierarchy depth over the accessible range, while being candid about the limits of that scaling claim. It is the deepest layer of the fragility story, and it revises the robustness intuition with which the study of these systems usually begins.

\subsection{Reactivity and the Kreiss constant}

An operator $\Top$ is \emph{normal} if it commutes with its transpose, $\Top\Top^\top=\Top^\top\Top$, equivalently if its eigenvectors can be chosen orthogonal. For a normal operator the spectrum controls everything: $\lVert\Top^k\rVert=\spr(\Top)^k$, and a perturbation decays monotonically at the spectral rate. Real epistemic influence, however, is directed and asymmetric---leaders are heard without listening, information flows down chains of command, attention is unreciprocated---and directed influence matrices are generically \emph{non-normal}. For a non-normal operator the eigenvectors are far from orthogonal, and a perturbation expressed in that skewed basis can have its components partially cancel initially and then, as they decay at different rates and rotate, transiently reinforce, producing growth before the eventual spectral decay asserts itself. This is \emph{reactivity} \citep{neubert1997,trefethen2005,asllani2018}, and it is the manipulator's opening: an attack that the spectrum certifies as transient can be amplified by orders of magnitude on its way to decaying, and repeated at the timescale of maximum amplification.

The relevant quantity is not the spectral radius but the \emph{transient gain}
\begin{equation}
\Gamma = \sup_{k\ge0}\lVert\Top^k\rVert_2,
\end{equation}
the largest amplification any perturbation attains over all horizons. It is bracketed by the \emph{Kreiss constant}
\begin{equation}
\Kreiss(\Top) = \sup_{\lvert \zeta\rvert>1}(\lvert\zeta\rvert-1)\,\lVert(\zeta\Imat-\Top)^{-1}\rVert_2
\end{equation}
through the Kreiss matrix theorem, $\Kreiss(\Top)\le\Gamma\le e\,n\,\Kreiss(\Top)$, and it is diagnosed geometrically by the protrusion of the $\varepsilon$-pseudospectrum---the set of $\zeta$ where $\lVert(\zeta\Imat-\Top)^{-1}\rVert_2\ge\varepsilon^{-1}$---beyond the unit circle. For a normal operator $\Gamma=1$ and the Kreiss constant is $1$: no transient growth is possible, and the spectral certificate is exact. We confirmed this for symmetric, undirected interactive-belief populations, where $\Top$ is nearly normal and $\Gamma\approx1.03$ (Figure~\ref{fig:pseudospectra}, left): the eigenvalues and the pseudospectral contours nearly coincide, and Theorem~\ref{thm:transience}'s reassurance is quantitatively exact. Hierarchy destroys that reassurance.

\subsection{The square-root reactivity law}

Consider a feed-forward epistemic hierarchy: agent $i$ attends only to agents above it in a ranking, and no one attends downward (opinion leaders are heard but do not listen), with fast perception $\gamma\to1$. The influence matrix is then nearly nilpotent, so its spectral radius---and that of the stacked operator $\Top$---can be driven arbitrarily small. Yet the pseudospectrum bulges dramatically toward and past the unit circle (Figure~\ref{fig:pseudospectra}, right): with $\spr(\Top)=0.10$, at $\varepsilon=0.05$ the pseudospectral radius exceeds the spectral radius by $0.85$---a seventeen-fold larger excursion beyond the spectrum than in the symmetric case. The eigenvalues cluster near the origin while the resolvent norm is large out near the unit circle, the defining geometric signature of non-normality.

\begin{figure}[t]
\centering
\includegraphics[width=0.95\linewidth]{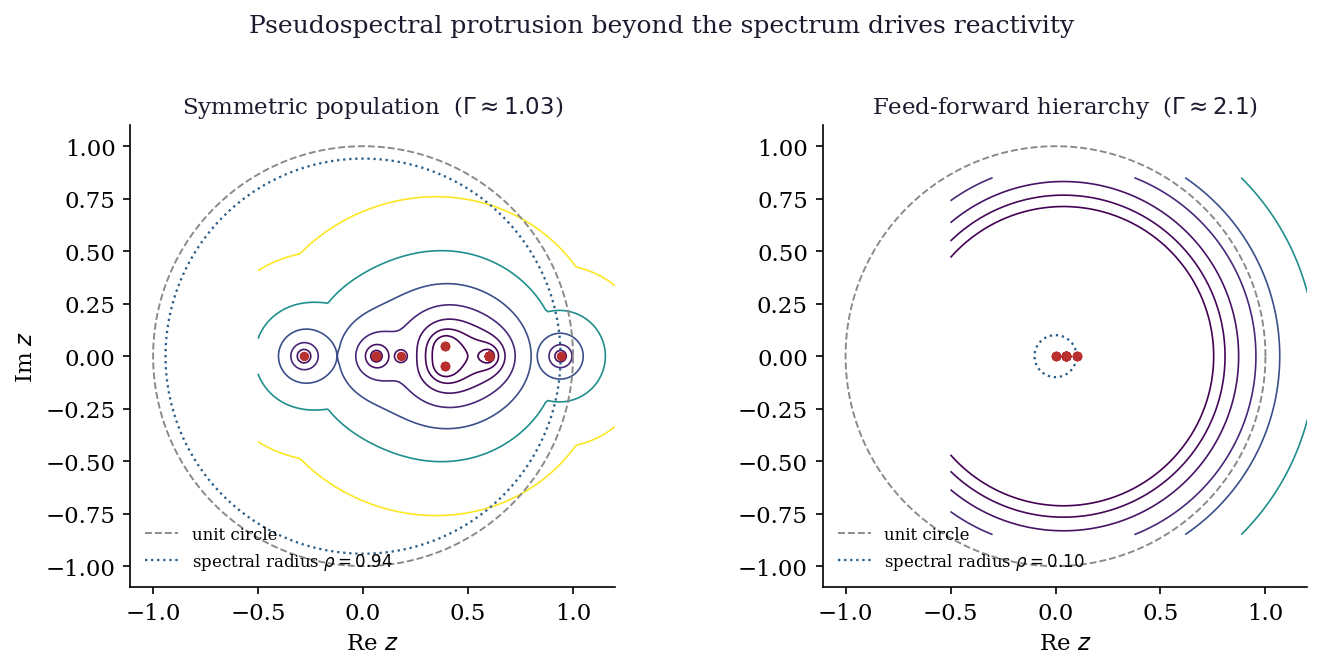}
\caption{$\varepsilon$-pseudospectra of the stacked interactive-belief operator. Eigenvalues (red dots), resolvent-norm contours, dashed unit circle, dotted spectral-radius circle. In the symmetric population (left) eigenvalues and contours nearly coincide and $\Gamma\approx1.03$. In the feed-forward hierarchy (right) the eigenvalues cluster near the origin ($\spr=0.10$) while the resolvent norm bulges to $\lvert\zeta\rvert\approx0.95$---the signature of the non-normality that drives transient amplification.}
\label{fig:pseudospectra}
\end{figure}

This protrusion produces transient amplification. We report its dependence on hierarchy depth as a numerical finding rather than a theorem: the scaling is clean over the range we can compute, but---as we are careful to note---its asymptotic form is not established analytically, and in one tractable limit the gain saturates rather than growing. We state what the computations support and what they do not.

We first record what \emph{is} a theorem: the operator is non-normal and reactive at all, which is what makes the spectral certificate insufficient.

\begin{proposition}[Hierarchical reactivity is possible at arbitrarily small spectral radius]\label{prop:reactive}
For the feed-forward epistemic hierarchy with fast perception, the stacked operator $\Top$ is non-normal, and its transient gain strictly exceeds unity, $\Gamma=\sup_{k\ge0}\lVert\Top^k\rVert_2>1$, while its spectral radius $\spr(\Top)$ can be made arbitrarily small. Consequently a perturbation can be transiently amplified even though every eigenvalue lies arbitrarily deep inside the unit disc, so the spectral gap does not bound the transient response.
\end{proposition}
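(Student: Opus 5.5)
The plan is to exhibit an explicit witness: a single unit vector whose first iterate under $\Top$ has 2-norm larger than one. We then control the spectrum through the block-triangular structure that the feed-forward ordering induces.

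\emph{Setup.} Fix a strict ranking of agents. The hierarchy has $W_{ij}\neq 0$ only for $j$ ranked above $i$, plus possibly small self-weights $W_{ii}=\epsilon$. Take $\gamma_{ij}=1$ and conformity $\alpha$. The top agent has no one above it, so its climate operator is empty; we set $\alpha=0$ for that agent, or equivalently let its climate be its own belief. Nothing below depends on this choice.

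\emph{Spectral radius.} Order the stacked state $z=(\bvec,\Mmat,\xvec)$ by agent rank, placing perception entries $M_{ij}$ next to their target $j$. Under the hierarchy, $b_i$ and $x_i$ depend only on $M_{ij}$ with $j$ above $i$, and $M_{ij}$ (with $\gamma=1$) depends only on $x_j$. Each variable is therefore fed only by variables attached to strictly higher-ranked agents, apart from the diagonal self-weight terms. Hence $\Top$ is block triangular in this ordering. Its diagonal blocks are the per-agent maps $b_i\mapsto W_{ii}b_i$ and $x_i\mapsto(1-\alpha_i)W_{ii}b_i$, and the perception block $\Imat-\Dmat_\gamma=\mathbf 0$.

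The spectrum of $\Top$ is the union of the spectra of these blocks, so it is contained in $\{0,\epsilon\}$. Thus $\spr(\Top)\le\epsilon$, which is arbitrarily small; with $\epsilon=0$ the operator is nilpotent. The main technical care is here: the composite triangular ordering has to be checked honestly, including the fact that $b_i^{t+1}$ enters $x_i^{t+1}$ within a single step. This is already absorbed in the stacked form~\eqref{eq:stacked}, whose bottom row uses $(\Imat-\Dmat_\alpha)\Dmat_W$ and $(\Imat-\Dmat_\alpha)\mathbf A$.

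\emph{Transient gain.} Take $N=2$, with agent 2 attending to agent 1 with weight $W_{21}=w$, where $w$ is close to $1$ subject to the row-sum condition~\eqref{eq:rowsum}. Consider $\Top\,\mathbf e$ for $\mathbf e$ the unit vector on the perception entry $M_{21}$. One application puts $w$ into $b_2$ and $(1-\alpha)w+\alpha$ into $x_2$, since with a single upstream agent the climate $\widehat c_2$ equals $M_{21}$. Hence
\[
\lVert\Top\,\mathbf e\rVert_2^2 = w^2+\bigl((1-\alpha)w+\alpha\bigr)^2 ,
\]
which exceeds $1$ whenever $w$ is near $1$; for instance $\alpha=0$ and $w>1/\sqrt2$ suffice. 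Therefore $\Gamma\ge\lVert\Top\rVert_2>1$.

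For general $N$ the same witness works by embedding this pair. Alternatively, take $\mathbf e$ on $x_1$: it broadcasts through the $N-1$ entries $M_{i1}=x_1$, so $\lVert\Top\,\mathbf e\rVert_2=\sqrt{N-1}>1$ for $N\ge3$. This is the cleanest witness, and it needs no tuning of $w$.

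\emph{Non-normality.} This follows at once. A normal matrix satisfies $\lVert\Top^k\rVert_2=\spr(\Top)^k\le 1$ for all $k$, which contradicts $\Gamma>1$ when $\spr(\Top)<1$.

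The consequence claimed in the statement follows because Theorem~\ref{thm:transience} gives decay at rate $\spr(\Top)^k$ only asymptotically. The witness vector grows at $k=1$ even though every eigenvalue lies within $\epsilon$ of the origin.
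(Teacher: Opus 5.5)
Your proof is correct, and it takes a more explicit route than the paper, which runs the logic in the opposite direction. The paper says $\spr(\Top)$ is ``controlled by'' the self-weight because a strictly feed-forward $\Wmat$ is nilpotent, and asserts non-normality from the directionality of the coupling. It then claims a non-normal contraction must have $\lVert\Top\rVert_2>\spr(\Top)$, concludes $\Gamma\ge\lVert\Top\rVert_2>1$ because $\lVert\Top\rVert_2$ is bounded below by the transfer weight $w$, and closes with the Kreiss matrix theorem. Two of those steps do not go through as written. Non-normal matrices can have operator norm equal to their spectral radius. And a lower bound by $w\le1$ (forced by~\eqref{eq:rowsum}) cannot by itself give a norm above $1$. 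Your argument repairs both. Attaching each $M_{ij}$ to its target and ordering by rank checks triangularity of the full stacked operator, not only of $\Wmat$. Agent $j$'s block also contains the transfer $x_j\to M_{ij}$, which your list of diagonal blocks omits. But the order $(b_j,x_j,M_{\cdot j})$ keeps that block triangular with diagonal $(\epsilon,0,\dots,0)$, so $\spr(\Top)\le\epsilon$ holds for every $\alpha$ and $w$. Your explicit witness supplies the strict inequality the paper only asserts. The pair computation shows the right one-step bound is $\sqrt{w^2+((1-\alpha)w+\alpha)^2}\ge\sqrt2\,w$, not $w$, because one perception entry feeds both $b_2$ and $x_2$. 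Deducing non-normality from $\Gamma>1>\spr(\Top)$ is airtight, with no need for Kreiss. One thing your cleanest witness exposes is worth keeping in view. The column of $\Top$ at $x_j$ lies in the broadcast block $\Dmat_\gamma\mathbf S$ and has norm $\bigl(\sum_{i\neq j}\gamma_{ij}^2\bigr)^{1/2}$ for \emph{any} influence matrix. So $\lVert\Top\rVert_2>1$ whenever $\gamma_{\min}\sqrt{N-1}>1$, hierarchy or not. In your proof the hierarchy therefore does its work only in driving $\spr(\Top)$ to zero with strong couplings and positive conformity, which is exactly what the statement needs. The pair witness is the one that ties the amplification to the hierarchical transfer weight, as the paper intends.
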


\begin{proof}
A strictly feed-forward influence matrix is nilpotent, and the self-weights contribute only a diagonal of magnitude $d$, so $\spr(\Top)$ is controlled by $d$ and tends to $0$ as $d\to0$. Non-normality is immediate: $\Top\Top^\top\neq\Top^\top\Top$ because the feed-forward coupling makes the belief--expression--perception loop directional (the superdiagonal transfer has no symmetric counterpart). For a non-normal contraction the one-step norm already exceeds the spectral radius, $\lVert\Top\rVert_2>\spr(\Top)$; taking $d$ small makes $\spr(\Top)$ arbitrarily small while $\lVert\Top\rVert_2$ remains bounded below by the off-diagonal transfer weight $w$, so $\Gamma\ge\lVert\Top\rVert_2>1$ with the gap between $\Gamma$ and $\spr(\Top)$ arbitrarily large. The Kreiss matrix theorem $\Kreiss(\Top)\le\Gamma\le e\,n\,\Kreiss(\Top)$ then certifies the pseudospectral (non-normal) origin of the amplification.
\end{proof}

The \emph{magnitude} of this reactivity, and its growth with hierarchy depth, we report as a numerical observation, with an explicit statement of scope. This is deliberately not labelled a theorem: while the data are clean over the computationally accessible range, we have not established the asymptotic scaling analytically, and---importantly---we find that it does not hold in every limit.

\begin{observation}[Growth of the transient gain with depth; numerical]\label{obs:sqrtN}
Over the accessible range $N\in\{3,\dots,28\}$, with perception rate $\gamma=0.9$ and spectral radius fixed at $\spr(\Top)=0.10$, the transient gain of the feed-forward hierarchy is well fit by
\begin{equation}
\Gamma(N)\approx 1+c\sqrt{N},\qquad c\approx1.0,\qquad R^2=0.9998,
\label{eq:sqrtN}
\end{equation}
a fit that dominates logarithmic ($R^2=0.98$) and linear ($R^2=0.975$) alternatives, and the Kreiss ratio $\Gamma/\Kreiss(\Top)\approx1.65$ is approximately constant across this range (Figure~\ref{fig:scaling}). Two caveats bound the claim. First, the normalized excess $(\Gamma-1)/\sqrt N$ has not stabilized over $N\le28$, so the coefficient $c$ and the precise exponent are empirical, not asymptotic. Second, in the instantaneous-perception limit $\gamma\to1$, where the system reduces to a $2N$-dimensional operator on $(\bvec,\xvec)$ that we can compute to $N=256$, the gain \emph{saturates} at $\Gamma\approx1.37$ rather than growing---so the depth-scaling~\eqref{eq:sqrtN} is a finite-$\gamma$ phenomenon and must not be read as an unbounded asymptotic law. What is robust across all our computations, and is the point that matters for the security argument, is the qualitative conclusion of Proposition~\ref{prop:reactive}: hierarchical structure makes $\Gamma$ substantially exceed $1$ at negligible spectral radius, and $\Gamma$ increases with depth over the range relevant to real institutions.
\end{observation}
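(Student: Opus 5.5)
Since the statement is a numerical observation rather than a theorem, the ``proof'' I would give is a reproducible computational protocol, supplemented by analytic checks that pin down the parts that can be pinned down. First I would fix the family of operators exactly. I would take the feed-forward influence matrix with self-weight $d$ on the diagonal and transfer weight $w$ from each agent to the agent(s) above it, subject to $\bar w<1$. I would use conformity $\alpha<1$ and perception rate $\gamma=0.9$, and assemble the stacked operator $\Top$ of~\eqref{eq:stacked} of dimension $n=N+N(N-1)+N$. For each $N\in\{3,\dots,28\}$ I would tune $d$ by bisection so that $\spr(\Top)=0.10$. Monotonicity of $\spr(\Top)$ in $d$ follows from the near-nilpotent structure used in Proposition~\ref{prop:reactive}, and that structure makes this bisection well posed.

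Next I would compute the two quantities. The transient gain $\Gamma=\sup_k\lVert\Top^k\rVert_2$ is obtained by iterating powers until $\lVert\Top^k\rVert_2$ falls below $1$ and stays there; this happens in finitely many steps because $\lVert\Top^k\rVert_2\le C\,\spr(\Top)^k k^{n}$, so the supremum is a finite maximum. The Kreiss constant $\Kreiss(\Top)$ comes from a two-dimensional search over $\zeta=re^{i\theta}$ with $r>1$, using the smallest singular value of $\zeta\Imat-\Top$. I would refine the grid near $r=1$ and near the maximizing angle, and cross-check against the bracket $\Kreiss\le\Gamma\le e\,n\,\Kreiss$. The ratio $\Gamma/\Kreiss$ is then reported directly.

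On the resulting data I would fit three models by least squares: $1+c\sqrt N$, $a+b\log N$, and $a+bN$. I would report $R^2$ for each, and also tabulate $(\Gamma-1)/\sqrt N$ to document honestly that it has not stabilized.

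For the second caveat I would use the analytic reduction. As $\gamma\to1$ the perception block satisfies $\Mmat^{t+1}=\mathbf S\xvec^t$, so $\Mmat$ can be eliminated and the dynamics become a $2N$-dimensional operator on $(\bvec,\xvec)$. This operator is cheap enough to evaluate up to $N=256$, and there I would exhibit the plateau $\Gamma\approx1.37$. To make saturation more than empirical, I would note that in this limit the reduced operator is a block-Toeplitz upper-triangular (bidiagonal-type) matrix. Its powers are then controlled uniformly in $N$ by the symbol on the unit circle, giving $\sup_k\lVert\Top^k\rVert_2\le\sup_k\sup_{\lvert z\rvert=1}\lVert\hat T(z)^k\rVert$. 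This would explain why the depth-growth is a finite-$\gamma$ effect: the lagged perception block adds $N(N-1)$ coordinates, and these produce longer effective chains.

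The main obstacle is numerical reliability of $\Gamma$ and $\Kreiss$ for highly non-normal matrices of dimension up to about $800$. Rounding in the powers and in resolvents near the pseudospectral bulge can distort both. I would address this first by recomputing in higher precision for a subset of $N$, and by confirming $\Gamma$ through an independent route: the largest singular value of $\Top^k$ computed from a Schur form. Agreement between the two routes would be the acceptance criterion for each reported value.
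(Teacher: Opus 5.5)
Your computational pipeline matches what the Observation reports: powers of $\Top$ for $\Gamma$, a resolvent search for $\Kreiss(\Top)$, three least-squares fits, and a reduced $(\bvec,\xvec)$ operator for the $\gamma\to1$ caveat. However, the calibration step fails as stated.

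\textbf{The calibration step.} In the feed-forward hierarchy, each coordinate at $t+1$ depends only on itself and on coordinates of higher-ranked agents. Specifically, $b_i$ and $x_i$ depend on $b_i$ and on $M_{ij}$ with $j$ above $i$, and $M_{kj}$ depends on $M_{kj}$ and $x_j$. Ordering the state as $(b_1,x_1),\{M_{k1}\}_k,(b_2,x_2),\{M_{k2}\}_k,\dots$ therefore makes $\Top$ lower triangular. Its diagonal is $d$ on beliefs, $1-\gamma$ on all $N(N-1)$ perceptions, and $0$ on expressions, because the $(3,3)$ block of~\eqref{eq:stacked} vanishes (take the top agent's empty climate term to be zero). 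Hence $\spr(\Top)=\max(\lvert d\rvert,1-\gamma)$, independent of $N$, $w$ and $\alpha$. At $\gamma=0.9$ the target $0.10$ is simply $1-\gamma$, and it is attained for every $\lvert d\rvert\le 0.1$. So there is no sign change to bisect on and nothing to retune per $N$. Your claim that the near-nilpotent structure lets $d$ control $\spr(\Top)$ holds only as $\gamma\to1$, which is the regime of Proposition~\ref{prop:reactive}. You must fix $d$, $w$, $\alpha$ as separate, reported inputs, since the fitted $c$ and the ratio $\Gamma/\Kreiss$ will in general depend on them. The real obstacle is model specification, not floating-point reliability.

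\textbf{The mechanism behind the second caveat.} The growth at finite $\gamma$ comes from broadcast width, not from longer effective chains. The expression columns of $\Top$ have zero belief and expression blocks, so $\Top\mathbf e_{x_j}=\gamma\sum_{i\neq j}\mathbf e_{M_{ij}}$. This gives the exact bound $\Gamma\ge\lVert\Top\rVert_2\ge\gamma\sqrt{N-1}$ for every influence matrix, hierarchical or not. That bound already produces the $\sqrt N$ shape, with a coefficient of order $\gamma$. It is precisely this factor that is lost when $\Mmat$ is eliminated. If $\Mmat$ is kept at $\gamma=1$, the same bound reads $\sqrt{N-1}$. So the contrast with the reduced operator reflects, at least in part, which coordinates the norm counts rather than perception speed, and your write-up should say so.

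\textbf{The symbol bound.} It addresses only the reduced operator, and even there it needs more than you state.
\begin{itemize}
\item It requires a chain hierarchy with uniform weights. A pyramid in which agent $i$ averages over all superiors gives a triangular but non-Toeplitz operator.
\item It requires a genuinely first-order reduction. With the lag in~\eqref{eq:perception}, setting $\gamma=1$ yields $\bvec^{t+1}=\Dmat_W\bvec^{t}+\bar{\mathbf A}\xvec^{t-1}$, a delay system of first-order dimension $3N$. A $2N$-dimensional operator on $(\bvec,\xvec)$ arises only if perception is taken to be simultaneous.
\end{itemize}
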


The reactivity is thus a genuine and provable phenomenon (Proposition~\ref{prop:reactive}); its precise scaling with depth is an empirical regularity whose asymptotic status we leave open (Observation~\ref{obs:sqrtN}). The constancy of the Kreiss ratio over the studied range confirms that the amplification is of non-normal (pseudospectral) origin rather than a spectral effect in disguise. We flag that the mechanism is the coupled three-layer loop and not the influence chain alone: the raw feed-forward matrix $\Wmat$ has transient gain exactly $1$ (it is a nilpotent contraction with orthogonal-enough structure), and the reactivity appears only once belief, expression, and perception are coupled through~\eqref{eq:stacked}---a point we return to below.

\begin{figure}[t]
\centering
\includegraphics[width=0.62\linewidth]{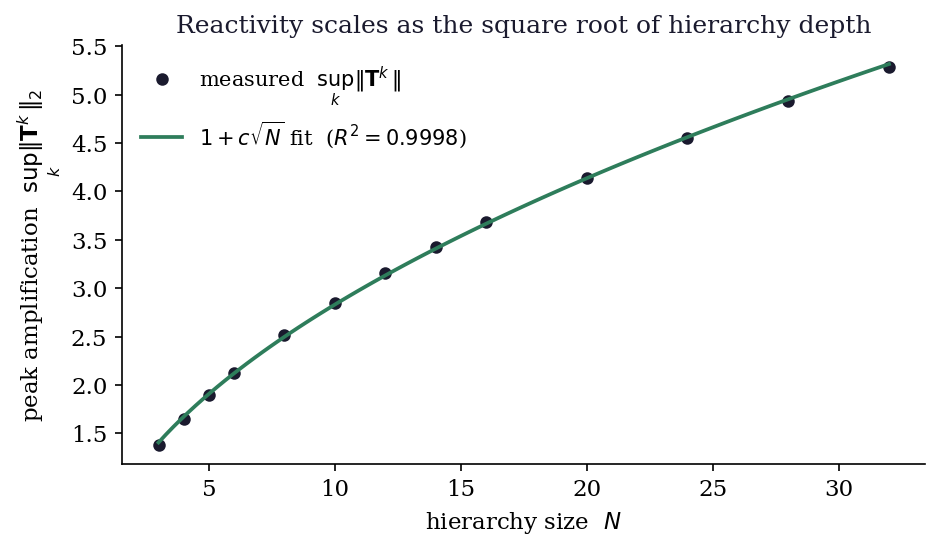}
\caption{Transient gain $\Gamma$ versus hierarchy depth $N$ over the accessible range, with the empirical fit $1+c\sqrt N$ ($R^2=0.9998$; Observation~\ref{obs:sqrtN}), at fixed spectral radius $\spr(\Top)=0.10$. The gain grows with depth over this range---a non-normal effect invisible to eigenvalue analysis---though, as noted in the text, the asymptotic form is not established and the gain saturates in the $\gamma\to1$ limit.}
\label{fig:scaling}
\end{figure}

\subsection{Adversarial exploitation of reactivity}

Reactivity is not a curiosity of the operator norm; it is an attack surface, and it makes Theorem~\ref{thm:transience} locally false as a security guarantee. The spectral certificate says a one-shot injection decays as $\spr(\Top)^k$. But an adversary who computes the leading right singular vector of $\Top^{k^*}$ at the gain-maximizing horizon $k^*$ obtains an injection direction that loads the perception and expression layers while placing almost nothing directly in the belief layer---and this injection transiently drives belief distortion far beyond what the spectrum permits. In a depth-$16$ hierarchy, the optimal injection produces a peak belief response $4.7$ times that of a random injection of equal magnitude, at a horizon $k^*$ where the spectral prediction says the perturbation should already have decayed (Figure~\ref{fig:reacttraj}). Because the effect recurs at the reactivity timescale, a patient adversary who repeats nominally ``transient'' attacks at intervals of $k^*$ sustains a distortion that the spectral theory declares impossible.

\begin{figure}[t]
\centering
\includegraphics[width=0.62\linewidth]{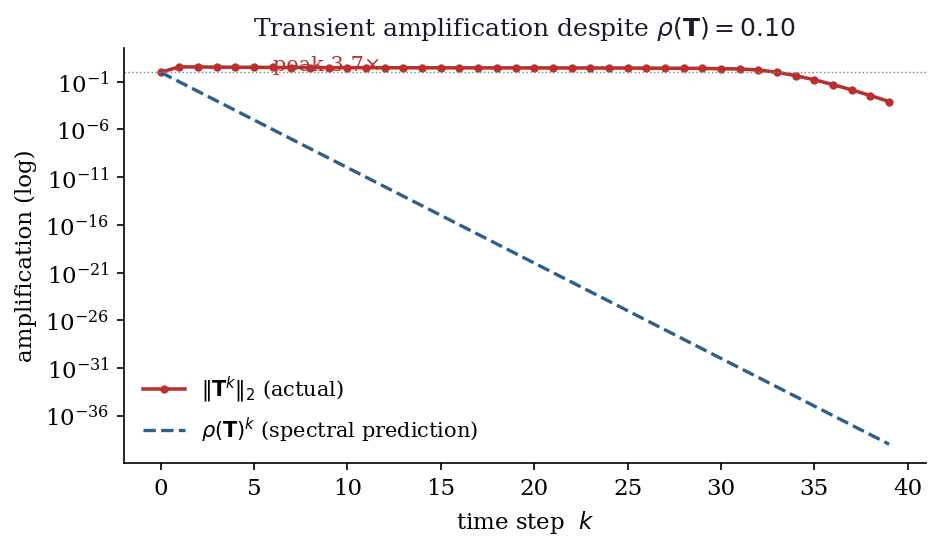}
\caption{Adversarial reactivity. Amplification trajectory $\lVert\Top^k\rVert$ (red) in a depth-$16$ hierarchy with $\spr(\Top)=0.10$, versus the spectral prediction $\spr^k$ (blue). A one-shot injection aligned to the transient drives belief distortion the spectrum forbids; the effect recurs at the reactivity horizon $k^*$, so repetition sustains it.}
\label{fig:reacttraj}
\end{figure}

The design implication reframes the robustness of social knowledge. Spectral gap certifies asymptotic safety but not transient safety, and the two come apart precisely for the directed, hierarchical structures that describe real influence: media pyramids, command chains, follower networks, citation hierarchies. Such structures may be spectrally quiescent---settling reliably, forgetting perturbations in the long run---and yet store transient attack capacity that scales with their depth, waiting to be pumped by an adversary who understands the operator's non-normality. Flat epistemic architectures are, in this precise sense, more robust than hierarchical ones of equal spectral radius: not because they equilibrate to different beliefs, but because they cannot be transiently pumped. To defend social knowledge is, in part, to flatten the hierarchies through which it is transmitted, or to monitor the deep ones at the reactivity timescale rather than trusting the spectral certificate. More broadly, the reactivity result places the security of belief networks within the operator-theoretic framework of non-normal dynamics, where the governing invariants are the pseudospectrum and the Kreiss constant rather than the spectrum, and where robustness is a property of the entire resolvent, not merely of its poles.

\part{Concluding Comments}\label{part:synthesis}

\section{What the theory says about social knowledge}\label{sec:implications}

We set out to give collective belief a formal object of study located between the individual and the aggregate---the structure of mutual attribution---and to derive from it a unified account of how social knowledge is formed, how it goes wrong, and how it can be defended. It is worth drawing the threads together, because the parts of the theory illuminate one another, and their conjunction says something about social epistemology that no part says alone.

The first lesson is that \emph{the unit of collective epistemology is the attribution, not the belief}. What moves through a group, what its members act on, what stabilizes into consensus or curdles into pluralistic ignorance, is not the distribution of private convictions but the tensor of who takes whom to believe what. The private belief is a special, diagonal case of the attribution; the interactive structure is primary. This inverts the usual order of explanation, on which group phenomena are to be built up from individual beliefs, and it is what allows the framework to represent states---false consensus that no one holds, corroboration with no source, expressed climates detached from private conviction---that a belief-first ontology cannot even name. The plural-subject and collective-intentionality traditions \citep{gilbert1989,tuomela2013,bratman2014} were right to insist that group belief is not the summation of individual belief; the present theory says what it is instead, and gives the constitution a mechanism.

The second lesson is that \emph{collective self-deception has an exact anatomy}. Misperception of others decomposes, cleanly and provably, into a perception-lag component that observation dissolves and a concealment component that observation cannot touch (Corollary~\ref{cor:channels}), and pluralistic ignorance is precisely the residue of the second after the first has closed (Theorem~\ref{thm:closure}). This yields the framework's sharpest and most counterintuitive claim: conformity \emph{amplifies but cannot generate} collective distortion. A population of sincere agents, however conformist, converges to no self-deception; distortion requires a bias in expression, a systematic reason to say other than one believes, and conformity then magnifies that seed without bound. The policy corollary is that the treatment for pluralistic ignorance is never more information and never more visibility---those close the channel that was already closing, and can feed the one that sustains the distortion (Corollary~\ref{cor:transparency})---but always a reduction in the cost of sincere dissent. The secret ballot, protected disagreement, and anonymity are not merely liberal niceties; they are, in this theory, the \emph{only} interventions that touch the equilibrium.

The third lesson is that \emph{the forms of collective epistemic life are continuously connected}. Consensus, polarization, entrenched distortion, and perpetual instability are not four sociologies but four regions of one parameter space (Proposition~\ref{prop:limits}), reached by turning conformity, the sign structure of trust, and the coherence of influence. A group can be moved among them by circumstance or by design, and small changes in the parameters of social-epistemic life---a rise in conformity, a souring of trust into antagonism, a strengthening of inferential links between issues---can carry a population across a qualitative boundary. This is at once a unification and a warning: the same machinery that produces healthy agreement produces, at nearby parameter settings, self-deception and division.

The fourth lesson is that \emph{manipulation is layered, and its remedies are not interchangeable}. Attacks on social knowledge are individuated by the epistemic layer they target---content, climate, or medium---and the layers differ in the durability of the distortions they admit and the remedies that repair them (Part~\ref{part:adversarial}). Attacks on states heal; attacks on the medium sediment into architecture and are impervious to disclosure; attacks on the expression environment produce ambient equilibria that transparency entrenches. The efficient attack shifts from content to climate as a population's conformity rises, so that the most conformist institutions are the most exposed to exactly the second-order manipulation their instincts are least equipped to counter (Corollary~\ref{cor:orgvuln}). And the single most common defensive error---answering every distortion with exposure---is correct for only one of the three families and counterproductive for another. A theory of the defense of social knowledge must begin by diagnosing the layer under attack.

The fifth lesson, from Part~\ref{part:physics}, is that \emph{spectral safety is not transient safety}. Hierarchical epistemic structures, which describe most real institutions of knowledge transmission, are non-normal and reactive (Proposition~\ref{prop:reactive}): they store transient attack capacity, invisible to the eigenvalue analysis that would certify them safe, and this capacity grows with hierarchy depth over the range we can compute (Observation~\ref{obs:sqrtN}). The robustness of social knowledge is a property of the whole resolvent, not of its poles, and the deep hierarchies through which knowledge usually flows are precisely the structures in which the gap between spectral and transient safety is largest.

\section{Limits of the linear theory}\label{sec:limits}

The results of this paper are theorems about a linear model, and intellectual honesty requires marking what that buys and what it costs. The linearity is not incidental; it is what makes the entire spectral, resolvent, and pseudospectral apparatus available, and it is what lets us prove rather than simulate. But it has three consequences that bound the theory's scope, each of which we have flagged locally and collect here.

First, \emph{superposition is why one-shot attacks wash out}. In a linear system perturbations decay independently of the state, so Theorem~\ref{thm:transience} holds globally. In a nonlinear system with thresholds, bounded beliefs, or cascade dynamics, a sufficiently large transient---amplified, perhaps, by the very reactivity of Part~\ref{part:physics}---can tip the system into a different basin of attraction, and the transience guarantee would hold only locally. The reactivity result should be read as identifying \emph{where} such tipping is most easily induced: the high-gain directions of a hierarchical operator are exactly where a bounded-belief nonlinearity would first be driven across a separatrix. A nonlinear successor theory would convert the transient excursions we prove into finite-amplitude regime changes, and the linear reactivity law would set the threshold.

Second, \emph{the divergences we prove are resolvent poles, not critical points}. The unbounded amplification of pluralistic ignorance as $\alpha\to1$ (Theorem~\ref{thm:closure}(d)) and of the echo multiplier as $\spr(\Bten)\to1$ (Theorem~\ref{thm:echo}) are poles of resolvents $(\Imat-\Kop)^{-1}$ and $(\Imat-\Bten)^{-1}$: they carry no diverging correlation length, no order parameter, no scaling exponent in the sense of statistical physics. They are the linear model's way of marking where its own idealization breaks down---where beliefs, if bounded, would saturate rather than diverge, and where nonlinearity must take over. We have been careful throughout to describe them as poles and not as phase transitions, and the reader should not read critical phenomena into them. (The reactivity of Part~\ref{part:physics} is a different kind of result---a property of the pseudospectrum rather than a resolvent pole. We are correspondingly careful there to separate what is proved, namely that hierarchical operators are non-normal and reactive at arbitrarily small spectral radius (Proposition~\ref{prop:reactive}), from what is only observed numerically, namely the approximate $\sqrt N$ growth of the transient gain over the accessible range, whose asymptotic form we explicitly leave open (Observation~\ref{obs:sqrtN}).)

Third, \emph{the theory prices attacks but does not model the attacker}. We have computed the cost, reach, and durability of manipulations, and read off the defenses, but we have treated the adversary as exogenous---a chooser of parameters and injections, not a strategic agent with objectives, beliefs, and a budget who anticipates the defender's response. The natural successor is a game: a defending designer and an optimizing adversary contesting the parameters of the epistemic network---conformity, the sign structure of trust, the honesty of the medium, the spectral slack, the Kreiss constant---each anticipating the other. The results of Part~\ref{part:adversarial} would then become the payoff structure of that game, and the defenses of Section~\ref{sec:defense} its equilibrium strategies. We regard this as the most important direction the framework opens.

Fourth, and most consequentially for a social epistemology, \emph{the theory is a dynamics of belief given a fixed epistemic-institutional scaffold, and it brackets the formation of that scaffold}. Three primitives are taken as exogenous: the proposition set $Q$ over which beliefs are held; the influence matrix $\Wmat$, whose sign structure and weights encode who attends to and trusts whom; and the inference operator $\Lop$ of Part~\ref{part:coupling}, which encodes the asserted entailments among propositions. In the framework these are inputs, and the theory says how belief moves through them. But where they come from is itself a social-epistemic question of the first importance, and it is the question the computational and institutional sociology of knowledge \citep[in the spirit of][]{evans2016} places at its center. That literature asks about the division of cognitive labor that determines who can know what, the organization of inquiry that fixes which propositions are even available to be believed, the path-dependence of attention and citation that shapes $\Wmat$, and the authority relations that decide whose entailments enter $\Lop$. A population's epistemic fate in our theory is largely set by these primitives: the spectral radius of $\Wmat$, the sign structure that decides consensus versus polarization, the conformity that governs pluralistic ignorance, the coupling in $\Lop$ that can destabilize. Yet the theory is silent on their genesis. We regard this not as a hidden defect but as a clean division of labor. The present paper contributes the dynamics on a given scaffold, exactly and tractably. The endogenous formation of the scaffold---how $Q$, $\Wmat$, and $\Lop$ are produced and reproduced by the institutions of knowledge---is a complementary inquiry. The natural point of contact between the two is the co-evolution of belief dynamics with the epistemic structure through which they run. An institutionally structured, endogenously evolving $\Lop$, in which asserted entailments are generated and contested by the organization of inquiry rather than fixed in advance, is the most direct bridge we see between the operator theory developed here and the content-and-institution focus of that literature.

Finally, the cardinal, real-valued representation of degrees of belief, and Assumption~\ref{ass:composition}'s projection of the interactive hierarchy onto composed point-attributions, are idealizations whose boundaries we have marked where they do work (Sections~\ref{sec:tensor} and~\ref{sec:higher}). The framework is offered not as the final word on the structure of interactive belief but as a demonstration that the structure is tractable---that the recursion of social cognition, so often treated as a source of irreducible complexity, yields to explicit analysis once the right object, the attribution tensor, is made the unit of study.

\section{Conclusion}\label{sec:conclusion}

What a group knows is not what its members know, nor the sum or average of what they know. It is a configuration of mutual attribution: of what each takes each to believe, express, and perceive. Its characteristic conditions, from shared understanding to collective self-deception, are configurations of that structure. We have argued that this configuration is the right unit of social-epistemic analysis, and we have represented it as a tensor. On that representation the life of collective belief becomes a tractable dynamical system. Its stable forms are spectral regimes of one operator. Its collective self-deceptions are the residue of a concealment channel that conformity amplifies but cannot create. Its higher-order structure is walks in a network. Its manipulations are layered attacks with distinct costs, durabilities, and remedies. Its deepest fragility is a transient, non-normal amplification that hierarchical structure stores and eigenvalue analysis cannot see. Each of these is a theorem in a system small enough to compute by hand and rich enough to represent the phenomena the study of social knowledge is concerned with: bubbles, echo chambers, pluralistic ignorance, influence operations. The framework does not make collective belief simple, or manipulation impossible. It makes the structure of the one, and the costs and remedies of the other, explicit and calculable. That is where a rigorous social epistemology of interactive belief can begin.

\appendix
\section{Proofs of the classical regime results}\label{app:proofs}

For completeness we prove the four regime results (Propositions~\ref{prop:consensus}--\ref{prop:oscillation}) that we restate from the classical literature within the interactive-belief formalism. These are not original; they are included so that the paper is self-contained and so that the objects on which the closure theorem (Theorem~\ref{thm:closure}) operates are fully specified.

\begin{proof}[Proof of Proposition~\ref{prop:consensus} (Consensus)]
By Perron--Frobenius, a primitive row-stochastic $\Wmat$ has a simple eigenvalue $1$ with right eigenvector $\one$ and positive left eigenvector $\pi$ (normalized to $\pi^\top\one=1$), all other eigenvalues strictly inside the unit disc. Let $\mathbf P=\one\pi^\top$; then $\mathbf P^2=\mathbf P$ and $\Wmat\mathbf P=\mathbf P\Wmat=\mathbf P$. Set $\mathbf R=\Wmat-\mathbf P$, so $\mathbf P\mathbf R=\mathbf R\mathbf P=\mathbf 0$ and $\Wmat^t=\mathbf P+\mathbf R^t$. The spectrum of $\mathbf R$ is that of $\Wmat$ with the Perron eigenvalue replaced by $0$, so $\spr(\mathbf R)<1$ and $\mathbf R^t\to\mathbf 0$. Hence $\bvec^t=\Wmat^t\bvec^0\to\mathbf P\bvec^0=\one\pi^\top\bvec^0=(\pi^\top\bvec^0)\one$.
\end{proof}

\begin{proof}[Proof of Proposition~\ref{prop:polarization} (Polarization)]
Since $\Dmat^2=\Imat$, we have $\Wmat^t=(\Dmat\Wmat_+\Dmat)^t=\Dmat\Wmat_+^t\Dmat$. By Proposition~\ref{prop:consensus}, $\Wmat_+^t\to\one\pi^\top$, so $\Wmat^t\to\Dmat\one\pi^\top\Dmat$ and $\bvec^t\to\Dmat\one(\pi^\top\Dmat\bvec^0)$. The vector $\Dmat\one$ has entries $\pm1$ according to camp, so the limiting beliefs are $\pm(\pi^\top\Dmat\bvec^0)$: equal magnitude, opposite sign by camp.
\end{proof}

\begin{proof}[Proof of Proposition~\ref{prop:distortion} (Stationary distortion)]
Since $\spr(\Wmat)<1$, $1$ is not an eigenvalue, so $\Imat-\Wmat$ is invertible and the Neumann series $\sum_{m\ge0}\Wmat^m$ converges to $(\Imat-\Wmat)^{-1}$. For constant forcing $\hvec^*$ the fixed point solves $\bvec^*=\Wmat\bvec^*+\hvec^*$, i.e.\ $\bvec^*=(\Imat-\Wmat)^{-1}\hvec^*$. Writing $\mathbf u^t=\bvec^t-\bvec^*$ gives $\mathbf u^{t+1}=\Wmat\mathbf u^t$, so $\mathbf u^t=\Wmat^t\mathbf u^0\to\mathbf 0$; a forcing converging to $\hvec^*$ adds a vanishing perturbation that does not affect the limit.
\end{proof}

\begin{proof}[Proof of Proposition~\ref{prop:oscillation} (Oscillation)]
Let $\lambda=e^{i\theta}$ with $\theta\neq0$ be an eigenvalue of unit modulus, with eigenvector $\mathbf w$. Decompose $\bvec^0$ over a basis including $\mathbf w$; the component along $\mathbf w$ evolves as $\lambda^t$. Since $\lvert\lambda^t\rvert=1$ the component neither grows nor decays, and since $\lambda^t=e^{i\theta t}$ with $\theta\neq0$ its phase rotates without limit, so $\lambda^t$ has no limit as $t\to\infty$. For any initial condition with nonzero projection onto $\mathbf w$---a generic condition---$\bvec^t$ therefore does not converge. (If $\lambda$ lies in a nontrivial Jordan block the component grows polynomially and again fails to converge.)
\end{proof}

\end{document}